\def\BState{\State\hskip-\ALG@thistlm}
\theoremstyle{definition}
\newtheorem{defn}{Definition}[]
\title{[Extended version] Rethinking Deep Neural Network Ownership Verification: Embedding Passports to Defeat Ambiguity Attacks}
\author{%
  Lixin Fan$^1$\,\,\,\,Kam Woh Ng$^2$\,\,\,\,Chee Seng Chan$^2$ %\thanks{Use footnote for providing further information    about author (webpage, alternative address)---\emph{not} for acknowledging    funding agencies.}  \\
  \\ $^1$WeBank AI Lab, Shenzhen, China \\
  $^2$Center of Image and Signal Processing, Faculty of Comp. Sci. and Info., Tech. \\ University of Malaya, Kuala Lumpur, Malaysia
   \\
  \texttt{\{lixinfan@webank.com;kamwoh@siswa.um.edu.my;cs.chan@um.edu.my\}} \\
  % \And
   %Lixin Fan \\
   %Affiliation \\
  % Address \\
  % \texttt{email} \\
 %  \AND
  % Chee Seng Chan \\
  % Affiliation \\
  % Address \\
  % \texttt{email} \\
  % \And
  % Coauthor \\
  % Affiliation \\
  % Address \\
  % \texttt{email} \\
  % \And
  % Coauthor \\
  % Affiliation \\
  % Address \\
  % \texttt{email} \\
}
\begin{document}

\maketitle

\begin{abstract}
With substantial amount of time, resources and human (team) efforts invested to explore and develop successful deep neural networks (DNN), there emerges an urgent need to protect these inventions from being illegally copied, redistributed, or abused without respecting the intellectual properties of legitimate owners. Following recent progresses along this line, we investigate a number of watermark-based DNN ownership verification methods in the face of ambiguity attacks, which aim to cast doubts on the ownership verification by forging counterfeit watermarks. It is shown that ambiguity attacks pose serious threats to existing DNN watermarking methods. As remedies to the above-mentioned loophole, this paper proposes novel \textit{passport}-based DNN ownership verification schemes which are both \textit{robust to network modifications} and \textit{resilient to ambiguity attacks}. The gist of embedding digital passports is to design and train DNN models in a way such that, the DNN inference performance of an original task will be significantly \textit{deteriorated due to forged passports}. In other words, genuine passports are not only verified by looking for the predefined signatures, but also reasserted by the \textit{unyielding DNN model inference performances}. Extensive experimental results justify the effectiveness of the proposed passport-based DNN ownership verification schemes. Code and models are available at \url{https://github.com/kamwoh/DeepIPR} 
\end{abstract}

%%%%Introduction%%%%%%%%%%%%%%%%%%%%%%%%%%%%%%%%%%%%%%%%%%%%%%%%%%%%%%%%%%%%%%%%%%%%%%%%%%%%%%%%%%%%%%%%%%
\section{Introduction}
\label{sect:intro}

With the rapid development of deep neural networks (DNN), Machine Learning as a Service (MLaaS) has emerged as a viable and lucrative business model. However, building a successful DNN is not a trivial task, which usually requires substantial investments on expertise, time and resources. As a result of this, there is an urgent need to protect invented DNN models from being illegally copied, redistributed or abused (i.e. intellectual property infringement). Recently, for instance, digital \textit{watermarking} techniques have been adopted to provide such a protection, by embedding watermarks into DNN models during the training stage. Subsequently, ownerships of these inventions are verified by the detection of the embedded watermarks, which are supposed to be robust to multiple types of modifications such as model fine-tuning, model pruning and watermark overwriting \cite{EmbedWMDNN_2017arXiv,TurnWeakStrength_Adi2018arXiv,DeepMarks_2018arXiv,ProtectIPDNN_Zhang2018}.

In terms of deep learning methods to embed watermarks, existing approaches can be broadly categorized into two schools: a) the \textit{feature-based} methods that embed designated watermarks into the DNN weights by imposing additional regularization terms \cite{EmbedWMDNN_2017arXiv,DeepMarks_2018arXiv,DeepSigns_2018arXiv}; and b) the \textit{trigger-set} based methods that rely on adversarial training samples with specific labels (i.e. backdoor trigger sets) \cite{TurnWeakStrength_Adi2018arXiv,ProtectIPDNN_Zhang2018}. Watermarks embedded with either of these methods have successfully demonstrated robustness against \textit{removal attacks} which involve modifications of the DNN weights such as \textit{fine-tuning} or \textit{pruning}. However, our studies disclose the existence and effectiveness of \textit{ambiguity attacks} which aim to cast doubt on the ownership verification by \textit{forging additional watermarks} for DNN models in question (see Fig. \ref{fig:ovs-cases}). We also show that \textit{it is always possible to reverse-engineer forged watermarks at minor computational cost} where the original training dataset is also not needed (Sect. \ref{sect:ambAtt}).  

As remedies to the above-mentioned loophole, this paper proposes a novel \textit{passport}-based approach. There is a unique advantage of the proposed passports over traditional watermarks - i.e. the inference performance of a pre-trained DNN model will either \textit{remain intact} given the presence of valid passports, or be \textit{significantly deteriorated} due to either the modified or forged passports. In other words, we propose to \textit{modulate the inference performances} of the DNN model depending on the presented passports, and by doing so, one can develop ownership verification schemes that are both \textit{robust to removal attacks} and \textit{resilient to ambiguity attacks} at once (Sect. \ref{sect:EmbedPassport}). 
 
The contributions of our work are threefold: i) we put forth a general formulation of DNN ownership verification schemes and, empirically, we show that existing DNN watermarking methods are vulnerable to ambiguity attacks; ii) we propose novel passport-based verification schemes and demonstrate with extensive experiment results that these schemes successfully defeat ambiguity attacks;  iii) methodology-wise, the proposed modulation of DNN inference performance based on the presented passports (Eq. \ref{eq:dpp-formula}) plays an indispensable role in bringing the DNN model behaviours under control against adversarial attacks. 

\begin{figure*}[t]
		\centering	
		\hfill	
		 \begin{subfigure}{0.3\linewidth}{\includegraphics[keepaspectratio=true, scale = 0.23]{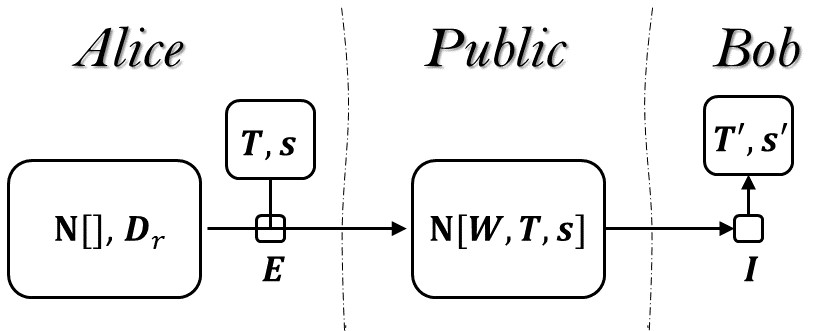}\caption{Application Scenario.}}  \end{subfigure} \hfill
		 \begin{subfigure}{0.3\linewidth}{\includegraphics[keepaspectratio=true, scale = 0.2]{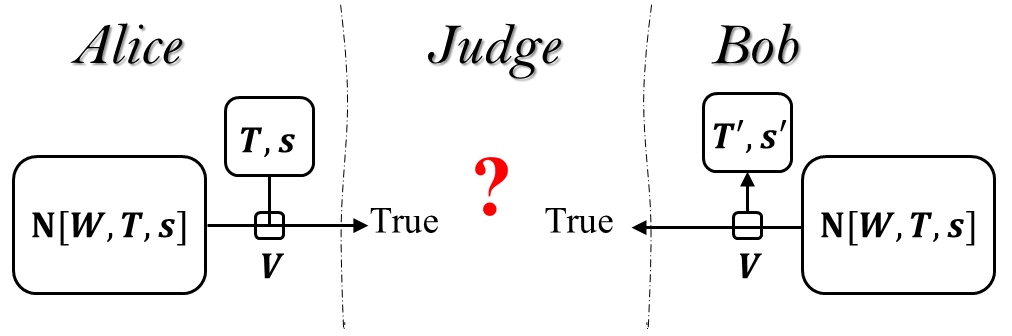}\caption{Present Solution.}}  \end{subfigure} \hfill
		 \begin{subfigure}{0.3\linewidth}{\includegraphics[keepaspectratio=true, scale = 0.2]{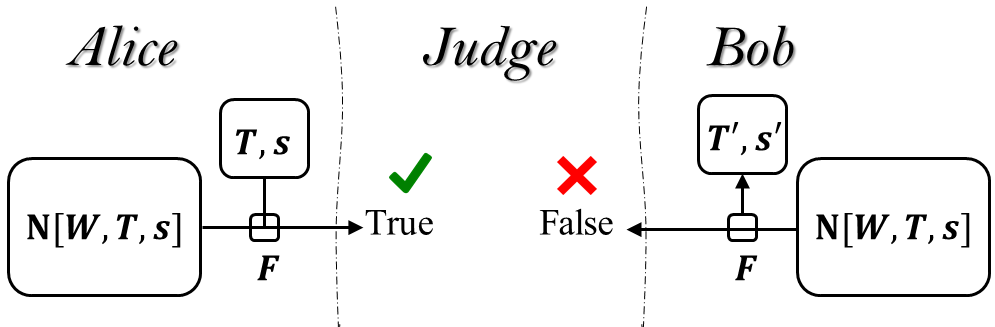}\caption{Proposed Solution.}}  \end{subfigure} 
		%\frame{\includegraphics[keepaspectratio=true, scale = 0.17]{Slide2-c.png}} \hfill
		%\frame{\includegraphics[keepaspectratio=true, scale = 0.17]{Slide3-c.png}}
	\caption{DNN model ownership verification in the face of ambiguity attacks.
		\textbf{(a)}: Owner \textit{Alice} uses an embedding process $E$ to train a DNN model with watermarks ($\mathbf{T,s}$) and releases the model publicly available; Attacker \textit{Bob} forges counterfeit watermarks ($\mathbf{T',s'}$) with an invert process $I$; 
		\textbf{(b)}:  The ownership is in doubt since both the original and forged watermarks are detected by the verification process $V$ (Sect. \ref{sect:ambAtt_exp}); 
		\textbf{(c)}: The ambiguity is resolved when our proposed passports are embedded and the network performances are evaluated in favour of the original passport by the fidelity evaluation process $F$ (See Definition \ref{def:ovs} and Sect. \ref{sect:ownByPass}).}
	\label{fig:ovs-cases}
%	\vspace{-18pt}
\end{figure*}

%%%%Related Work%%%%%%%%%%%%%%%%%%%%%%%%%%%%%%%%%%%%%%%%%%%%%%%%%%%%%%%%%%%%%%%%%%%%%%%%%%%%%%%%%%%%%%%%%%
\subsection{Related work} \label{sect:related}
Uchida et. al \cite{EmbedWMDNN_2017arXiv} was probably the first work that proposed to embed watermarks into DNN models by imposing an additional \textit{regularization term} on the weights parameters. \cite{TurnWeakStrength_Adi2018arXiv,AdStitch_2017arXiv} proposed to embed watermarks in the classification labels of adversarial examples in a \textit{trigger set}, so that the watermarks can be extracted remotely through a service API without the need to access the network  weights (i.e. black-box setting). Also in both black-box and white box settings, \cite{DeepMarks_2018arXiv,DeepSigns_2018arXiv,8587745} demonstrated how to embed  watermarks (or fingerprints) that are robust to various types of attacks. In particular, it was shown that embedded watermarks are in general robust to \textit{removal attacks} that modify network weights via fine-tuning or pruning. Watermark overwriting, on the other hand, is more problematic since it aims to simultaneously embed a new watermark and destroy the existing one.  Although \cite{DeepSigns_2018arXiv} demonstrated robustness against overwriting attack, it did not resolve the ambiguity resulted from the counterfeit watermark. Adi et al. \cite{TurnWeakStrength_Adi2018arXiv} also discussed how to deal with an adversary who fine-tuned  an already watermarked networks with new trigger set images.  Nevertheless, \cite{TurnWeakStrength_Adi2018arXiv} required the new set of images to be distinguishable from the true trigger set images. This requirement is however often unfulfilled in practice, and our experiment results show that none of existing watermarking methods are able to deal with ambiguity attacks explored in this paper (see Sect. \ref{sect:ambAtt}).

In the context of digital image watermarking, \cite{ZeroKnowAmbAtt06,CombAmbAtt07} have studied \textit{ambiguity attacks} that aim to create an ambiguous situation in which a watermark is reverse-engineered from an already watermarked image, by taking advantage of the invertibility of forged watermarks \cite{RevolveInvisible98}. It was argued that \textit{robust watermarks do not necessarily imply the ability to establish ownership}, unless \textit{non-invertible watermarking} schemes are employed (see Proposition \ref{thm:2} for our proposed solution).  

%%%%Rethink%%%%%%%%%%%%%%%%%%%%%%%%%%%%%%%%%%%%%%%%%%%%%%%%%%%%%%%%%%%%%%%%%%%%%%%%%%%%%%%%%%%%%%%%%%
\section{Rethinking Deep Neural Network Ownership Verification}
\label{sect:ambAtt}

This section analyses and generalizes existing DNN watermarking methods in the face of ambiguity attacks. We must emphasize that the analysis mainly focuses on three aspects i.e. \textit{fidelity, robustness} and \textit{invertibility} of the ownership verification schemes, and we refer readers to representative previous work \cite{EmbedWMDNN_2017arXiv,TurnWeakStrength_Adi2018arXiv,DeepMarks_2018arXiv,ProtectIPDNN_Zhang2018} for formulations and other desired features of the entire watermark-based intellectual property (IP) protection schemes, which are out of the scope of this paper.

\subsection{Reformulation of DNN ownership verification schemes} 

Figure \ref{fig:ovs-cases} summarizes the application scenarios of DNN model ownership verifications provided by the watermark based schemes. 
Inspired by \cite{RevolveInvisible98}, we also illustrate an ambiguous situation in which rightful ownerships cannot be uniquely resolved by the current watermarking schemes alone. 
This loophole is largely due to an intrinsic weakness of the watermark-based methods i.e. \textit{invertibility}. Formally, the definition of {DNN model ownership verification schemes} is generalized as follows. 

%\newpage
\begin{defn}%{itemize}
\label{def:ovs}
	
A DNN model ownership verification scheme is a tuple $\mathcal{V} = (E,F,V, I)$ of processes: 
%\begin{itemize}
\begin{enumerate}[I)]

	\item An \textit{embedding} process $E\big( \mathbf{D}_r, \mathbf{T, s}, \mathbb{N}[\cdot], L  \big) = \mathbb{N}[\mathbf{W,T,s}]$, is a DNN learning process that takes \textit{training data} $\mathbf{D}_r=\{\mathbf{X}_r, \mathbf{y}_r\}$ as inputs, and optionally, either trigger set data $\mathbf{T}=\{\mathbf{X}_T,\mathbf{y}_T\}$ or signature $\mathbf{s}$, and outputs the model $\mathbb{N}[\mathbf{W,T,s}]$ by minimizing a given loss $L$.  
	
	\textit{Remark}: the DNN architectures are pre-determined by $\mathbb{N}[\cdot]$ and, after the DNN weights $\mathbf{W}$ are learned, either the trigger set $\mathbf{T}$ or signatures $\mathbf{s}$ will be embedded and can be verified by the verification process defined next\footnote{Learning hyper-parameters such as learning rate and the type of optimization methods are considered irrelevant to ownership verifications, and thus they are not included in the formulation.}. 

	\item A \textit{fidelity evaluation} process $F\big( \mathbb{N}[\mathbf{W},\cdot,\cdot], \mathbf{D}_t, \mathcal{M}_t, \epsilon_f \big)=\{\text{\textit{True, False}}\}$ is to evaluate whether or not the discrepancy is less than a predefined threshold i.e. $|\mathcal{M}(\mathbb{N}[\mathbf{W},\cdot,\cdot], \mathbf{D}_t) -  \mathcal{M}_t| \leq \epsilon_f$, in which $\mathcal{M}(\mathbb{N}[\mathbf{W},\cdot,\cdot], \mathbf{D}_t)$ is the DNN inference performance tested against a set of \textit{test data} $\mathbf{D}_t$ where $\mathcal{M}_t$ is the target inference performance. 

	\textit{Remark}: it is often expected that a well-behaved embedding process will not introduce a significant inference performance change that is greater than a predefined threshold $\epsilon_f$. Nevertheless, this fidelity condition remains to be verified for DNN models under either removal attacks or ambiguity attacks. 

	\item A \textit{verification} process $V( \mathbb{N}[\mathbf{W},\cdot,\cdot], \mathbf{T,s}, \epsilon_s ) = \{\text{\textit{True, False}}\}$ checks whether or not the expected signature $\mathbf{s}$ or trigger set $\mathbf{T}$ is successfully verified for a given DNN model $\mathbb{N}[\mathbf{W},\cdot,\cdot]$. 
	
	\textit{Remark}: for feature-based schemes, $V$ involves the detection of embedded signatures $\mathbf{s}=\{\mathbf{P,B}\}$ with a false detection rate that is lesser than a predefined threshold $\epsilon_s$. Specifically, the detection boils down to measure the distances  $D_f( f_e(\mathbf{W,P}), \mathbf{B})$ between target feature $\mathbf{B}$ and features extracted by a transformation function $f_e(\mathbf{W,P})$ parameterized by $\mathbf{P}$. 
	
	\textit{Remark}: for trigger-set based schemes, $V$ first invokes a DNN inference process that takes trigger set samples $\mathbf{T}_x$ as inputs, and then it checks whether the prediction $f(\mathbf{W},\mathbf{X}_{T})$ produces the designated labels $\mathbf{T}_y$ with a false detection rate that is lesser than a threshold $\epsilon_s$.  
	
	\item An \textit{invert} process $I( \mathbb{N}[\mathbf{W,T,s}] ) = \mathbb{N}[\mathbf{W,T',s'}]$ exists and constitutes a successful \textit{ambiguity attack}, if 	
	\begin{enumerate} 
		\item a set of new trigger set $\mathbf{T}'$ and/or signature $\mathbf{s}'$ can be reverse-engineered for a given DNN model;
		
		\item the forged $ \mathbf{T',s'}$ can be successfully verified with respect to the given DNN weights  $\mathbf{W}$ i.e. $V( I(\mathbb{N}[\mathbf{W,T,s}] ), \mathbf{T',s'}, \epsilon_s ) = \text{\textit{True}}$;
		
		\item the fidelity evaluation outcome $F\big( \mathbb{N}[\mathbf{W},\cdot,\cdot], \mathbf{D}_t, \mathcal{M}_t, \epsilon_f \big)$ defined in Definition 1.II remains \textit{True}.
		
		\textit{Remark}: this condition plays an indispensable role in designing the non-invertible verification schemes to defeat ambiguity attacks (see Sect. \ref{sect:ownByPass}). 
		 
	\end{enumerate}
	
	\item If at least one invert process exists for a DNN verification scheme $\mathcal{V}$, then the scheme is called an \textit{invertible} scheme and denoted by $\mathcal{V}^I=(E,F,V,I\neq\emptyset)$; otherwise, the scheme is called \textit{non-invertible} and denoted by $\mathcal{V}^{\emptyset}=(E,F,V,\emptyset)$.  
	
\end{enumerate}
%\end{itemize}

\end{defn}

The definition as such is abstract and can be instantiated by concrete implementations of processes and functions. For instance, the following combined loss function (Eq. \ref{eq:comb_loss}) generalizes loss functions adopted by both the feature-based and trigger-set based watermarking methods:
\begin{equation}\label{eq:comb_loss}
L = L_c\big( f(\mathbf{W},\mathbf{X}_{r}), \mathbf{y}_{r} \big) + \lambda^t L_c\big( f(\mathbf{W},\mathbf{X}_{T}), \mathbf{y}_{T} \big) +\lambda^r R( \mathbf{W}, \mathbf{s}  ),
\end{equation}
in which $\lambda^t, \lambda^r$ are the relative weight hyper-parameters, $f(\mathbf{W},\mathbf{X}_{-})$ are the network predictions with inputs $\mathbf{X}_{r}$ or $\mathbf{X}_{T}$. $L_c$ is the loss function like \textit{cross-entropy} that penalizes discrepancies between the predictions and the target labels $\mathbf{y}_{r}$ or  $\mathbf{y}_{T}$. Signature $\mathbf{s}=\{\mathbf{P, B}\}$ consists of passports $\mathbf{P}$ and signature string $\mathbf{B}$. The regularization terms could be either $R = L_c( \sigma(\mathbf{W,P}), \mathbf{B}  ) $ as in \cite{EmbedWMDNN_2017arXiv} or $R = MSE( \mathbf{B} - \mathbf{PW} ) $ as in \cite{DeepMarks_2018arXiv}. 

It must be noted that, for those DNN models that will be used for classification tasks, their inference performance $\mathcal{M}(\mathbb{N}[\mathbf{W},\cdot,\cdot], \mathbf{D}_t) = L_c\big( f(\mathbf{W},\mathbf{X}_{t}), \mathbf{y}_{t} \big)$ tested against a dataset $\mathbf{D}_t = \{ \mathbf{X}_{t}, \mathbf{y}_{t} \}$ is independent of either the embedded signature $\mathbf{s}$ or trigger set $\mathbf{T}$. It is this independence that induces an invertible process for existing watermark-based methods as described next. 

\begin{restatable}[\textit{Invertible process}]{prop}{mainclaim}
	\label{thm:1}
For a DNN ownership verification scheme $\mathcal{V}$ as in Definition \ref{def:ovs}, if the fidelity process $F()$ is independent of either the signature $\mathbf{s}$ or trigger set $\mathbf{T}$, then there always exists an invertible process $I()$ i.e. the scheme is invertible $\mathcal{V}^I=(E,F,V,I\neq\emptyset))$. 
\end{restatable}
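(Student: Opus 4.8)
The plan is to argue \emph{constructively}: granting that the fidelity process $F(\cdot)$ is a function of the weights $\mathbf{W}$ alone (together with the fixed data $\mathbf{D}_t,\mathcal{M}_t,\epsilon_f$) and is blind to $\mathbf{s}$ and $\mathbf{T}$, I will write down an explicit invert process $I$ and verify conditions (a)--(c) of Definition~\ref{def:ovs}.IV. The crucial simplification is that (c) is automatic: the map $I$ leaves $\mathbf{W}$ untouched — it only swaps the bookkeeping $(\mathbf{T},\mathbf{s})$ for a counterfeit $(\mathbf{T}',\mathbf{s}')$ — so $F\big(\mathbb{N}[\mathbf{W},\cdot,\cdot],\mathbf{D}_t,\mathcal{M}_t,\epsilon_f\big)$ returns exactly the Boolean it returned for the released model, namely \textit{True} (a well-behaved $E$ satisfies fidelity). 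The whole task therefore reduces to forging $(\mathbf{T}',\mathbf{s}')$ that pass $V$ using nothing but $\mathbf{W}$.

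I would then split on the flavour of $V$, following the two Remarks under Definition~\ref{def:ovs}.III. For a feature-based $V$, which tests $D_f\big(f_e(\mathbf{W},\mathbf{P}),\mathbf{B}\big)\le\epsilon_s$, pick any passport $\mathbf{P}'$ (say, random) and simply \emph{declare} the counterfeit string to be $\mathbf{B}':=Q\big(f_e(\mathbf{W},\mathbf{P}')\big)$, where $Q$ is the scheme's quantizer (the identity, or $\mathrm{sign}(\cdot)$ for a binary string). If $Q$ is the identity the distance is $0\le\epsilon_s$; for a binary string the residual is at most the quantization gap, no larger than the one the \emph{genuine} signature already tolerates, hence $\le\epsilon_s$; and if an exact match is insisted upon, a short gradient search on $\mathbf{P}'$ minimizing $D_f\big(f_e(\mathbf{W},\mathbf{P}'),\mathbf{B}'\big)$ drives it to $0$ — this is the ``reverse-engineering at minor computational cost'' announced in Sect.~\ref{sect:intro}. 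For a trigger-set based $V$, which checks that $f(\mathbf{W},\mathbf{X}_{T'})$ equals $\mathbf{y}_{T'}$, choose $\mathbf{X}_{T'}$ arbitrarily (random images suffice) and \emph{declare} $\mathbf{y}_{T'}:=f(\mathbf{W},\mathbf{X}_{T'})$; the match is then exact, false-detection rate $0\le\epsilon_s$. If a scheme carries both a signature and a trigger set, run the two constructions side by side.

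In every case the construction reads only $\mathbf{W}$ and never consults the owner's training set $\mathbf{D}_r$, so condition (a) holds (and, as a by-product, the attack does not need $\mathbf{D}_r$). Thus $V\big(I(\mathbb{N}[\mathbf{W},\mathbf{T},\mathbf{s}]),\mathbf{T}',\mathbf{s}',\epsilon_s\big)=\textit{True}$, (c) holds as shown above, and a non-empty $I$ exists: $\mathcal{V}=\mathcal{V}^I$, which is the claim.

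The one step that is not purely mechanical is passing $V$ below $\epsilon_s$ in the feature-based case when $\mathbf{B}$ is externally constrained — forced to be binary, or required to encode a pre-registered identity string. There one must argue either that $\mathbf{P}'$ has enough free parameters for the gradient search to realize the prescribed $\mathbf{B}'$ exactly, or that the quantization residual stays within the same $\epsilon_s$ granted to authentic signatures. This is exactly the place where the hypothesis ``$F$ independent of $\mathbf{s},\mathbf{T}$'' does its work, by making (c) vacuous, and it is precisely the coupling that the passport scheme of Sect.~\ref{sect:ownByPass} is engineered to destroy.
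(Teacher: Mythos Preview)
Your argument is correct and follows the same skeleton as the paper's proof: freeze the trained weights $\hat{\mathbf{W}}$, forge fresh credentials $(\mathbf{T}',\mathbf{s}')$ that pass $V$, and observe that condition (c) is vacuous by the independence hypothesis. The only tactical difference is that the paper casts step~2 as an optimization (minimize $D_f$ over $\mathbf{P},\mathbf{B}$ with a randomly chosen target $\mathbf{B}'$; minimize cross-entropy over $\mathbf{X}_T,\mathbf{y}_T$), whereas your read-off constructions $\mathbf{B}':=Q(f_e(\mathbf{W},\mathbf{P}'))$ and $\mathbf{y}_{T'}:=f(\mathbf{W},\mathbf{X}_{T'})$ short-circuit that optimization entirely---a slightly more direct route to the same conclusion.
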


\begin{proof}
	for a trained network $\mathbb{N}[\mathbf{\hat{W},T,s}]$ with signature $\mathbf{s}$ and/or trigger set $\mathbf{T}$ embedded, the invert process $I()$ can be constructed with the following steps: 
	\begin{enumerate}
	\item maintain the optimal weights $\mathbf{\hat{W}}$ unchanged;

	\item minimize the detection error (see III in Definition (\ref{def:ovs}) in the main paper): % to get $T'$ or $s'$ that achieve  $V( I(\mathbb{N}[\mathbf{W,.,.}] ), \mathbf{T',s'}, \epsilon_s ) = \text{\textit{True}}$
		\begin{enumerate}[i)]
			\item  forge the \textit{feature-based} watermarks $\mathbf{s'} = \{\mathbf{P',B'}\}$ by minimizing the distance  $\{\mathbf{P',B'}\} = arg \min\limits_{\mathbf{P,B}} D_f( f_e(\mathbf{\hat{W},P}), \mathbf{B})$. 
			\textit{Remark}: attackers have to take $\mathbf{B' \neq B}$, and in case that the watermark signature $\mathbf{B}$ is unknown, attackers may assign random signature $\mathbf{B'}$, whose the probability of collision $\mathbf{B'=B}$ is then exponentially low. 
			
			\item forge the trigger set $T'=\{\mathbf{X}'_T, \mathbf{y}'_T\}$ by minimizing the (cross-entropy) loss $\{\mathbf{X}'_T, \mathbf{y}'_T\} =arg\min\limits_{ \mathbf{X}_{T}, \mathbf{y}_{T} } $ $L_c\big( f(\mathbf{\hat{W}},\mathbf{X}_{T}), \mathbf{y}_{T} \big)$ 
				between the prediction and the target labels.  
			
		\end{enumerate}

	\item  fidelity evaluation is fulfilled since it is independent to both the forged signatures and trigger set, thus remain unchanged. 
	\end{enumerate}
	
	\textit{Remark}: during the minimization of detection error, there is \textit{no need of training data} which is not used in step 2 at all; 
	
	\textit{Remark}: during the minimization of detection error, the \textit{computational cost is minor} since the dimensionality of the optimization parameters i.e. $\{\mathbf{P',B'}\}$ or $\mathbf{X}'_T, \mathbf{y}'_T$ is order of magnitude smaller, as compared to the number of DNN weights $\mathbf{\hat{W}}$.
\end{proof}

%%%%Conventional Work Experiment%%%%%%%%%%%%%%%%%%%%%%%%%%%%%%%%%%%%%%%%%%%%%%%%%%%%%%%%%%%%%%%%%%%%%%%%%%%%%%%%
\subsection{Watermark-based DNN in the face of ambiguity attacks}  \label{sect:ambAtt_exp}

\begin{table*}[ht]
	\resizebox{\textwidth}{!}{%
			\centering
		\begin{tabular}{c|c|c|c||c|c|c|}
			\cline{2-7}
			& \multicolumn{3}{ c|| }{Feature based method \cite{EmbedWMDNN_2017arXiv}} & \multicolumn{3}{ c|}{Trigger-set based method \cite{TurnWeakStrength_Adi2018arXiv}} \\
			\cline{2-7}	
		&	CIFAR10 & Real WM Det. & Fake WM Det. & CIFAR10 & Real WM Det. & Fake WM Det. \\ \cline{2-7} \hline
	\multicolumn{1}{ |c| }{CIFAR100} &	64.25 (90.97)	& 100  (100)& 100  (100)& 65.20 (91.03) & 25.00  (100) & 27.80  (100) \\ \hline
	\multicolumn{1}{ |c| }{Caltech-101} &	74.08 (90.97)	& 100 (100) & 100  (100)& 75.06 (91.03) & 43.60  (100) & 46.80  (100) \\ \hline
		\end{tabular}%
	}
	\caption{Detection of embedded watermark (in \%) with two representative watermark-based DNN methods \cite{EmbedWMDNN_2017arXiv,TurnWeakStrength_Adi2018arXiv}, before and after DNN weights fine-tuning for transfer learning tasks. Top row denotes a DNN model trained with CIFAR10 and weights fine-tuned for CIFAR100; while bottom row denotes weight fine-tuned for Caltech-101. Accuracy outside bracket is the transferred task, while in-bracket is the original task.  WM Det. denotes the detection accuracies of real and fake watermarks.}
	\label{tab:det-wm-tune}
	%\vspace{-10pt}	
\end{table*}

\begin{figure*}[t]
	\centering
	\begin{subfigure}[t]{0.3\linewidth}
		\centering
		\includegraphics[keepaspectratio=true, scale = 0.3]{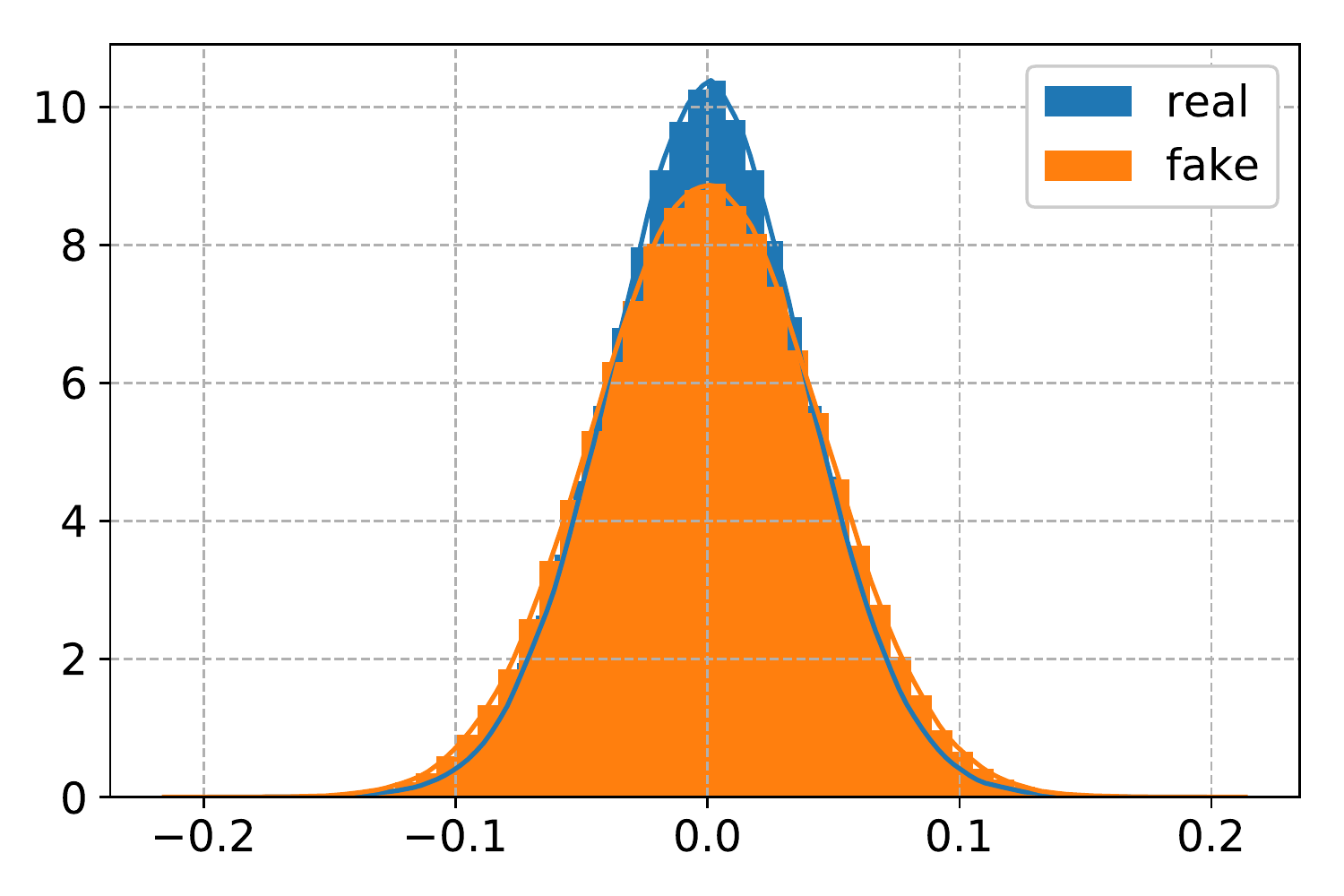}%[height=0.8\linewidth]
		\caption{Distribution of the watermarks.}
		\label{fg:distXji}
	\end{subfigure} \hfill
	\begin{subfigure}[t]{0.3\linewidth}
		\centering
		\includegraphics[keepaspectratio=true, scale = 0.3]{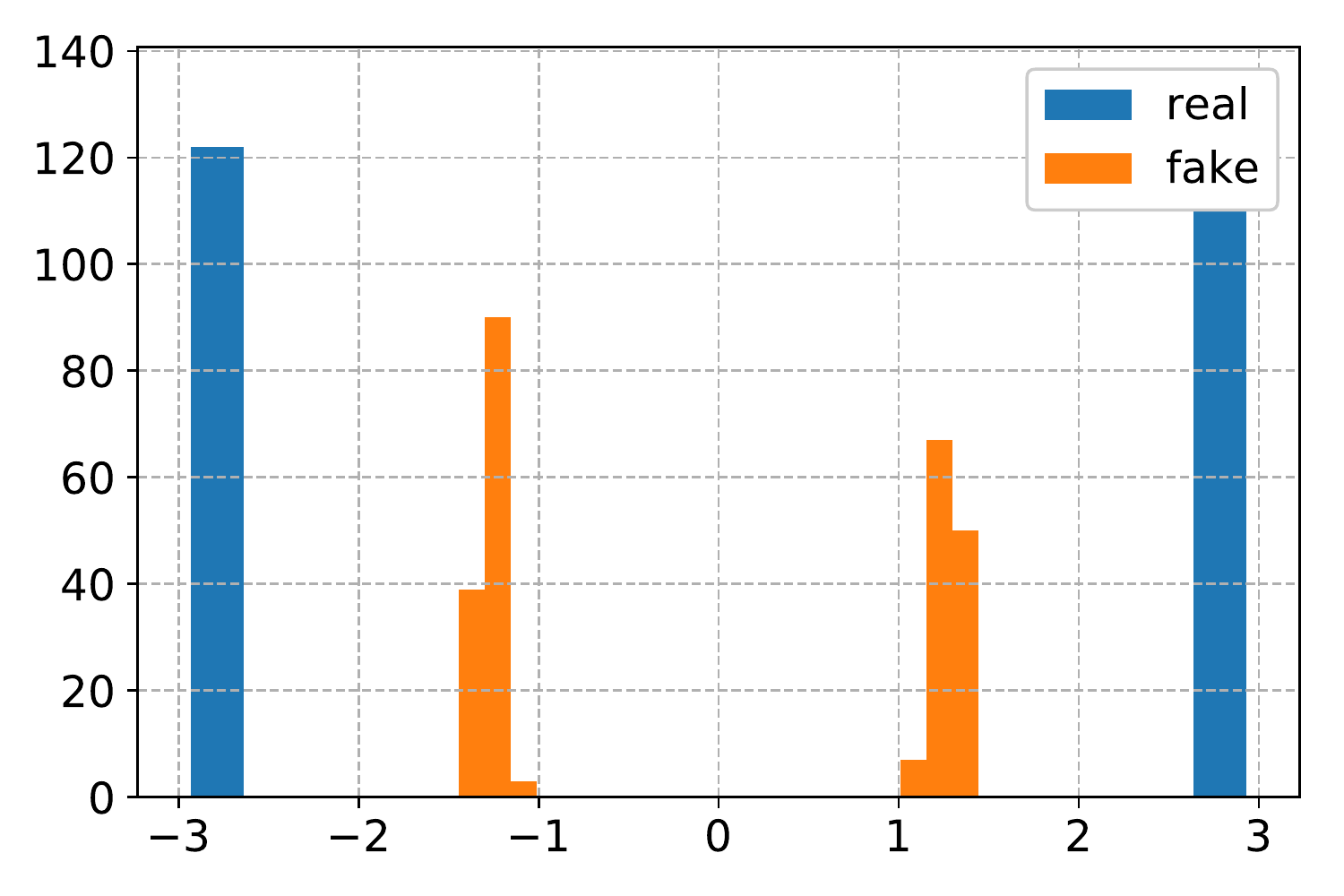}%[height=0.8\linewidth]
		\caption{Histogram of the extracted features.}
		\label{fg:distXW}
	\end{subfigure} \hfill
	\begin{subfigure}[t]{0.3\linewidth}
	\centering
	\includegraphics[keepaspectratio=true, scale = 0.3]{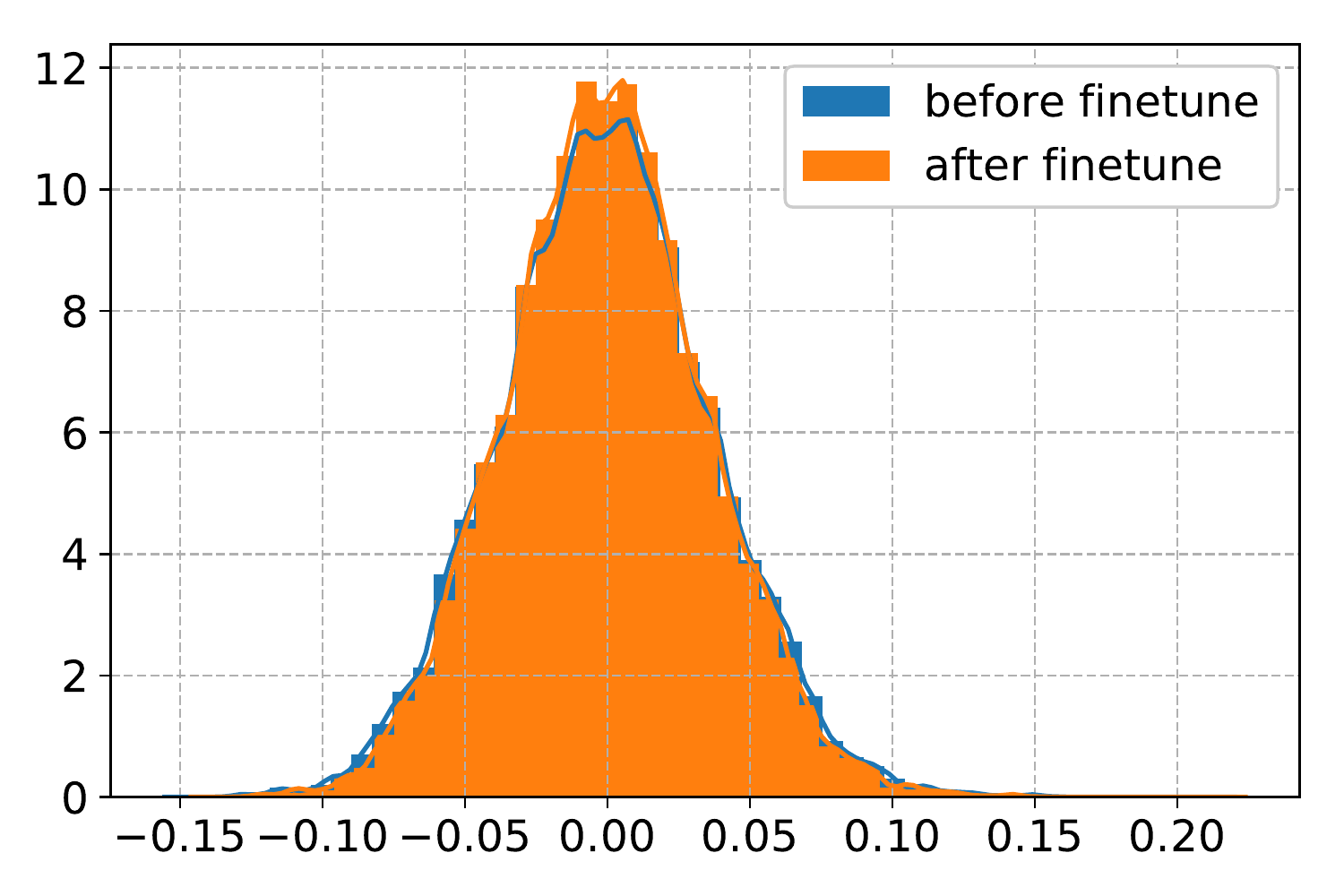}%[height=0.8\linewidth]
	\caption{Distribution of the counterfeit watermarks after fine-tuning.}
	\label{fg:distXji_ft}
\end{subfigure} \hfill
	\caption{A comparison of the distributions of watermarks and extracted features.} %Distribution of $X_{ji}$ (left) and histogram of $\sum_i X_{ji} w_i$ (right)}
	 	%\vspace{-10pt}
\end{figure*}

In this section, we investigate a number of watermark-based DNN ownership verification methods in the face of ambiguity attacks, which aim to cast doubts on ownership verification by forging counterfeit watermarks.

%%%%Cnventional - Feature%%%%%%%%%%%%%%%%%%%%%%%%%%%%%%%%%%%%%%%%%%%%%%%%%%%%%%%%%%%%%%%%%%%%%%%%%%%%%%%%%%%%%%%
\subsubsection{Ambiguity attacks on feature-based method \cite{EmbedWMDNN_2017arXiv}}

Herein, we first train a DNN model embedded with watermarks as described in \cite{EmbedWMDNN_2017arXiv}, then we conduct the ambiguity attacks as follows. The loss function adopted in \cite{EmbedWMDNN_2017arXiv} uses the following binary cross entropy for the embedding regularizer:
\begin{align}
\label{eq:ERW-loss}
E_R(W) = - \sum^T_{j=1} (b_j \log(y_j) + (1-b_j)\log(1-y_j)),
\end{align}
in which $y_j = \sigma(\sum_i X_{ji} w_i)$ is the extracted feature with $\sigma(\cdot)$ is the sigmoid function. 
In order to forge watermark $X$ for a given signature $b_j$, we first freeze the weights $w_i$ of the watermarked DNN model, and minimize the loss (Eq. \ref{eq:ERW-loss}) with respect to the new binary signatures $b'_j$. 

Fig. \ref{fg:distXji} illustrates the distributions of counterfeit watermarks $X_{ji}$ together with the original watermarks, which are hardly distinguishable from each other. In terms of the extracted features $\sum_i X_{ji} w_i$, their distributions are different from the original watermarks, but it is still impossible to tell the difference between them after thresholding for the purpose of ownership verification. Finally, Fig. \ref{fg:distXji_ft} illustrates that the distribution of $X_{ji}$ is not much affected by the fine-tuning process which aims to modify the DNN weights for transfer learning purposes (see Table \ref{tab:det-wm-tune}).

% trigger set images
\begin{figure}[t]
	\begin{subfigure}{\linewidth}
		\centering
		\includegraphics[keepaspectratio=true, scale = 1]{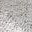} %\hfill
		\includegraphics[keepaspectratio=true, scale = 1]{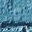} %\hfill
		\includegraphics[keepaspectratio=true, scale = 1]{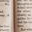} %\hfill
		\includegraphics[keepaspectratio=true, scale = 1]{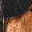} %\hfill
		\includegraphics[keepaspectratio=true, scale = 1]{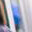} %\hfill
		\includegraphics[keepaspectratio=true, scale = 1]{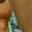}
		\caption{base image $T_b$}
		\label{fg:tbbase}
	\end{subfigure}
	
	\begin{subfigure}{\linewidth}
		\centering
		\includegraphics[keepaspectratio=true, scale = 1]{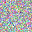}
		\includegraphics[keepaspectratio=true, scale = 1]{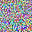}
		\includegraphics[keepaspectratio=true, scale = 1]{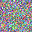}
		\includegraphics[keepaspectratio=true, scale = 1]{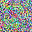}
		\includegraphics[keepaspectratio=true, scale = 1]{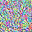}
		\includegraphics[keepaspectratio=true, scale = 1]{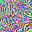}
		\caption{noise component $T_n$}
		\label{fg:tnnoise}
	\end{subfigure}
	
	\begin{subfigure}{\linewidth}
		\centering
		\includegraphics[keepaspectratio=true, scale = 1]{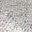}
		\includegraphics[keepaspectratio=true, scale = 1]{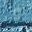}
		\includegraphics[keepaspectratio=true, scale = 1]{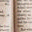}
		\includegraphics[keepaspectratio=true, scale = 1]{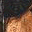}
		\includegraphics[keepaspectratio=true, scale = 1]{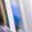}
		\includegraphics[keepaspectratio=true, scale = 1]{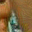}
		\caption{optimized $X_T$}
		\label{fg:xtopt}
	\end{subfigure}
	
	\caption{Sample of the trigger set images used in ambiguity attacks on trigger-set based method in Section \ref{sect:ambAtt_tri}.}
	\label{fg:samplets}
\end{figure}

Following \cite{EmbedWMDNN_2017arXiv}, we detect watermarks by comparing the extracted binary strings w.r.t. the designated one by measuring the successful detection rate. As summarized in Table \ref{tab:det-wm-tune}, all the counterfeit watermarks of size (256-bit) are successfully detected. We also fine-tune the DNN model by adjusting the network weights at all layers for new classification tasks (i.e. CIFAR100 and Caltech-101), where counterfeit watermarks are still detectable with 100\% detection rate, demonstrating robustness against fine-tuning too.

Note that since $w_i$ are fixed, we do not need to include the original (cross-entropy) loss measured with the training images, which is a constant during the optimization.  This simplicity allows the forging of $X_{ji}$ converge very rapidly. Note that, the overall optimization took about only 50 iterations in 50 seconds, which merely constitutes a minor fraction (2.5\%) of the training time for the original task. 

%%%%Conventional - Trigger%%%%%%%%%%%%%%%%%%%%%%%%%%%%%%%%%%%%%%%%%%%%%%%%%%%%%%%%%%%%%%%%%%%%%%%%%%%%%%%%%%%%%%%
\subsubsection{Ambiguity attacks on trigger-set based method \cite{TurnWeakStrength_Adi2018arXiv}}
\label{sect:ambAtt_tri}

We first follow \cite{TurnWeakStrength_Adi2018arXiv} to train the DNN model with trigger set images embedded as watermarks, and then we conduct the ambiguity attacks as follows. In order to construct the adversarial trigger set images by minimizing the cross-entropy loss between the predicted labels and the target labels, we adopt a simple approach which adds trainable noise components to randomly selected base images using the following steps: 
\begin{enumerate}
\item Randomly select a set of $N$ base images $T_b$ as shown in Fig. \ref{fg:tbbase};

\item Make random noisy patterns of the same size $T_n$ as trainable parameters;

\item Use the  summed  components $\mathbf{X}_T = T_b + \eta T_n$ as the trigger set images, in which $\eta=0.04$ to make the noise component invisible; 

\item Randomly assign trigger set labels $ \mathbf{y}_{T}$; 

\item Minimize the cross-entropy cost $ L_c\big( f(\mathbf{\hat{W}},\mathbf{X}_{T}), \mathbf{y}_{T} \big)$ w.r.t. the trainable parameter $T_n$.  

\textit{Remark}: DNN parameters $\mathbf{\hat{W}}$ are fixed during the optimization, and thus, the original training data is not needed. 
\end{enumerate}

\begin{figure}[t]
	\centering
	\includegraphics[keepaspectratio=true, scale = 0.3]{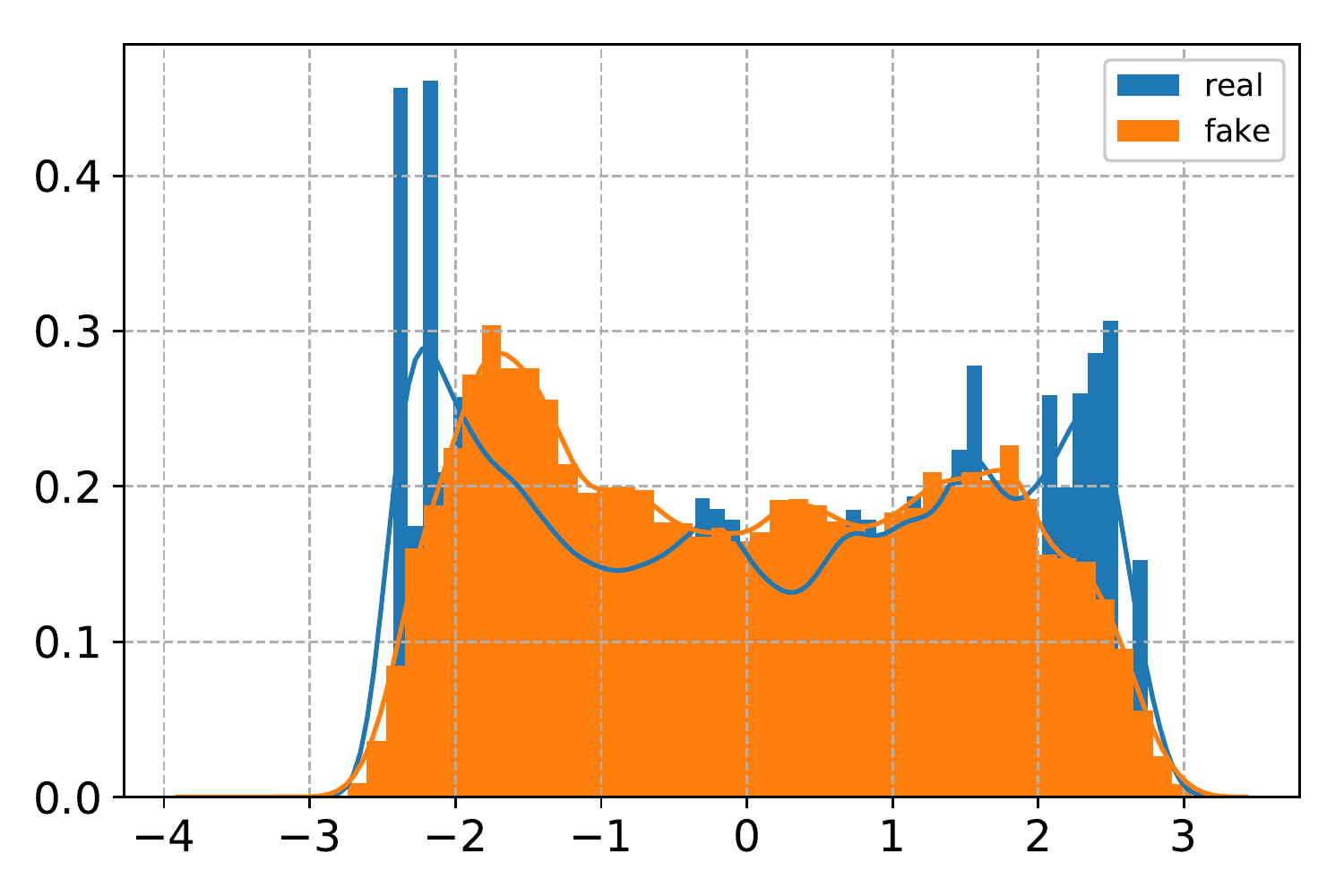}
	\caption{Distribution of the real $T_b$ and fake $X_T$ trigger set images. It shows that the fake trigger set images are hardly distinguishable from the real ones.}
	\label{fg:dist-trigger}
\end{figure}

\begin{figure*}[t]
	\begin{subfigure}[t]{.485\linewidth}	
		\centering
		\includegraphics[keepaspectratio=true, scale = 0.45]{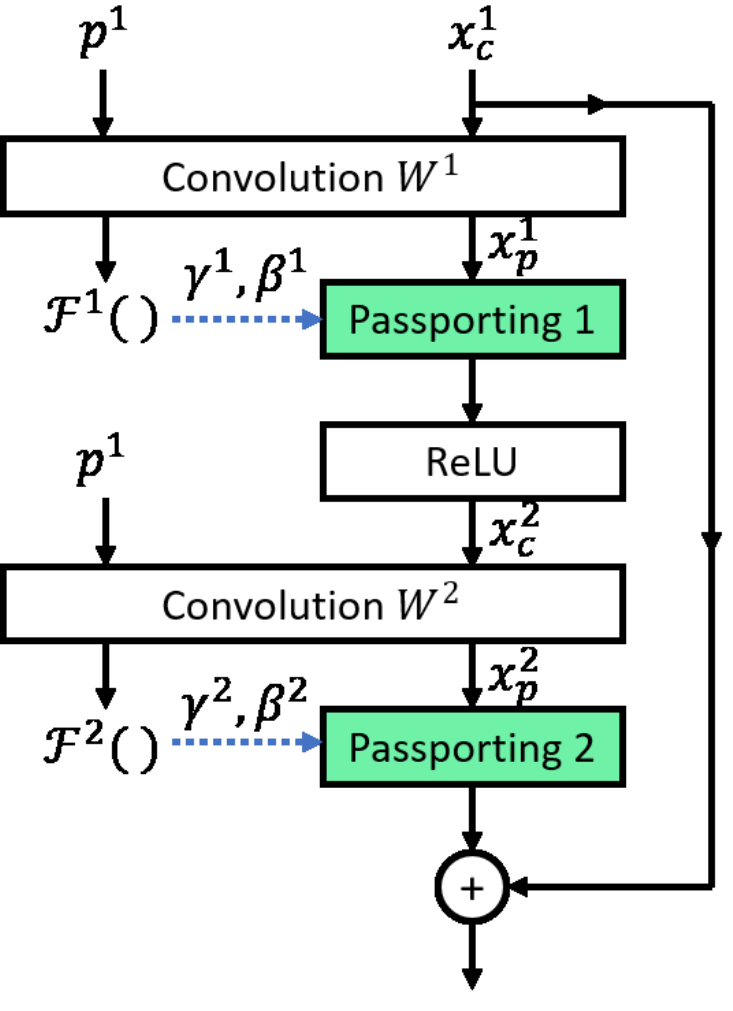}
		\caption{An example in the \textit{ResNet} layer that consists of the proposed passporting layers. 
			%$x_c^l$ and $x_p^l$ are inputs to the convolution and passporting layers, 
			$p^l = \{p^l_{\gamma}, p^l_{\beta}\}$ is the proposed \textit{digital passports} where
			$\mathcal{F}=\text{Avg}( \mathbf{W}^l_p * \mathbf{P}^l_{\gamma,\beta} )$ is a passport function to compute the hidden parameters (i.e. $\gamma$ and $\beta$) given in Eq. (\ref{eq:pass-beta}). }
		\label{fig:nn-pass-arch}
	\end{subfigure} \hfill
	\begin{subfigure}[t]{.485\textwidth}
	\centering
	\includegraphics[keepaspectratio=true, scale = 0.23]{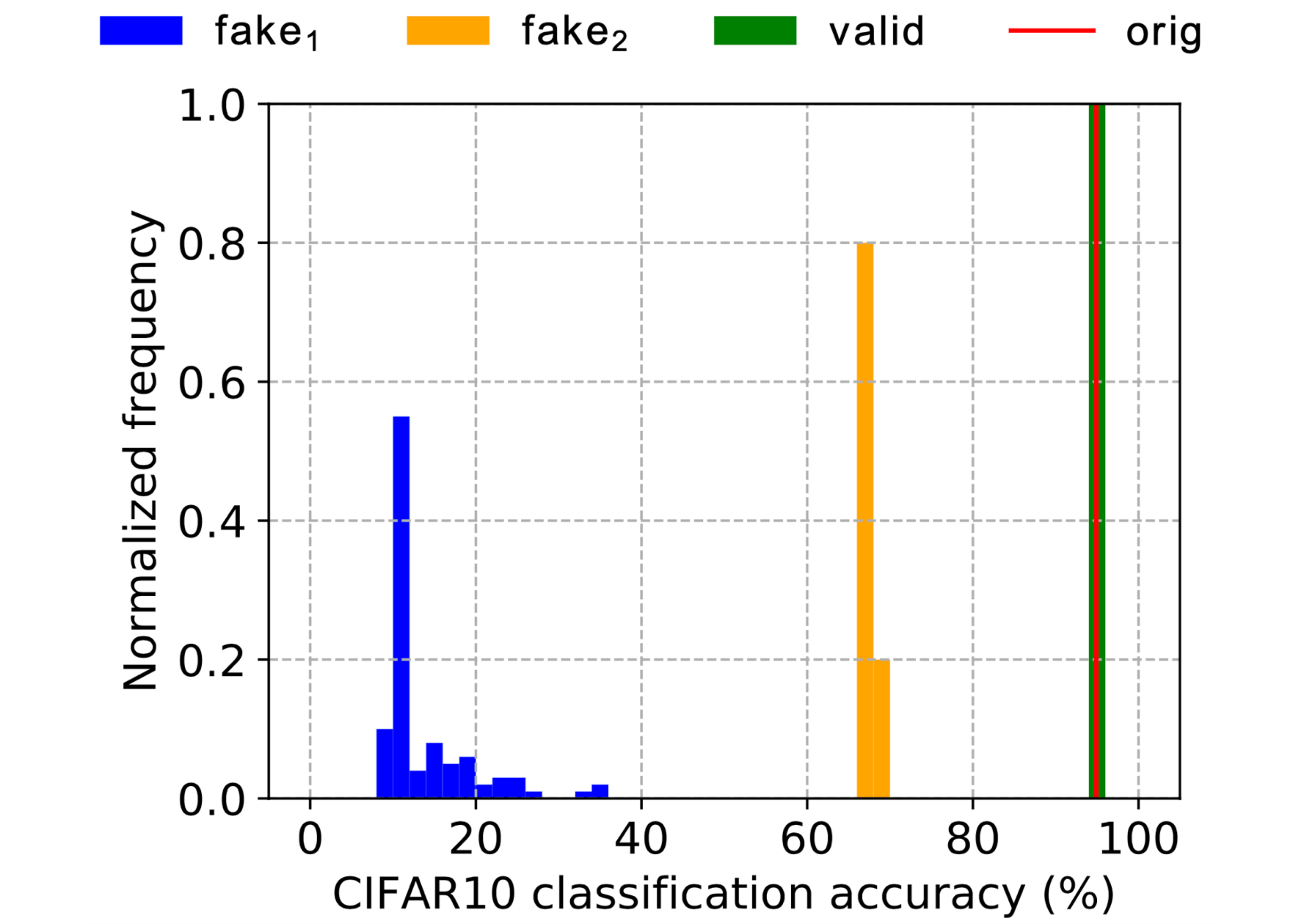}
	\caption{A comparison of CIFAR10 classification accuracies given the original DNN, proposed DNN with valid passports, proposed DNN with randomly generated passports ($fake_1$), and proposed DNN with reverse-engineered passports ($fake_2$). }
	\label{fg:hist-teaser}
\end{subfigure} %\hfill
	\caption{(a) Passport layers in ResNet architecture and (b) Classification accuracies modulated by different passports in CIFAR10, e.g. given counterfeit passports, the DNN models performance will be deteriorated instantaneously to fend off illegal usage.}
	\label{fig:two-panels-1}
	%\vspace{-10pt}
\end{figure*} 

Fig. \ref{fg:xtopt} illustrates the final optimized $\mathbf{X}_T$. where all of them are correctly classified as the assigned labels i.e. $\mathbf{y}_{T}$.  Visually, these forged trigger set images (Fig. \ref{fg:xtopt}) are hardly distinguishable from the original ones (Fig. \ref{fg:tbbase}). In terms of histogram distributions, they are indistinguishable too (see Fig. \ref{fg:dist-trigger}). As shown in Table \ref{tab:det-wm-tune}, both the trigger set and forged images are 100\% correctly labeled with assigned adversarial labels. This indistinguishable situation casts doubt on ownership  verification by trigger set images alone.  

After fine-tuned to other classification tasks, however, the classification accuracies of both trigger set and forged images deteriorated drastically yet the detection rate of forged images is slightly better than that of the original trigger set images. We ascribed this improvement to the ambiguity attack procedures outlined above. In terms of the computational cost, the overall optimization requires only about 100 epochs of fake trigger set in 100 seconds, which merely constitutes a minor fraction (5\%) of the training time for the original task.

\subsubsection{Summary on ambiguity attacks on watermark-based DNN}
As proved by Proposition \ref{thm:1}, one is able to construct forged watermarks for any already watermarked networks.  We tested the performances of two representative DNN watermarking methods \cite{EmbedWMDNN_2017arXiv,TurnWeakStrength_Adi2018arXiv}, and Table \ref{tab:det-wm-tune} shows that counterfeit watermarks can be forged for the given DNN models with 100\% detection rate, and  100\% fake trigger set images can be reconstructed as well in the original task. Given that the detection accuracies for the forged trigger set is slightly better than the original trigger set after fine-tuning, the claim of the ownership is ambiguous and cannot be resolved by neither feature-based nor trigger-set based watermarking methods. Shockingly, the computational cost to forge counterfeit watermarks is minor, and worst still this is achieved without the need of original training data.  

As a whole, the ambiguity attacks against DNN watermarking methods are effective with minor computational and without the need of original training datasets. We ascribe this loophole to the crux that the loss of the original task i.e. $ L_c\big( f(\mathbf{\hat{W}},\mathbf{X}_r), \mathbf{y}_r \big) $ is \textit{independent} of the forged watermarks. In the next section, we shall illustrate a solution to defeat the ambiguity attacks. %We refer readers to our extended version \cite{fan2019rethinking} for an elaboration on the ambiguity attack method we adopted and more detailed experiment results. In the next section, we shall illustrate a solution to defeat the ambiguity attacks.

%%%%Proposal%%%%%%%%%%%%%%%%%%%%%%%%%%%%%%%%%%%%%%%%%%%%%%%%%%%%%%%%%%%%%%%%%%%%%%%%%%%%%%%%%%%%%%%%%%
\section{Embedding passports for DNN ownership verification} \label{sect:EmbedPassport}

The main motivation of embedding digital passports is to {design} and {train} DNN models in a way such that, their inference performances of the original task (i.e. classification accuracy) will be significantly \textit{deteriorated due to the forged signatures}. We shall illustrate next first how to implement the desired property by incorporating the so called \textit{passport layers}, followed by different ownership protection schemes that exploit the embedded passports to effectively defeat ambiguity attacks.

\subsection{Passport layers}
\label{sect:passlayer}

In order to control the DNN model functionalities by the embedded digital signatures i.e. \textit{passports}, we proposed to append after a convolution layer  a \textit{passport layer}, whose scale factor $\gamma$ and bias shift term $\beta$ are dependent on both the convolution kernels $\mathbf{W}_p$ and the designated passport $\mathbf{P}$ as follows: 
\begin{align}\label{eq:pass-beta}
\mathbf{O}^l( \mathbf{X}_p ) = \gamma^l \mathbf{X}^l_p + \beta^l = \gamma^l ( \mathbf{W}^l_p * \mathbf{X}^l_c ) + \beta^l, \\ % \qquad %\\
\gamma^l = \text{Avg}( \mathbf{W}^l_p * \mathbf{P}^l_{\gamma} ), \qquad %\\ 
\beta^l = \text{Avg}( \mathbf{W}^l_p * \mathbf{P}^l_{\beta} ), %
\end{align}
in which $*$ denotes the convolution operations, $l$ is the layer number, $\mathbf{X}_p$ is the input to the passport layer and $\mathbf{X}_c$ is the input to the convolution layer. $\mathbf{O}()$ is the corresponding linear transformation of outputs, 
while $\mathbf{P}^l_{\gamma}$ and $\mathbf{P}^l_{\beta}$ are the passports used to derive scale factor and bias term respectively.
Figure \ref{fig:nn-pass-arch} delineates the architecture of digital passport layers used in a \textit{ResNet} layer. 

\textbf{Remark}: for DNN models trained with passport $\mathbf{s}_e=\{\mathbf{P}^l_{\gamma}, \mathbf{P}^l_{\beta}\}^l$, their \textit{inference performances} $\mathcal{M}(\mathbb{N}[\mathbf{W},\mathbf{s}_e], \mathbf{D}_t, \mathbf{s}_t)$ depend on the running time passports $\mathbf{s}_t$ i.e.  
\begin{align}\label{eq:dpp-formula}
\mathcal{M}(\mathbb{N}[\mathbf{W},\mathbf{s}_e], \mathbf{D}_t, \mathbf{s}_t) = \left\{ \begin{array}{cc}
\mathcal{M}_{\mathbf{s}_e}, & \text{if }\mathbf{s}_t=\mathbf{s}_e,  \\
\overline{\mathcal{M}}_{\mathbf{s}_e}, & \text{otherwise}.  \\
\end{array} \right.
\end{align}
If the genuine passport is not presented  $\mathbf{s}_t \neq \mathbf{s}_e$, the running time performance $\overline{\mathcal{M}}_{\mathbf{s}_e}$ is significantly deteriorated because the corresponding scale factor $\gamma$ and bias terms $\beta$ are calculated based on the wrong passports. For instance, as shown in Figure \ref{fg:hist-teaser}, a proposed DNN model presented with valid passports (green) will demonstrate almost identical accuracies as to the original DNN model (red). In contrast, the same proposed DNN model presented with counterfeit passports (blue), the accuracy will deteriorate to merely about 10\% only.

\textbf{Remark}: the gist of the proposed passport layer is to enforce  \textit{dependence} between scale factor, bias terms and network weights. As shown by the Proposition \ref{thm:2}, it is this dependence that validates the required non-invertibility to defeat ambiguity.
\begin{restatable}[\textit{Non-invertible process}]{prop}{noninvertclaim} 
	\label{thm:2}
	A DNN ownership verification scheme $\mathcal{V}$ as in Definition \ref{def:ovs} is \textit{non-invertible}, if 
	\begin{enumerate}[I)]
		\item 	the fidelity process outcome $F\big( \mathbb{N}[\mathbf{W, T, s}], \mathbf{D}_t, \mathcal{M}_t, \epsilon_f \big)$ depends either on the presented signature $\mathbf{s}$ or trigger set $\mathbf{T}$, 
		
		\item with forged passport $\mathbf{s}_t \neq \mathbf{s}_e$, the DNN inference performance $\mathcal{M}(\mathbb{N}[\mathbf{W},\mathbf{s}_e], \mathbf{D}_t, \mathbf{s}_t)$ in (Eq. \ref{eq:dpp-formula}) will deteriorate such that the discrepancy is larger than a threshold i.e. $|\mathcal{M}_{\mathbf{s}_e} - \overline{\mathcal{M}}_{\mathbf{s}_e}| > \epsilon_f$. 
	\end{enumerate}
\end{restatable}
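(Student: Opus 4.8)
The plan is to proceed by contradiction, showing that assumptions I) and II) jointly forbid the existence of any invert process $I$ in the sense of Definition \ref{def:ovs}.IV, hence the scheme must be the non-invertible $\mathcal{V}^{\emptyset}=(E,F,V,\emptyset)$. First I would recall exactly what an invert process is required to deliver: by Definition \ref{def:ovs}.IV the map $I$ keeps the trained weights $\mathbf{W}$ \textit{fixed} and outputs $\mathbb{N}[\mathbf{W},\mathbf{T}',\mathbf{s}']$ such that (a) a counterfeit $\mathbf{T}',\mathbf{s}'$ is obtained without retraining $\mathbf{W}$, (b) $V(\mathbb{N}[\mathbf{W},\mathbf{T}',\mathbf{s}'],\mathbf{T}',\mathbf{s}',\epsilon_s)=\textit{True}$, and (c) the fidelity test $F(\mathbb{N}[\mathbf{W},\cdot,\cdot],\mathbf{D}_t,\mathcal{M}_t,\epsilon_f)$ still returns \textit{True}. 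I would then note that a genuine forgery must present a signature $\mathbf{s}'=\mathbf{s}_t$ with $\mathbf{s}_t\neq\mathbf{s}_e$: equality would mean the adversary already holds the owner's secret passport (not an attack), and for a randomly chosen $\mathbf{s}'$ the collision $\mathbf{s}'=\mathbf{s}_e$ occurs only with negligible probability, exactly as in the $\mathbf{B}'\neq\mathbf{B}$ argument used in the proof of Proposition \ref{thm:1}.

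The core step is then to feed the forged passport into the performance model. Since the fidelity evaluation of the counterfeit network is carried out with the presented signature $\mathbf{s}_t$, assumption I) makes the outcome of $F$ a genuine function of $\mathbf{s}_t$, and the relevant inference performance is $\mathcal{M}(\mathbb{N}[\mathbf{W},\mathbf{s}_e],\mathbf{D}_t,\mathbf{s}_t)$ in (Eq. \ref{eq:dpp-formula}). Because $\mathbf{s}_t\neq\mathbf{s}_e$, assumption II) gives $\mathcal{M}(\mathbb{N}[\mathbf{W},\mathbf{s}_e],\mathbf{D}_t,\mathbf{s}_t)=\overline{\mathcal{M}}_{\mathbf{s}_e}$ with $|\mathcal{M}_{\mathbf{s}_e}-\overline{\mathcal{M}}_{\mathbf{s}_e}|>\epsilon_f$. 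Using that the target performance $\mathcal{M}_t$ coincides, up to the embedding tolerance, with the genuine-passport performance $\mathcal{M}_{\mathbf{s}_e}$ — which holds because the embedding process $E$ is assumed to meet the fidelity condition of Definition \ref{def:ovs}.II with the true passport — I would conclude $|\mathcal{M}(\mathbb{N}[\mathbf{W},\mathbf{s}_e],\mathbf{D}_t,\mathbf{s}_t)-\mathcal{M}_t|>\epsilon_f$, so $F(\mathbb{N}[\mathbf{W},\mathbf{T}',\mathbf{s}_t],\mathbf{D}_t,\mathcal{M}_t,\epsilon_f)=\textit{False}$. This contradicts requirement (c), so no invert process can simultaneously satisfy (a)--(c); therefore $\mathcal{V}$ is non-invertible.

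The step I expect to be the real obstacle is not the contradiction itself — which is bookkeeping once I) and II) are granted — but pinning down why the adversary cannot sidestep II). Concretely, I would need to close the ``split presentation'' loophole in which a forger presents the genuine passport $\mathbf{s}_e$ for the fidelity test but the counterfeit $\mathbf{s}'$ for verification; this must be ruled out by insisting that $F$ and $V$ be evaluated under one and the same claimed signature, which is precisely the purpose of entangling the scale/shift parameters $\gamma,\beta$ with the weights through the passport function $\mathcal{F}=\mathrm{Avg}(\mathbf{W}^l_p * \mathbf{P}^l_{\gamma,\beta})$ in (Eq. \ref{eq:pass-beta}). A secondary concern is whether an attacker could search for some $\mathbf{s}_t\neq\mathbf{s}_e$ that still recovers $\mathcal{M}_{\mathbf{s}_e}$: assumption II) simply postulates this cannot happen, and a fully rigorous justification would have to tie it to the dimensionality and coupling of $(\gamma,\beta,\mathbf{W})$ induced by the passport layers, which in this paper is argued to hold and is supported empirically (cf. Fig. \ref{fg:hist-teaser}) rather than derived in closed form.
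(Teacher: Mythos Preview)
Your proposal is correct and follows essentially the same logic as the paper's proof: both argue that assumption II) forces the fidelity discrepancy to exceed $\epsilon_f$ under any forged passport, which by Definition \ref{def:ovs}.IV(c) rules out the existence of an invert process. The paper's own proof is in fact a one-liner that simply invokes II) and the definition of invertibility, whereas you have carefully unpacked the contradiction, justified $\mathbf{s}_t\neq\mathbf{s}_e$, and flagged the loopholes (split presentation, existence of an alternate $\mathbf{s}_t$ recovering $\mathcal{M}_{\mathbf{s}_e}$) that the paper leaves implicit or defers to empirical evidence.
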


\begin{proof}
	Since using forged passports the DNN model performance is significantly deteriorated such that $|\mathcal{M}_{\mathbf{P}_e} - \overline{\mathcal{M}}_{\mathbf{P}_e}| > \epsilon_f$,
	it immediately follows, from the definition of invertible verification schemes, that the scheme in question is non-invertible. 
\end{proof}

\begin{figure}[t]%[t]
	\begin{subfigure}[t]{.3\linewidth}
		\centering
		\includegraphics[keepaspectratio=true, scale = 1.17]{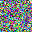}
		\caption{}
	\end{subfigure} \hfill
	\begin{subfigure}[t]{.3\linewidth}
		\centering
		\includegraphics[keepaspectratio=true, scale = 1.17]{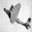}
		\caption{}
	\end{subfigure} \hfill
	\begin{subfigure}[t]{.3\linewidth}
		\centering
		\includegraphics[keepaspectratio=true, scale = 0.15]{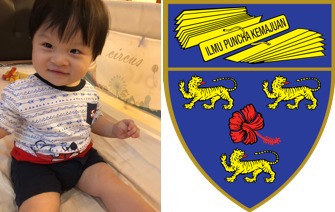}
		\caption{}
		\label{fg:logo}
	\end{subfigure}
	\caption{Example of different types of passports: (a) random patterns, (b) fixed image and (c) random shuffled.}
	\label{fg:samplepassport}
\end{figure}

%%%%Passport Method%%%%%%%%%%%%%%%%%%%%%%%%%%%%%%%%%%%%%%%%%%%%%%%%%%%%%%%%%%%%%%%%%%%%%%%%%%%%%%%%%%%%%%%%%%
\subsection{Methods to generate passports}
\label{append:generate-pass}

Public parameters of a passport protected DNN might be easily plagiarized, then the plagiarizer has to deceive the network with certain passports. The chance of success of such an attacking strategy depends on the odds of correctly guessing the secret passports. Figure \ref{fg:samplepassport} illustrates three different types of passports which have been investigated in our work:

\begin{itemize}
	\item[a)]  \textit{random  patterns},  whose elements are independently randomly generated according to the uniform distribution between [-1, 1]. 
	\item [b)] one selected image is fed through a trained DNN model with the same architecture, and the corresponding feature maps are collected. Then the selected \textit{image} is used at the input layer and the \textit{corresponding} \textit{feature} maps are used at other layers as passports. 
	We refer to passports generated as such the \textit{fixed image} passport. 
	\item  [c)] a set of $N$ selected \textit{images} are fed to a trained DNN model with the same architecture, and $N$ corresponding \textit{feature maps} are collected at each layer.  Among the $N$ options, only one is randomly selected as the passport at each layer. 
	Specifically, for a set of $N$ images being applied to a DNN model with $L$ layers, there are altogether $N^L$ possible combinations of passports that can be generated. 	We refer to passports generated  as such the \textit{randomly shuffled} image passports. 
\end{itemize} 

Since randomly shuffled passports allow strong protection and flexibility in the passport generation and distribution, we adopt this passport generation method for all the experiments reported in this paper.  Specifically, $20$ images are selected and fed to the DNN architectures that are used in our experiments.  Passports at those corresponding convolution layers are then collected as possible passports. Some example of the features maps selected as the passports at different layers are illustrated in Figure \ref{fg:passport-types}. 

\begin{figure}[t]%[t]
	\centering
	\includegraphics[width=.15\linewidth]{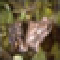} \hfill
	\includegraphics[width=.15\linewidth]{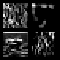} \hfill
	\includegraphics[width=.15\linewidth]{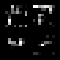} \hfill
	\includegraphics[width=.15\linewidth]{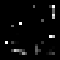} \hfill
	\includegraphics[width=.15\linewidth]{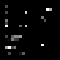}
	\caption{\textit{Randomly shuffled} passports in a 5-layered passport \textit{AlexNet$_{\mathbf{p}}$}. From left to right: Conv1 to Conv5 layers where the 4 passports in Conv2 to Conv5 corresponding to the first 4 channel of each layer.}
	\label{fg:passport-types}
\end{figure}

\begin{figure*}[t]
	\centering
	\begin{subfigure}[t]{0.3\linewidth}
	\centering
\includegraphics[keepaspectratio=true, scale = 0.2]{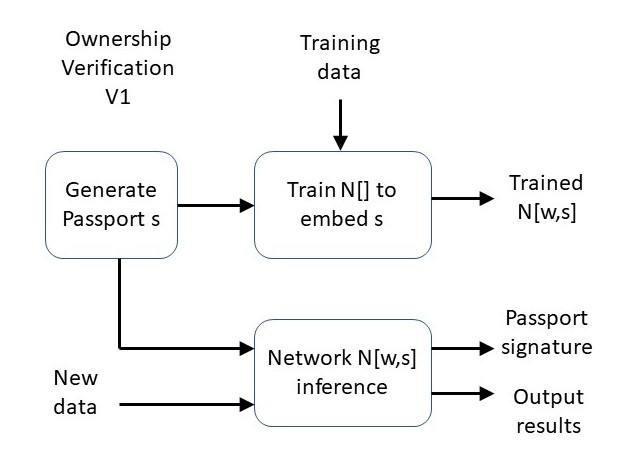}
\caption{$\mathcal{V}_1$}
\end{subfigure}
\begin{subfigure}[t]{0.3\linewidth}
\centering
\includegraphics[keepaspectratio=true, scale = 0.2]{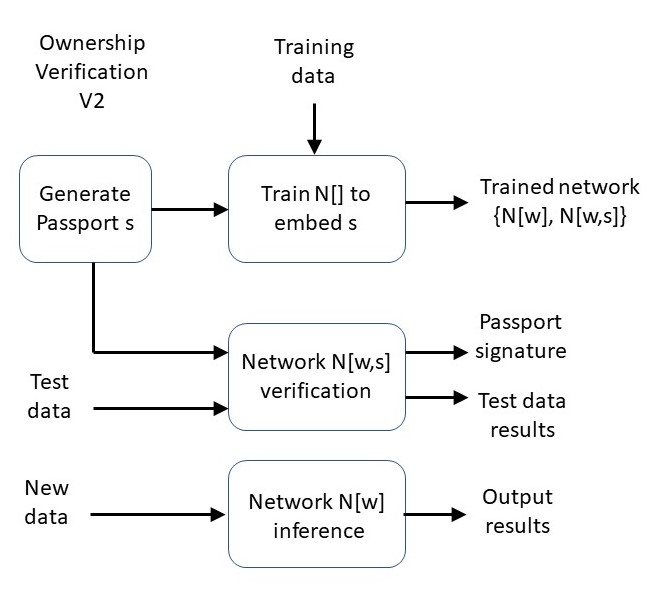}
\caption{$\mathcal{V}_2$}
\end{subfigure}%
\begin{subfigure}[t]{0.3\linewidth}
\centering
\includegraphics[keepaspectratio=true, scale = 0.2]{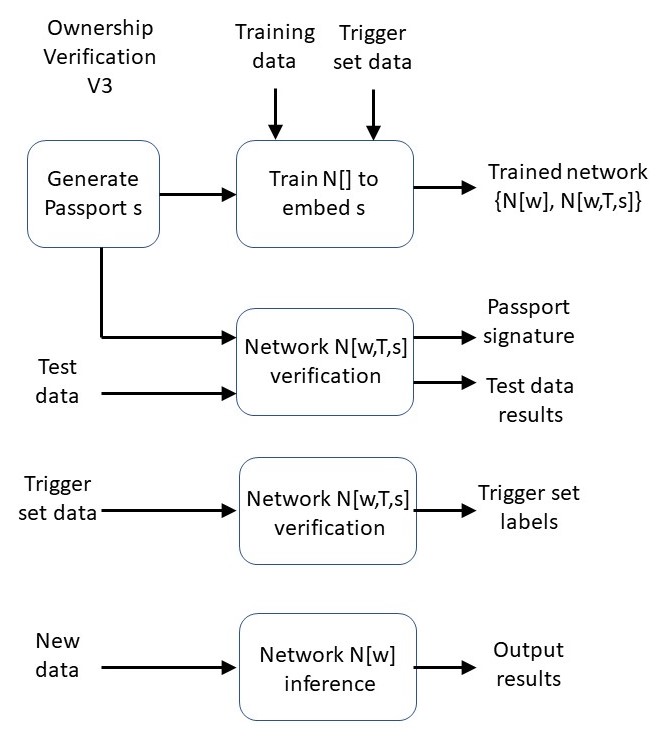}
\caption{$\mathcal{V}_3$}
\end{subfigure}%
	\caption{A graphical comparison of three different ownership verification schemes $\mathcal{V}$ with passports.}
	\label{fg:V1-2-3-flowchart}
	%\vspace{-10pt}
\end{figure*}

%%%%Sign Scale Signature%%%%%%%%%%%%%%%%%%%%%%%%%%%%%%%%%%%%%%%%%%%%%%%%%%%%%%%%%%%%%%%%%%%%%%%%%%%%%%%%%%%%%%%%%%
\subsection{Sign of scale factors as signature}
\label{sect:EncodeSignScale}

During learning the DNN, to further protect the DNN models ownership from insider threat (e.g. a former staff who establish a new start-up business with all the resources stolen from originator), one can enforce the scale factor $\gamma$ to take either positive or negative signs (+/-) as designated, so that it will form a unique signature string (like fingerprint). This process is done by adding the following \textit{sign loss} regularization term into the combined loss (Eq. \ref{eq:comb_loss}):
\begin{equation}\label{eq:sign-loss}
%R(\mathbf{W}, \mathbf{P}, \mathbf{s}) = 
R(\mathbf{\gamma}, \mathbf{P}, \mathbf{B}) = \sum^{C}_{i=1} \max( \gamma_0 - \gamma_i b_i, 0 )
\end{equation}
in which $\mathbf{B}=\{b_1,\cdots,b_C\} \in \{-1,1\}^C$ consists of the designated binary bits for $C$ convolution kernels, and 
$\gamma_0$ is a positive control parameter (0.1 by default unless stated otherwise) to encourage the scale factors have magnitudes greater than $\gamma_0$. %For ownership verification, the embedded signature $s$ can be revealed by decoding the learnt sign of scale factors.

It must be highlighted that the inclusion of sign loss (Eq. \ref{eq:sign-loss}) enforces the scale factors $\gamma$ to take either positive or negative values, and the signs enforced in this way remain rather persistent against various adversarial attacks. This feature explains the superior robustness of embedded passports against ambiguity attacks by reverse-engineering shown in Sect. \ref{subsect:exper-ambig-attack}.

%%%%Verification Ownership%%%%%%%%%%%%%%%%%%%%%%%%%%%%%%%%%%%%%%%%%%%%%%%%%%%%%%%%%%%%%%%%%%%%%%%%%%%%%%%%%%%%%%%
\subsection{Ownership verification with passports}\label{sect:ownByPass}

%Taking advantages of the proposed passport-based approach, we design three new ownership verification schemes $\mathcal{V}$ that are summarized next and refer readers to Sect. \ref{sect:exper} for the experiment results. 

Taking advantages of the proposed passport embedding method, we design three ownership verification schemes that are summarized in Fig. \ref{fg:V1-2-3-flowchart}. We briefly introduce them next and their respective merits and demerits, in terms of computational complexity, ease to use and protection strengths etc. are summarized in Table \ref{Tab:OVS-123}. Note that, in order to enhance the justification of ownership, one can furthermore select either personal identification pictures or corporate logos (Figure \ref{fg:logo}) during the designing of the fixed or random image passports. Also, it must be noted that, using passports as proofs of ownership to stop infringements is the last resort, only if the hidden parameters are illegally disclosed or (partially) recovered. We believe this juridical protection is often not necessary since the proposed technological solution actually provides proactive, rather than reactive, IP protection of deep neural networks.

%%%%V1%%%%%%%%%%%%%%%%%%%%%%%%%%%%%%%%%%%%%%%%%%%%%%%%
\textbf{$\mathcal{V}_1$: Passport is distributed with the trained DNN model} 

Hereby, the \textit{learning} process aims to minimize the combined loss function (Eq. \ref{eq:comb_loss}), in which  $\lambda_t=0$ since trigger set images are not used in this scheme and the sign loss (Eq. \ref{eq:sign-loss}) is added as the regularization term. The trained DNN model together with the passport are then distributed to legitimate users, who perform network \textit{inferences} with the given \textit{passport} fed to the passport layers as shown in Figure \ref{fig:nn-pass-arch}. The network ownership is automatically verified by the distributed passports. As shown by Table \ref{tab:finetune-passport} and Figure \ref{fg:robust-pass-pruning}, this ownership verification is robust to DNN model modifications. Also, as shown in Figure 5, ambiguity attacks are not able to forge a set of passport and signature that can maintain the DNN inference performance.  
 
The downside of this scheme is the requirement to use passports during inferencing, which leads to extra computational cost by about 10\% (see Sect. \ref{subsect:networkcomplex}).  Also the distribution of passports to the end-users is intrusive and imposes additional responsibility of guarding the passports safely. 

\begin{algorithm}[t]
	\caption{Forward pass of a passport layer using scheme $\mathcal{V}_1$}\label{alg:forwardv1}
	\begin{algorithmic}[1]
		\Procedure{forward $\mathcal{V}_1$}{$X_c$, $W_p$, $P_\gamma$, $P_\beta$}
		\State $\gamma \gets Avg(W_p * P_\gamma)$
		\State $\beta \gets Avg(W_p * P_\beta)$
		\State $X_p \gets W_p * X_c$
		\State $Y_p \gets \gamma * O(X_p) + \beta$ \Comment{O is a linear transformation such as BatchNorm}
		\State \Return $Y_p$
		\EndProcedure
	\end{algorithmic}
\end{algorithm}

\begin{algorithm}[ht]
	\caption{Training step for scheme $\mathcal{V}_1$}\label{alg:trainingv1}
	\begin{algorithmic}[1]
		\State initialize a passport model $M_s$ with desired number of passport layers, $N_{pass}$
		\State initialize passport keys $P$ in $M_s$
		\State encode desired \textit{signature} $s$ into binary to be embedded into signs of $\gamma_p$ of all passport layers 
		\For{number of training iterations} 
			\State sample minibatch of $m$ samples $X$ \{$X^{(1)}$, $\cdots$, $X^{(m)}$\} and targets $Y$ \{$Y^{(1)}$, $\cdots$, $Y^{(m)}$\}
			\If{enable backdoor}
				\State sample $t$ samples of $T$ and backdoor targets $Y_T$ \Comment{$t=2$, default by \cite{TurnWeakStrength_Adi2018arXiv}}
				\State concatenate $X$ with $T$, $Y$ with $Y_T$
			\EndIf			
			\State compute cross-entropy loss $L_c$ using $X$ and $Y$
			\For{$l$ in $N_{pass}$}
				\State compute sign loss $R^l$ using $s^l$ and $\gamma_p^l$
			\EndFor
			\State $R \gets \sum_{l}^{N_{pass}} R^l$
			\State compute combined loss $L$ using $L_c$ and $R$
			\State backpropagate using $L$ and update $M_p$
		\EndFor
	\end{algorithmic}
\end{algorithm}

\begin{table*}[t]
	\resizebox{\textwidth}{!}{%
		\begin{tabular}{|c|c|c|c||c|c|c|c|}
			\hline
			& \textbf{\begin{tabular}[c]{@{}c@{}}Passport \\ used  \\ \end{tabular}} 
			%& \begin{tabular}[c]{@{}c@{}}Passport \\ used for \\ verification\end{tabular} 
			& \begin{tabular}[c]{@{}c@{}}Trigger set \\ used for \\ verification\end{tabular} 
			& \begin{tabular}[c]{@{}c@{}}Weights \\ needed for \\ verification\end{tabular} 
			& \textbf{\begin{tabular}[c]{@{}c@{}}Multi-task\\ Learning\end{tabular}} 
			& $E$ 
			& $M$ for $F$ %(\ref{eq:dpp-formula}) 
			& $V$ \\ \hline
			$\mathcal{V}_1$ & $\left\{ \begin{array}{cc}
			\text{ Yes }, & \text{ inf. },  \\
			\text{ Yes }, & \text{verif.}  \\
			\end{array} \right.$  & No & Yes & No & $\mathbb{N}[\mathbf{W},\mathbf{s}_e]$ & $\left\{ \begin{array}{cc}
			\mathcal{M}_{\mathbf{s}_e}, & \text{if }\mathbf{s}_t=\mathbf{s}_e,  \\
			\overline{\mathcal{M}}_{\mathbf{s}_e}, & \text{otherwise}.  \\
			\end{array} \right.$ & $V( \mathbb{N}[\mathbf{W},\mathbf{s}_e] )$ \\ \hline
			
			$\mathcal{V}_2$ & $\left\{ \begin{array}{cc}
			\text{ No  }, & \text{ inf. },  \\
			\text{ Yes }, & \text{verif.}  \\
			\end{array} \right.$ &  No & Yes & Yes & $\left\{ \begin{array}{cc}
			\mathbb{N}[\mathbf{W}], & \text{ inf. },  \\
			\mathbb{N}[\mathbf{W},\mathbf{s}_e], & \text{verif.}  \\
			\end{array} \right.$ & $\left\{ \begin{array}{cc}
			\mathcal{M}_{\mathbf{s}_e}, & \text{ inf. },  \\
			\mathcal{M}_{\mathbf{s}_e}, & \text{if }\mathbf{s}_t=\mathbf{s}_e,  \\
			\overline{\mathcal{M}}_{\mathbf{s}_e}, & \text{otherwise}.  \\
			\end{array} \right.$ &  $\left\{ \begin{array}{cc}
			\text{ Not needed }, & \text{ inf. },  \\
			V( \mathbb{N}[\mathbf{W},\mathbf{s}_e]), & \text{verif.}  \\
			\end{array} \right.$ \\ \hline
			
			$\mathcal{V}_3$ & $\left\{ \begin{array}{cc}
			\text{ No  }, & \text{ inf. },  \\
			\text{ Yes }, & \text{verif.}  \\
			\end{array} \right.$ &  Yes & $\left\{ \begin{array}{cc}
			\text{ No  }, & \text{ verif.T },  \\
			\text{ Yes }, & \text{ verif.P }  \\
			\end{array} \right.$  & Yes &  $\left\{ \begin{array}{cc}
			\mathbb{N}[\mathbf{W}], & \text{ inf. },  \\
			\mathbb{N}[\mathbf{W},\mathbf{T,s}_e], & \text{verif.}  \\
			\end{array} \right.$ & $\left\{ \begin{array}{cc}
			\mathcal{M}_{\mathbf{s}_e}, & \text{ inf. },  \\
			\mathcal{M}_{\mathbf{s}_e}, & \text{if }\mathbf{s}_t=\mathbf{s}_e,  \\
			\overline{\mathcal{M}}_{\mathbf{s}_e}, & \text{otherwise}.  \\
			\end{array} \right.$ &  $\left\{ \begin{array}{cc}
			\text{ Not needed}, & \text{ inf. },  \\
			V( \mathbb{N}[\mathbf{W},\mathbf{T}_e]), & \text{verif.T}  \\
			V( \mathbb{N}[\mathbf{W},\mathbf{s}_e]), & \text{verif.P}  \\
			\end{array} \right.$  \\ \hline
		\end{tabular}%
	}
	\caption{A comparison of the features of the three passport-based ownership verification schemes depicted in Section \ref{sect:ownByPass}. 
		See Definition (\ref{def:ovs}) for process $E,F,V$ and Eq. (\ref{eq:dpp-formula}) for the proposed DNN model performances $M$. 
		Notations: "inf." is network inference; "verif" is ownership verification; "verif.P" is verification by passport (white-box); "verif.T" is by trigger set samples (black-box). 
		\label{Tab:OVS-123}}
%	\vspace{-10pt}
\end{table*}

%%%%V2%%%%%%%%%%%%%%%%%%%%%%%%%%%%%%%%%%%%%%%%%%%%%%%%
\textbf{$\mathcal{V}_2$: Private passport is embedded but not distributed} 

Herein, the \textit{learning} process aims to simultaneously achieve \textit{two goals}, of which the first is to minimize the original task loss (e.g. classification accuracy discrepancy) when \textit{no passport} layers included; and the second is to  minimize the combined loss function (Eq. \ref{eq:comb_loss}) with passports regularization included. Algorithm-wise, this \textit{multi-task learning} is achieved by alternating between the minimization of these two goals. The successfully trained DNN model is then distributed to end-users, who may perform network inference \textit{without the need of passports}. Note that this is possible since passport layers are not included in the distributed networks. The ownership verification will be carried out only upon requested by the law enforcement, by adding the passport layers to the network in question and detecting the embedded sign signatures with unyielding the original network inference performances. 

Compared with scheme $\mathcal{V}_1$, this scheme is easy to use for end-users since no passport is needed and no extra computational cost is incurred. In the meantime, this ownership verification is robust to removal attacks as well as ambiguity attacks. The downside, however, is the requirement to access the DNN weights and to append the passport layers for ownership verification, i.e. the disadvantages of white-box protection mode as discussed in \cite{TurnWeakStrength_Adi2018arXiv}.  Therefore, we propose to combine it with trigger-set based verification that will be described next.

%%%%V3%%%%%%%%%%%%%%%%%%%%%%%%%%%%%%%%%%%%%%%%%%%%%%%%
\textbf{$\mathcal{V}_3$: Both the private passport and trigger set are embedded but not distributed}

This scheme only differs from scheme $\mathcal{V}_2$ in that, a set of trigger images is embedded in addition to the embedded passports. The advantage of this, as discussed in \cite{TurnWeakStrength_Adi2018arXiv} is to probe and claim ownership of the suspect DNN model through remote calls of service APIs. This capability allows one, first to claim the ownership in a black-box mode, followed by reassertion of ownership with passport verification in a white box mode. Algorithm-wise, the embedding of trigger set images is jointly achieved in the same minimization process that embeds passports in scheme $\mathcal{V}_2$. Finally, it must be noted that the embedding of passports in both  $\mathcal{V}_2$ and  $\mathcal{V}_3$ schemes are implemented through \textit{multi-task learning tasks} where we adopted group normalisation \cite{wu2018group} instead of batch normalisation \cite{ioffe2015batch} that is not applicable due to its dependency on running average of batch-wise training samples.

%%%%%Ownership Algorithms%%%%%%%%%%%%%%%%%%%%%%%%%%%%%%%%%%%%%%%%%%%%%%%%%%%%%%%%%%%%%%%%%%%%%%%%%%%%%%%
\subsubsection{Algorithms}

Pseudo-code of the three verification schemes are illustrated in this section. For reproducibility of this work, we will make publicly available all source codes as well as the training / test datasets that are used in this paper, together with the camera-ready of the submission should the manuscript be accepted. 

\begin{algorithm}[t]
	\caption{Forward pass of a passport layer using scheme $\mathcal{V}_2$ and $\mathcal{V}_3$}\label{alg:forwardv23}
	\begin{algorithmic}[1]
		\Procedure{forward $\mathcal{V}_{23}$}{$X_c$, $W_p$, $P_\gamma$, $P_\beta$, $\gamma_{publ}$ $\beta_{publ}$, $idx$}
		\If {$idx = 0$}
		\State $X_p \gets W_p * X_c$
		\State $Y_p \gets \gamma_{publ} * O(X_p) + \beta_{publ}$ \Comment{$\gamma_{publ}$ and $\beta_{publ}$ is a public parameter}
		\State \Return $Y_p$
		\Else
		\State \Return \Call{forward $\mathcal{V}_1$}{$X_c$, $W_p$, $P_\gamma$, $P_\beta$}
		\EndIf
		\EndProcedure
	\end{algorithmic}
\end{algorithm}

\begin{algorithm}[t]
	\caption{Sign Loss}\label{alg:signloss}
	\begin{algorithmic}[1]
		\Procedure{sign loss}{$B^l$, $W_p^l$, $P_\gamma^l$, $\gamma_0$}
		\State $\gamma^l \gets Avg(W_p^ * P_\gamma^l)$
		% \State \textit{info$^l$} $\gets sign(\gamma^l)$
		\State \textit{loss} $\gets max(\gamma_0 - \gamma^l * B^l, 0)$ \Comment{$\gamma_0$ is a positive constant, equals 0.1 as by default}
		\State \Return loss
		\EndProcedure
	\end{algorithmic}
\end{algorithm}

\begin{algorithm}[t]
	\caption{Signature detection}
	\label{alg:signdet}
	\begin{algorithmic}[1]
		\Procedure{signature detection}{$W_p$, $P_\gamma$}
		\State $\gamma \gets Avg(W_p * P_\gamma)$
		\State \textit{signature} $\gets sign(\gamma)$
		\State convert \textit{signature} into binary
		\State decode binarized \textit{signature} into desired format e.g. ascii
		\State match decoded \textit{signature} with target signature
		\EndProcedure
	\end{algorithmic}
\end{algorithm}

Using Algorithm \ref{alg:signdet}, we can extract a binarized version of \textit{signature} $s$ where positive $\gamma$ is $1$ and negative $\gamma$ is $0$ from model $M_p$. We can then decode $s$ into desired format such as ASCII code. Finally, we can claim ownership of the model $M_p$.

%%%%%Ownership Algorithms - Multi-learning%%%%%%%%%%%%%%%%%%%%%%%%%%%%%%%%%%%%%%%%%%%%%%%%%%%%%%%%%%%%%%%%%%%%%%%
\subsubsection{Multi-task learning with private passports and/or trigger set images}\label{append:MultitaskLearn}

The multi-task learning algorithms used for embedding passports in schemes $\mathcal{V}_2$ and $\mathcal{V}_3$ are summarized in Algorithm \ref{alg:trainingmulti} which is similar to Algorithm \ref{alg:trainingv1}. 

It must be noted that the practical choice of formula (Eq. \ref{eq:pass-beta}) is inspired by the well-known \textit{Batch Normalization} (BN) layer  which essentially applies the channel-wise  linear transformation to the inputs\footnote{Sergey Ioffe, Christian Szegedy, ``Batch Normalization: Accelerating Deep Network Training by Reducing Internal Covariate Shift'', ICML2015, pp. 448-456.}. Nevertheless BN is not applicable to multi-task learning tasks because of  its dependency on running average of batch-wise training samples. When BN is used for multi-task learning, the test accuracy is significantly reduced even though the training accuracy seems optimized.   We therefore adopted \textit{group normalization} (GN)  in the baseline DNN  model for schemes $\mathcal{V}_2$ and $\mathcal{V}_3$ reported in Table \ref{tab:finetune-passport}\footnote{Yuxin Wu, Kaiming He, ``Group Normalization'', ECCV2018, pp. 3-19.}.

\begin{algorithm}[ht]
	\caption{Training step for scheme $\mathcal{V}_2$ and $\mathcal{V}_3$}\label{alg:trainingmulti}
	\begin{algorithmic}[1]
		\State initialize a passport model $M_s$ with desired number of passport layers, $N_{pass}$
		\If{enable trigger set} \Comment{for scheme $\mathcal{V}_3$}
			\State initialize trigger sets $T$
			\State initialize passport keys $P$ in $M_s$ using $T$
		\Else
			\State initialize passport keys $P$ in $M_s$
		\EndIf
		\State encode desired \textit{signature} $s$ into binary to be embedded into signs of $\gamma_p$ of all passport layers 
		\For{number of training iterations} 
			\State sample minibatch of $m$ samples $X$ \{$X^{(1)}$, $\cdots$, $X^{(m)}$\} and targets $Y$ \{$Y^{(1)}$, $\cdots$, $Y^{(m)}$\}
			\If{enable backdoor}
				\State sample $t$ samples of $T$ and backdoor targets $Y_T$ \Comment{$t=2$, default by \cite{TurnWeakStrength_Adi2018arXiv}}
				\State concatenate $X$ with $T$, $Y$ with $Y_T$
			\EndIf			
			\For{$idx$ in 0 1}
				\If{$idx=0$}
					\State compute cross-entropy loss $L_c$ using $X$, $Y$ and $\gamma_{publ}$
				\Else
					\State compute cross-entropy loss $L_c$ using $X$ and $Y$
					\For{$l$ in $N_{pass}$}
						\State compute sign loss $R^l$ using $s^l$ and $\gamma_p^l$
					\EndFor
				\EndIf
			\EndFor
			\State $R \gets \sum_{l}^{N_{pass}} R^l$
			\State compute combined loss $L$ using $L_c$ and $R$
			\State backpropagate using $L$ and update $M_p$
		\EndFor
	\end{algorithmic}
	%\par\noindent\rule{\textwidth}{0.4pt}
		%\vspace{+430pt}
\end{algorithm}

%%%%Experiment Results%%%%%%%%%%%%%%%%%%%%%%%%%%%%%%%%%%%%%%%%%%%%%%%%%%%%%%%%%%%%%%%%%%%%%%%%%%%%%%%%%%%%%%%%%%
\section{Experiment results}
\label{sect:exper}

This section illustrates the empirical study of passport-based DNN models, with focuses on \textit{convergence} and \textit{effectiveness} of passport layers. Whereas the inference performances of various schemes are also compared in terms of \textit{robustness} to both removal attacks and ambiguity attacks. The network architectures we investigated include the well-known AlexNet and ResNet-18 and in order to avoid confusion to the original AlexNet and ResNet models, we denote AlexNet$_{\mathbf{p}}$ and ResNet$_{\mathbf{p}}$-18 as our proposed passport-based DNN models. Table \ref{tab:network-arch}-\ref{tab:train-params} show the detailed architecture and hyper-parameters for both AlexNet$_{\mathbf{p}}$ and ResNet$_{\mathbf{p}}$-18 that employed in all the experiments, unless stated wise. Two publicly datasets - CIFAR10 and CIFAR100 classification tasks are employed because these medium-sized public datasets allow us to perform extensive tests of the DNN model performances. Unless stated otherwise, all experiments are repeated 5 times and tested against 50 fake passports to get the mean inference performance. %Also, to avoid confusion to the original AlexNet and ResNet models, we denote AlexNet$_{\mathbf{p}}$ and ResNet$_{\mathbf{p}}$-18 as our proposed passport-based DNN models.% For a more comprehensive result, we recommend reader to our extended version \cite{fan2019rethinking}.

\begin{table*}[ht]
	\centering
	\resizebox{0.48\textwidth}{!}{
		\begin{tabular}[t]{c|c|c|c}
			\toprule
			layer name & output size & weight shape & padding \\ 
			\hline
			Conv1 & 32 $\times$ 32 & 64 $\times$ 3 $\times$ 5 $\times$ 5 & 2 \\
			MaxPool2d & 16 $\times$ 16 & 2 $\times$ 2 &  \\
			Conv2 & 16 $\times$ 16 & 192 $\times$ 64 $\times$ 5 $\times$ 5 & 2 \\
			Maxpool2d & 8 $\times$ 8 & 2 $\times$ 2 &  \\
			Conv3 & 8 $\times$ 8 & 384 $\times$ 192 $\times$ 3 $\times$ 3 & 1 \\
			Conv4 & 8 $\times$ 8 & 256 $\times$ 384 $\times$ 3 $\times$ 3 & 1 \\
			Conv5 & 8 $\times$ 8 & 256 $\times$ 256 $\times$ 3 $\times$ 3 & 1 \\
			MaxPool2d & 4 $\times$ 4 & 2 $\times$ 2 &  \\
			Linear & 10 & 10 $\times$ 4096 &  \\ \bottomrule
		\end{tabular}
	} \hfill
	\resizebox{0.48\textwidth}{!}{
		\begin{tabular}[t]{c|c|c|c}
			\toprule
			layer name & output size & weight shape & padding \\ 
			\hline
			Conv1 & 32 $\times$ 32 & 64 $\times$ 3 $\times$ 3 $\times$ 3 & 1 \\ 
			\hline
			Conv2\_x & 32 $\times$ 32 & $\begin{bmatrix}
				64 \times 64 \times 3 \times 3 \\
				64 \times 64 \times 3 \times 3 \\
			\end{bmatrix} \times 2$ & 1 \\
			\hline
			Conv3\_x & 16 $\times$ 16 & $\begin{bmatrix}
				128 \times 128 \times 3 \times 3 \\
				128 \times 128 \times 3 \times 3 \\
			\end{bmatrix} \times 2$ & 1 \\
			\hline
			Conv4\_x & 8 $\times$ 8 & $\begin{bmatrix}
				256 \times 256 \times 3 \times 3 \\
				256 \times 256 \times 3 \times 3 \\
			\end{bmatrix} \times 2$ & 1 \\
			\hline
			Conv5\_x & 4 $\times$ 4 & $\begin{bmatrix}
				512 \times 512 \times 3 \times 3 \\
				512 \times 512 \times 3 \times 3 \\
			\end{bmatrix} \times 2$ & 1 \\
			\hline
			Average pool & 1 $\times$ 1 & 4 $\times$ 4 &  \\
			\hline
			Linear & 10 & 10 $\times$ 512 &  \\ \bottomrule
		\end{tabular}
	}
	\caption{(Left:) AlexNet$_{\mathbf{p}}$ architecture. (Right): ResNet$_{\mathbf{p}}$-18 architecture}
	\label{tab:network-arch}
\end{table*}

\begin{table*}[ht]
	\resizebox{\textwidth}{!}{
		\begin{tabular}{l|cc}
			\hline
			Hyper-parameter & AlexNet$_{\mathbf{p}}$ & ResNet$_{\mathbf{p}}$-18 \\ \hline
			Activation function & ReLU & ReLU \\
			Optimization method & SGD & SGD \\
			Momentum & 0.9 & 0.9 \\
			Learning rate & 0.01, 0.001, 0.0001$\dagger$ & 0.01, 0.001, 0.0001$\dagger$ \\
			Batch size & 64 & 64 \\
			Epochs & 200 & 200 \\
			Learning rate decay & 0.001 at epoch 100 and 0.0001 at epoch 150 & 0.001 at epoch 100 and 0.0001 at epoch 150 \\
			Weight Initialization & \cite{he2015delving} & \cite{he2015delving} \\
			Passport Layers & Conv3,4,5 & Conv5\_x \\
			\hline
	\end{tabular}}
	\caption{Training parameters for AlexNet$_{\mathbf{p}}$ and ResNet$_{\mathbf{p}}$-18, respectively ($\dagger$ the learning rate is scheduled as 0.01, 0.001 and 0.0001 between epochs [1-100], [101-150] and [151-200] respectively). }
	\label{tab:train-params}
\end{table*}

\begin{table*}[t]%[h]
	\centering
	\begin{tabular}{|l|c|c|c|c|}
		\hline
		& Trigger Set Detection & To CIFAR10 & To CIFAR100 & To Caltech-101 \\ \hline
		AlexNet CIFAR10 & 100\% & - & 24.67\% & 57.67\% \\
		AlexNet CIFAR100 & 100\% & 36.00\% & - & 78.67\% \\
		\hline \hline
		ResNet CIFAR10 & 100\% & - & 12.50\% & 13.67\% \\
		ResNet CIFAR100 & 100\% & 6.33\% & - & 4.67\% \\
		\hline
	\end{tabular}
	\caption{Detection rate of the trigger set images (before and after fine-tuning) used in scheme $\mathcal{V}_3$ to complement passport-based verifications.}
	\label{fg:trigger-set-scheme-3}
	%\vspace{-15pt}
\end{table*}

%%%%Convergence%%%%%%%%%%%%%%%%%%%%%%%%%%%%%%%%%%%%%%%%%%%%%%%%%%%%%%%%%%%%%%%%%%%%%%%%%%%%%%%%%%%%%%%%%%
\subsection{Convergence}
The introduction of the proposed passport layers does not hinder the convergence of DNN learning process. As shown in Figure \ref{fg:convergences-pass}, we observe that the test accuracies converge in synchronization with the network weights, and computed linear transformation parameters $\gamma$ and $\beta$  which all stagnate in the later learning phase when the learning rate is reduced from 0.01 to 0.0001.

\begin{figure*}[ht]%[th]
	\centering
	\begin{subfigure}{0.20\linewidth}
	\centering
		{\includegraphics[keepaspectratio=true, scale=0.35]{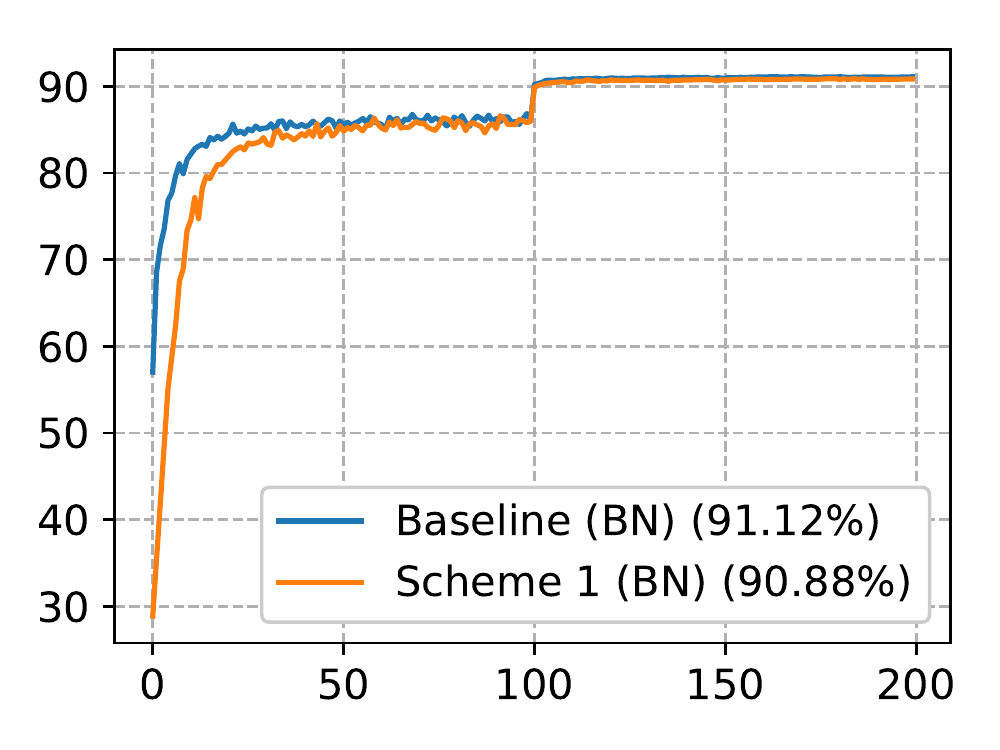}} % ../imgs/oscillation/test_acc.eps
		\caption{Test accuracies }
	\end{subfigure} 	
	\begin{subfigure}{0.20\linewidth}
		\centering
		\includegraphics[keepaspectratio=true, scale=0.35]{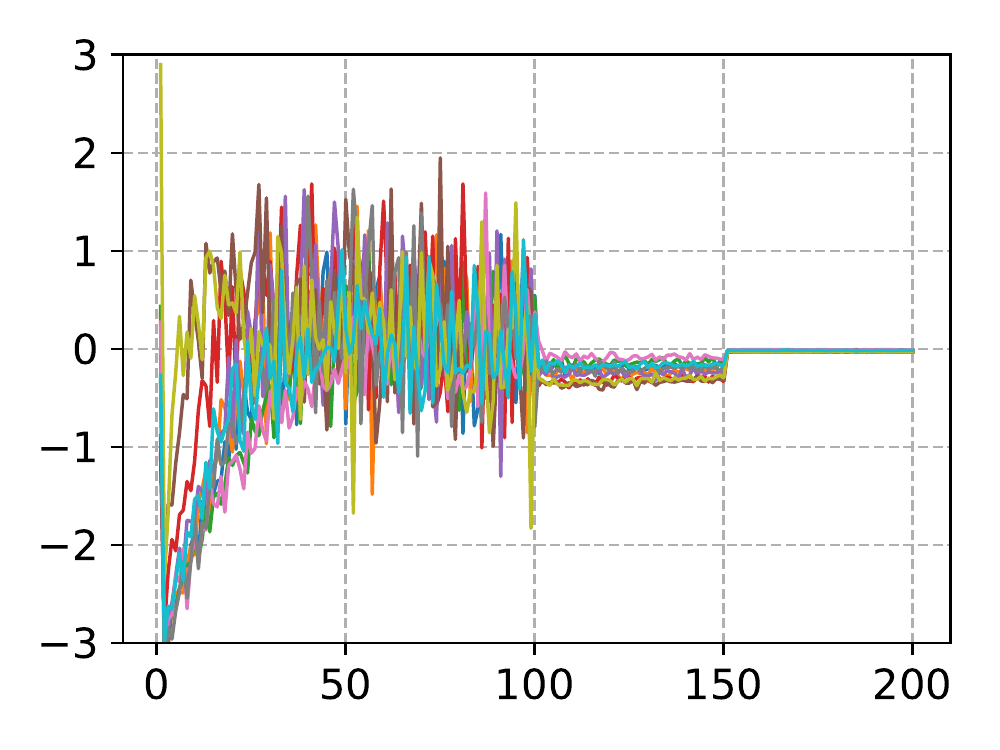}
		\caption{Weights update}
	\end{subfigure} 
	\begin{subfigure}{0.20\linewidth}
		\centering
		\includegraphics[keepaspectratio=true, scale=0.35]{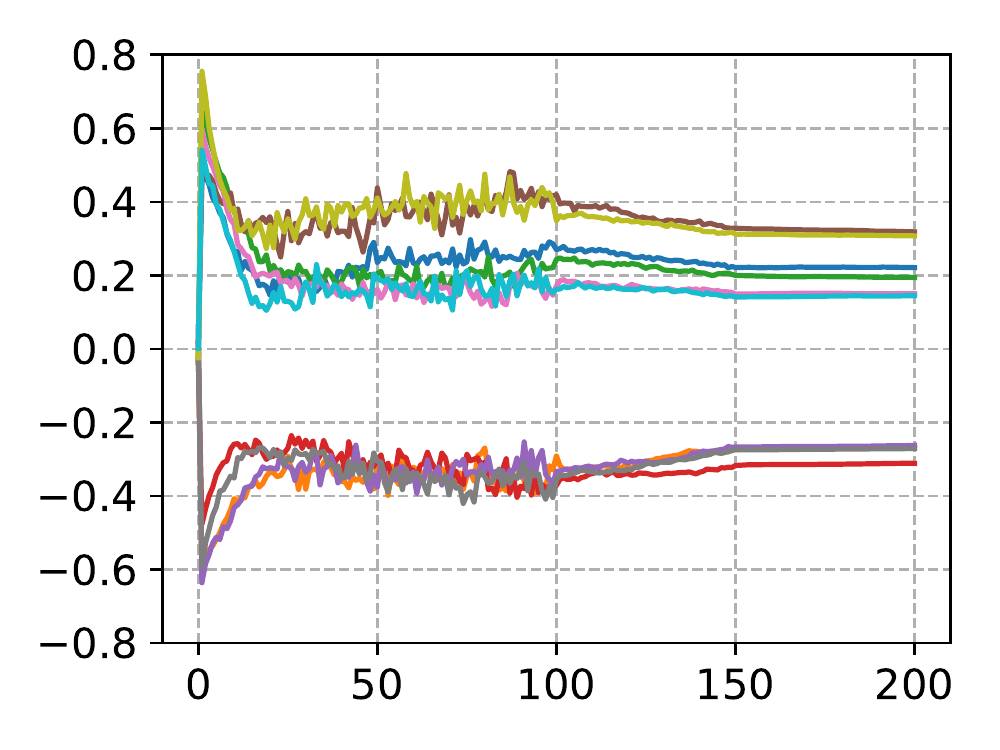}
		\caption{Scale factors}
	\end{subfigure} 
	\begin{subfigure}{0.2\linewidth}
		\centering
		\includegraphics[keepaspectratio=true, scale=0.35]{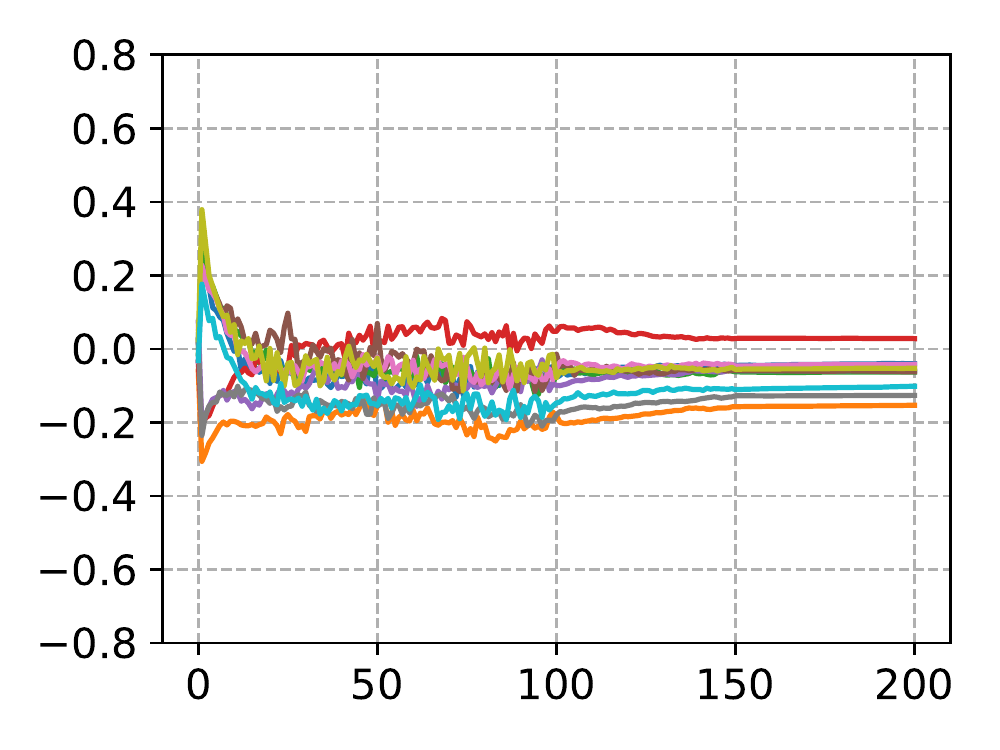}
		\caption{Bias terms}
	\end{subfigure}
	\hfill
	\caption{(a) Convergences of test accuracies, (b) weight updates, (c) scale factors, and (d) bias terms of first 10 channels in Conv4 of AlexNet$_{\mathbf{p}}$. x-axis: training epochs; y-axis: see captions of subfigures.}
	\label{fg:convergences-pass}
	%\vspace{-10pt}
\end{figure*}

\begin{figure*}[ht]
	\centering
	\begin{subfigure}{0.3\linewidth}
		\centering
		\includegraphics[keepaspectratio=true, scale=0.45]{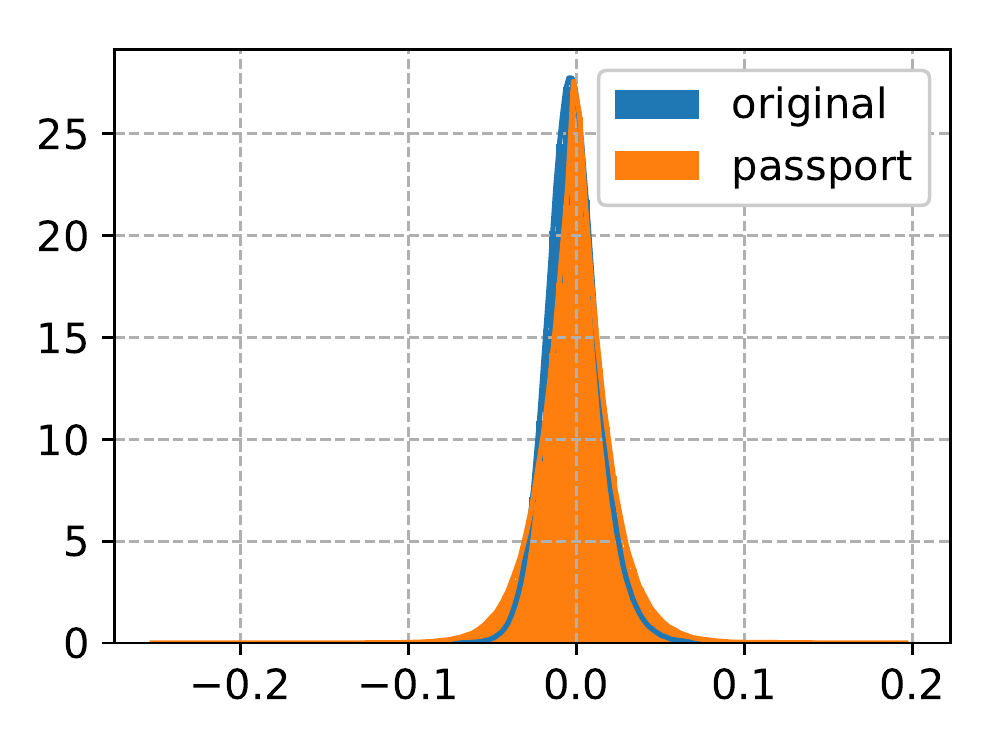}
		\caption{Weight distribution}
		\label{fg:w-distr}
	\end{subfigure} 
	\begin{subfigure}{0.3\linewidth}
		\centering
		\includegraphics[keepaspectratio=true, scale=0.45]{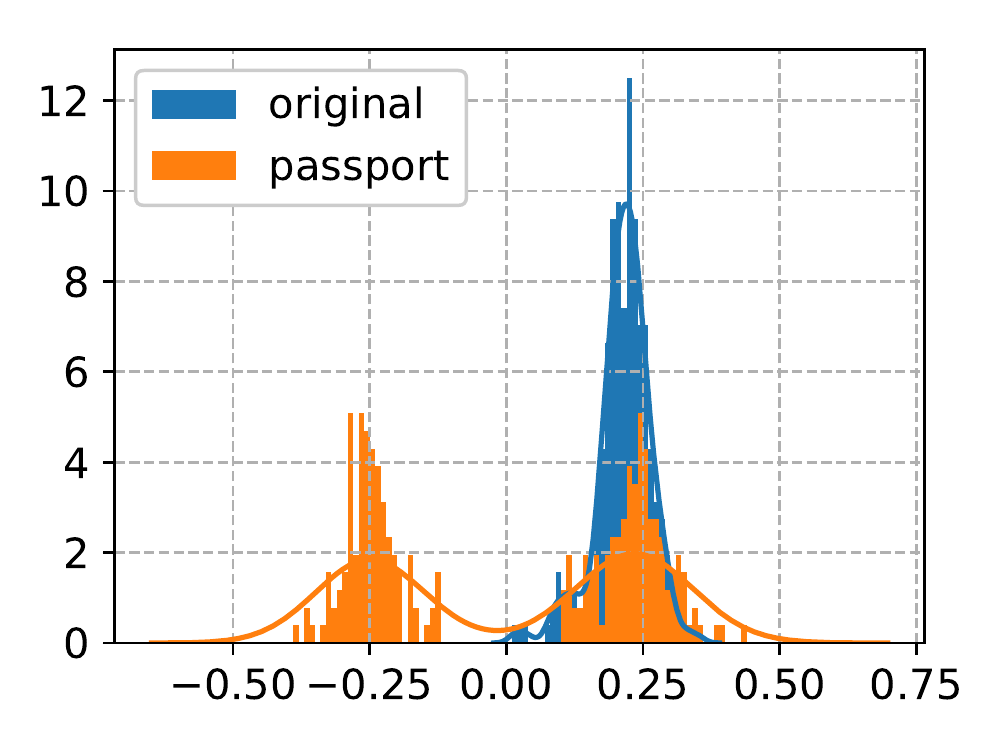}
		\caption{Scale factor distribution}
		\label{fg:sc-distr}
	\end{subfigure} 
	\begin{subfigure}{0.3\linewidth}
		\centering
		\includegraphics[keepaspectratio=true, scale=0.45]{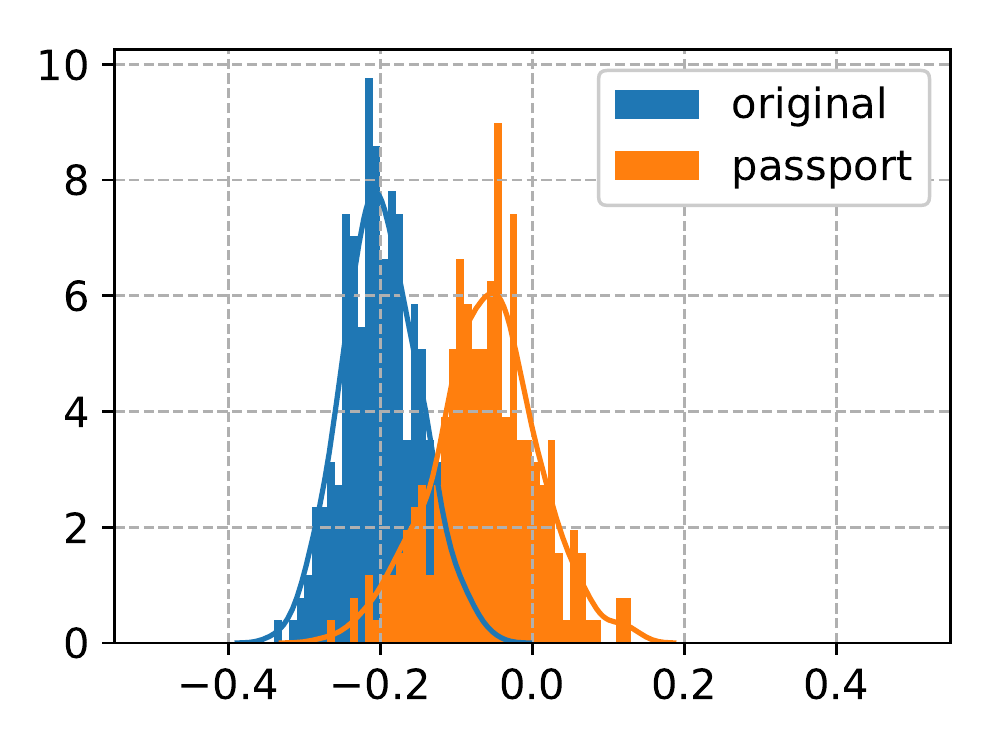}
		\caption{Bias term distribution}
		\label{fg:bias-distr}
	\end{subfigure}
	\caption{Comparison of the distributions of (a) network weights, (b) scale factors, and (c) bias terms between the original and passport DNN (Conv4 of AlexNet$_{\mathbf{p}}$)}
	\label{fg:comparison-dist-pass-vs-no-pass}
\end{figure*}

\begin{figure}[t]
	\begin{subfigure}[h]{0.48\linewidth}
		\centering
		\includegraphics[keepaspectratio=true, scale = 0.31]{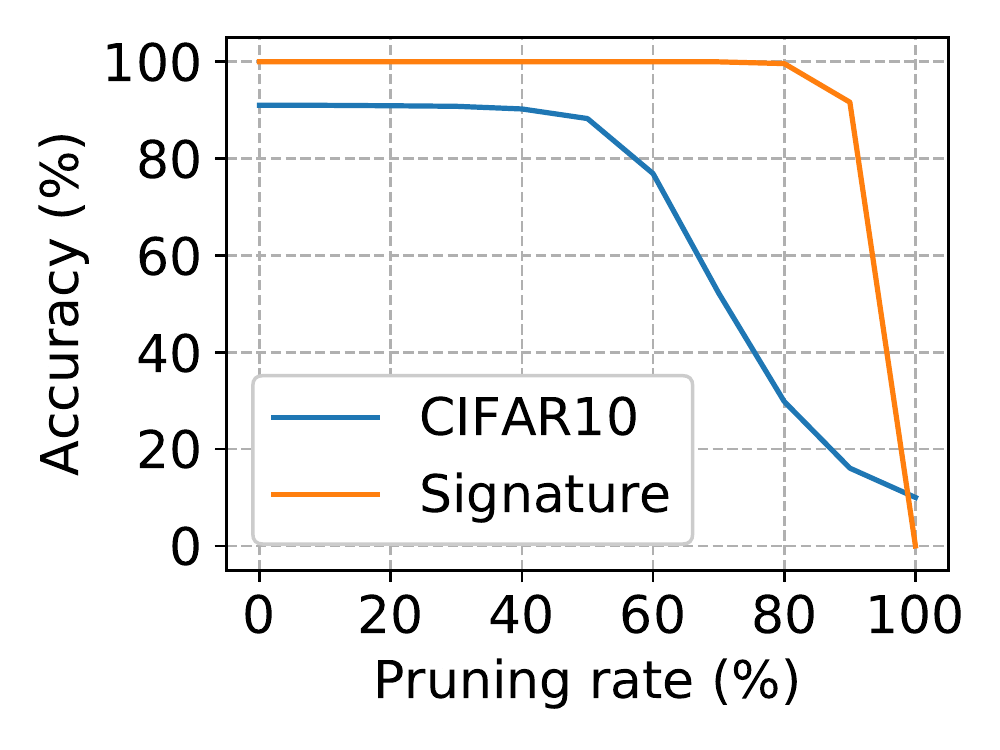}
		\hfill
		\includegraphics[keepaspectratio=true, scale = 0.31]{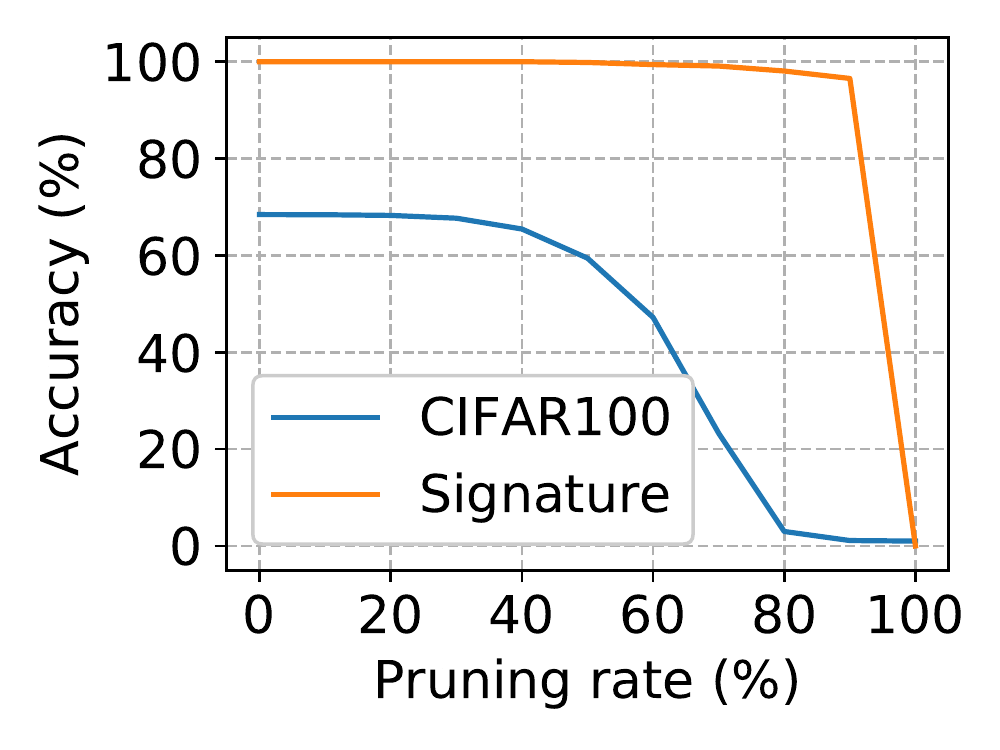}
		\caption{AlexNet$_{\mathbf{p}}$}
	\end{subfigure} \hfill
	\begin{subfigure}[h]{0.48\linewidth}
		\centering
		\includegraphics[keepaspectratio=true, scale = 0.31]{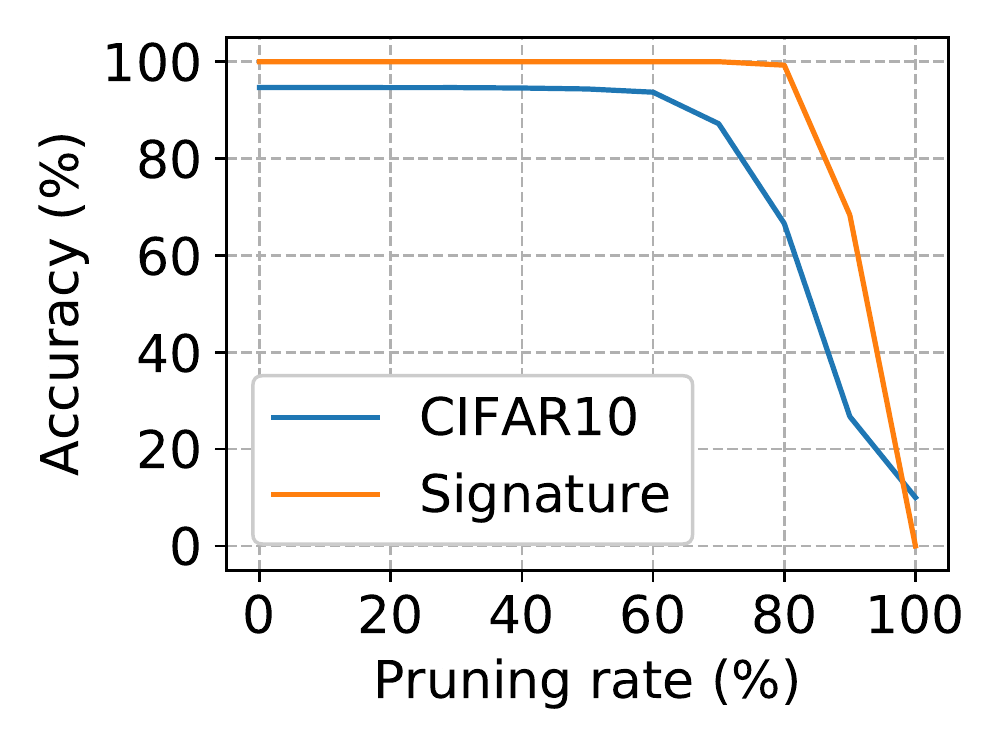}
		\hfill
		\includegraphics[keepaspectratio=true, scale = 0.31]{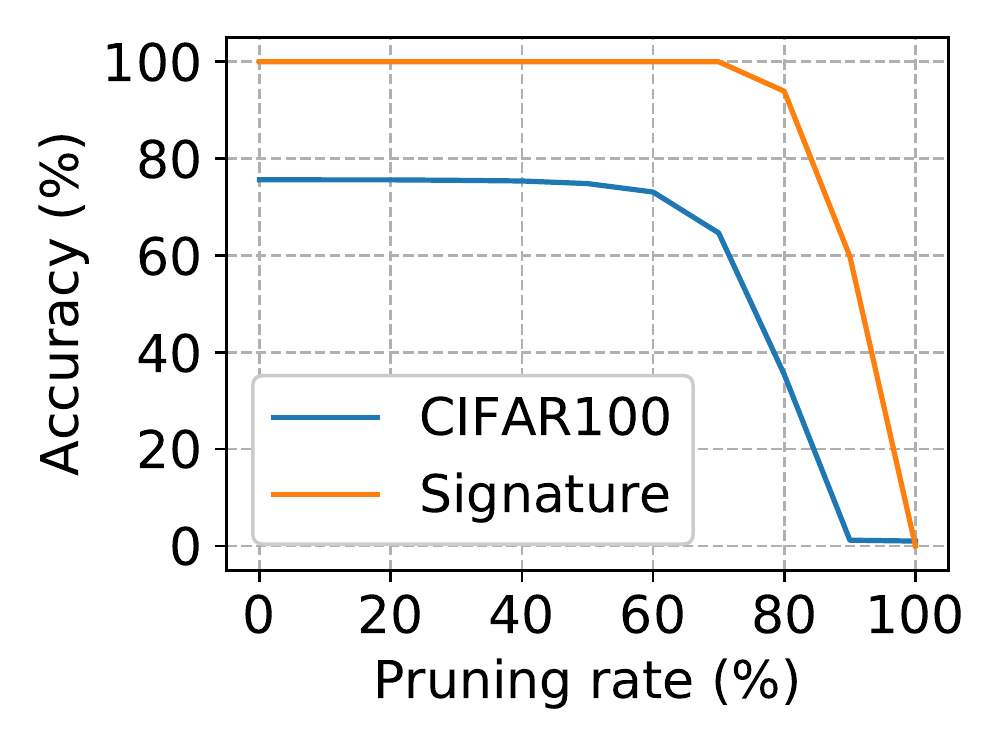}
		\caption{ResNet$_{\mathbf{p}}$-18}
	\end{subfigure}
	\caption{Removal Attack (Model Pruning): Classification accuracy of our passport-based DNN models on both CIFAR10/CIFAR100 and signature detection accuracy against different pruning rates.}
	\label{fg:robust-pass-pruning}
\end{figure}

%%%%Effectiveness%%%%%%%%%%%%%%%%%%%%%%%%%%%%%%%%%%%%%%%%%%%%%%%%%%%%%%%%%%%%%%%%%%%%%%%%%%%%%%%%%%%%%%%%%%
\subsection{Effectiveness}

With the introduction of the passport layers, we essentially separate the DNN parameters into two types: the \textit{public} convolution layer parameters $\mathbf{W}$ and the \textit{hidden}\footnote{In this work, traditional hidden layer parameters are considered as public parameters.} passport layer - i.e. scale factor $\gamma$ and bias terms $\beta$ (see Eq. (\ref{eq:dpp-formula})). The learning of each of these parameter types are different too.  On  one hand, the distribution of the convolution layer weights seems identical to that of the original DNN without passport layers (Figure \ref{fg:w-distr}).  However, we must emphasize that information about the passports are embedded into weights $ \mathbf{W}$ in the sense that following constraints are enforced once the learning is done: 
\begin{align}
\text{Avg}( \mathbf{W}^l_p * \mathbf{P}^l_{\gamma} ) = c^l_{\gamma}, \qquad %\\ 
\text{Avg}( \mathbf{W}^l_p * \mathbf{P}^l_{\beta} ) = c^l_{\beta}, 
\end{align}
where $c^l_{\gamma},c^l_{\beta}$ are two constants of converged parameters $\gamma^l, \beta^l$.

On the other hand, the distribution of the hidden parameters are affected by the adoption of sign loss (Eq. \ref{eq:sign-loss}). Clearly the scale factors are enforced to take either positive or negative values far from zero (Figure \ref{fg:sc-distr}). We also observe that the sign of scale factors remain rather persistent against various adversarial attacks. An additional benefit of enforcing non-zero magnitudes of scale factors is to ensure the non-zero channel outputs and slightly improve the performances. Correspondingly the distribution of  bias terms becomes more balanced with the sign loss regularization (Eq. \ref{eq:sign-loss}) included, whereas the original bias terms are mainly negative valued (Figure \ref{fg:bias-distr}).

\begin{table}[t]
	\resizebox{0.5\textwidth}{!}{
		\begin{tabular}{l|r|r|r|}
			\cline{2-4}
				& \multicolumn{3}{c|}{\bf CIFAR10} \\ \hline
			\multicolumn{1}{ |c| }{\bf AlexNet$_{\mathbf{p}}$}	& \multicolumn{1}{c|}{CIFAR10} &  \multicolumn{1}{c|}{CIFAR100} &  \multicolumn{1}{c|}{Caltech-101} \\ \hline \hline
			 Baseline (BN) & - (91.12) & - (65.53) & - (76.33) \\ \hline
			 Scheme $\mathcal{V}_1$ & 100 (90.91) & 100 (64.64) & 100 (73.03) \\ \hline \hline
			 Baseline (GN) & - (90.88) & - (62.17) & - (73.28) \\ \hline
			 Scheme $\mathcal{V}_2$ & 100 (89.44) & 99.91 (59.31) & 100 (70.87) \\ \hline
			 Scheme $\mathcal{V}_3$ & 100 (89.15) & 99.96 (59.41) & 100 (71.37) \\ \hline  \hline
			 \multicolumn{1}{ |c| }{\bf ResNet$_{\mathbf{p}}$-18}	&  &   &   \\ \hline \hline
			 Baseline (BN) & - (94.85) & - (72.62) & - (78.98) \\ \hline
			 Scheme $\mathcal{V}_1$ & 100 (94.62) & 100 (69.63) & 100 (72.13) \\ \hline \hline
			 Baseline (GN) & - (93.65) & - (69.40) & - (75.08) \\ \hline
			 Scheme $\mathcal{V}_2$ & 100 (93.41) & 100 (63.84) & 100 (71.07) \\ \hline
			 Scheme $\mathcal{V}_3$ & 100 (93.26) & 99.98 (63.61) & 99.99 (72.00) \\ \hline	
		\end{tabular}
	}
	\resizebox{0.5\textwidth}{!}{
		\begin{tabular}{l|r|r|r|}
			\cline{2-4}
				& \multicolumn{3}{c|}{\bf CIFAR100} \\ \hline
			\multicolumn{1}{ |c| }{\bf AlexNet$_{\mathbf{p}}$}	& \multicolumn{1}{c|}{CIFAR100} &  \multicolumn{1}{c|}{CIFAR10} &  \multicolumn{1}{c|}{Caltech-101} \\ \hline \hline
			 Baseline (BN) & - (68.26) & - (89.46) & - (79.66) \\ \hline
			 Scheme $\mathcal{V}_1$ & 100 (68.31) & 100 (89.07) & 100 (78.83) \\ \hline \hline
			 Baseline (GN) & - (65.09) & - (88.30) & - (78.08) \\ \hline
			 Scheme $\mathcal{V}_2$ & 100 (64.09) & 100 (87.47) & 100 (76.31) \\ \hline
			 Scheme $\mathcal{V}_3$ & 100 (63.67) & 100 (87.46) & 100 (75.89) \\ \hline \hline
			\multicolumn{1}{ |c| }{\bf ResNet$_{\mathbf{p}}$-18}	&  &   &   \\ \hline \hline
			 Baseline (BN) & - (76.25) & - (93.22) & - (82.88) \\ \hline
			 Scheme $\mathcal{V}_1$ & 100 (75.52) & 100 (95.28) & 99.99 (79.27) \\ \hline \hline
			 Baseline (GN) & - (72.06) & - (91.83) & - (79.15) \\ \hline
			 Scheme $\mathcal{V}_2$ & 100 (72.15) & 100 (90.94) & 100 (77.34) \\ \hline
			 Scheme $\mathcal{V}_3$ & 100 (72.10) & 100 (91.30) & 100 (77.46) \\ \hline	
		\end{tabular}
	}
	\caption{Removal Attack (Fine-tuning): Detection/Classification accuracy (in \%) of different passport networks where BN = batch normalisation and GN = group normalisation. (Left: trained with CIFAR10 and fine-tune for CIFAR100/Caltech-101. Right: trained with CIFAR100 and fine-tune for CIFAR10/Caltech-101.) Accuracy outside bracket is the signature detection rate, while in-bracket is the classification rate.}
	\label{tab:finetune-passport}
	%\vspace{-15pt}
\end{table}

%%%%Robustness%%%%%%%%%%%%%%%%%%%%%%%%%%%%%%%%%%%%%%%%%%%%%%%%%%%%%%%%%%%%%%%%%%%%%%%%%%%%%%%%%%%%%%%%%%
\subsection{Robustness against removal attacks}\label{subsect:exper-removal-attack}

%%%%Fine-tuning%%%%%%%%%%%%%%%%%%%%%%%%%%%%%%%%%%%%%%%%%%%%%%%%%%%%%%%%%%%%%%%%%%%%%%%%%%%%%%%%%%%%%%%%%%
\subsubsection{Fine-tuning}

In this experiment, we repeatedly trained each model five times with designated scale factor signs embedded into both AlexNet$_{\mathbf{p}}$  and ResNet$_{\mathbf{p}}$-18 networks. Table \ref{tab:finetune-passport} shows that the passport signatures are detected at near to 100\% accuracy for all the   ownership verification schemes in the original task. Even after fine-tuning the proposed DNN models for a new task (e.g. from CIFAR10 to Caltech-101), almost 100\% detection rates of the embedded passport are still maintained.  Note that a detected signature is claimed only {\it iff} all the binary bits are exactly matched. We ascribe this superior robustness to the unique controlling nature of the scale factors --- in case that a scale factor value is reduced near to zero, the channel output will be virtually zero, thus, its gradient will vanish and lose momentum to move towards to the opposite value. Empirically we have not observed counter-examples against this explanation\footnote{A rigorous proof of this argument is under investigation and will be reported elsewhere.}.  

Table \ref{fg:trigger-set-scheme-3} shows the trigger set image detection rate before and after fine-tuning. Note that passports are not used in this experiment, therefore, the detection rate of the trigger set labels deteriorated drastically after fine-tuning.  Nevertheless, trigger set images can still be used in scheme $\mathcal{V}_3$ to complement the white-box passport-based verification approach. 

%%%%Pruning%%%%%%%%%%%%%%%%%%%%%%%%%%%%%%%%%%%%%%%%%%%%%%%%%%%%%%%%%%%%%%%%%%%%%%%%%%%%%%%%%%%%%%%%%%
\subsubsection{Model pruning} 

The aim of model pruning is to reduce redundant parameters without compromise the performance. Here, we adopt the {\it class-blind} pruning scheme in \cite{see2016compression}, and test our proposed DNN models with different pruning rates. Figure \ref{fg:robust-pass-pruning} shows that, in general, our proposed DNN models still maintained near to 100\% accuracy even 60\% parameters are pruned, while the accuracy of testing data drops around 5\%-25\%. Even if we prune 90\% parameters, the accuracy of our proposed DNN models are still much higher than the accuracy of testing data. As said, we ascribe the robustness against model pruning to the superior persistence of signatures embedded in the scale factor signs (see Sect. \ref{sect:EncodeSignScale}).

%%%%Ambiguity Attack%%%%%%%%%%%%%%%%%%%%%%%%%%%%%%%%%%%%%%%%%%%%%%%%%%%%%%%%%%%%%%%%%%%%%%%%%%%%%%%%%%%%%%%%%%
\subsection{Resilience against ambiguity attacks}\label{subsect:exper-ambig-attack}

As shown in Fig. \ref{fg:modulate-performance}, the accuracy of our proposed DNN models trained on CIFAR10/100 classification task is significantly depending on the presence of either valid or counterfeit passports --- the proposed DNN models presented with valid passports demonstrated almost identical accuracies as to the original DNN model. Contrary, the same proposed DNN model presented with invalid passports (in this case of $fake_1$ = random attack) achieved only 10\% accuracy which is merely equivalent to a random guessing. In the case of $fake_2$, we assume that the adversaries have access to the original training dataset, and attempt to reverse-engineer the scale factor and bias term by freezing the trained DNN weights. It is shown that in Fig. \ref{fg:modulate-performance}, reverse-engineering attacks are only able to achieve, for CIFAR10, at best 84\% accuracy on AlexNet$_{\mathbf{p}}$ and 70\% accuracy on ResNet$_{\mathbf{p}}$-18. While in CIFAR100, for $fake_1$ case, attack on both our proposed DNN models achieved only 1\% accuracy; for $fake_2$ case, this attack only able to achieve 44\% accuracy for AlexNet$_{\mathbf{p}}$ and 35\% accuracy for ResNet$_{\mathbf{p}}$-18.

\begin{figure}[t]
	\centering
		\begin{subfigure}[h]{\linewidth}
		\centering
		\includegraphics[keepaspectratio=true, scale=0.5]{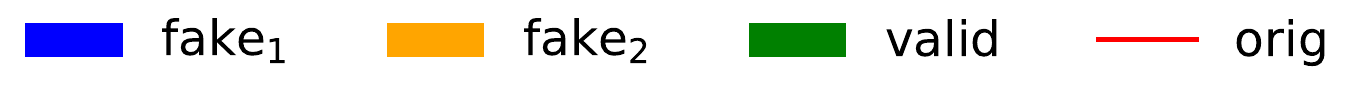}
	\end{subfigure}
	\begin{subfigure}[t]{0.48\linewidth}
		\centering
		\includegraphics[keepaspectratio=true, scale=0.23]{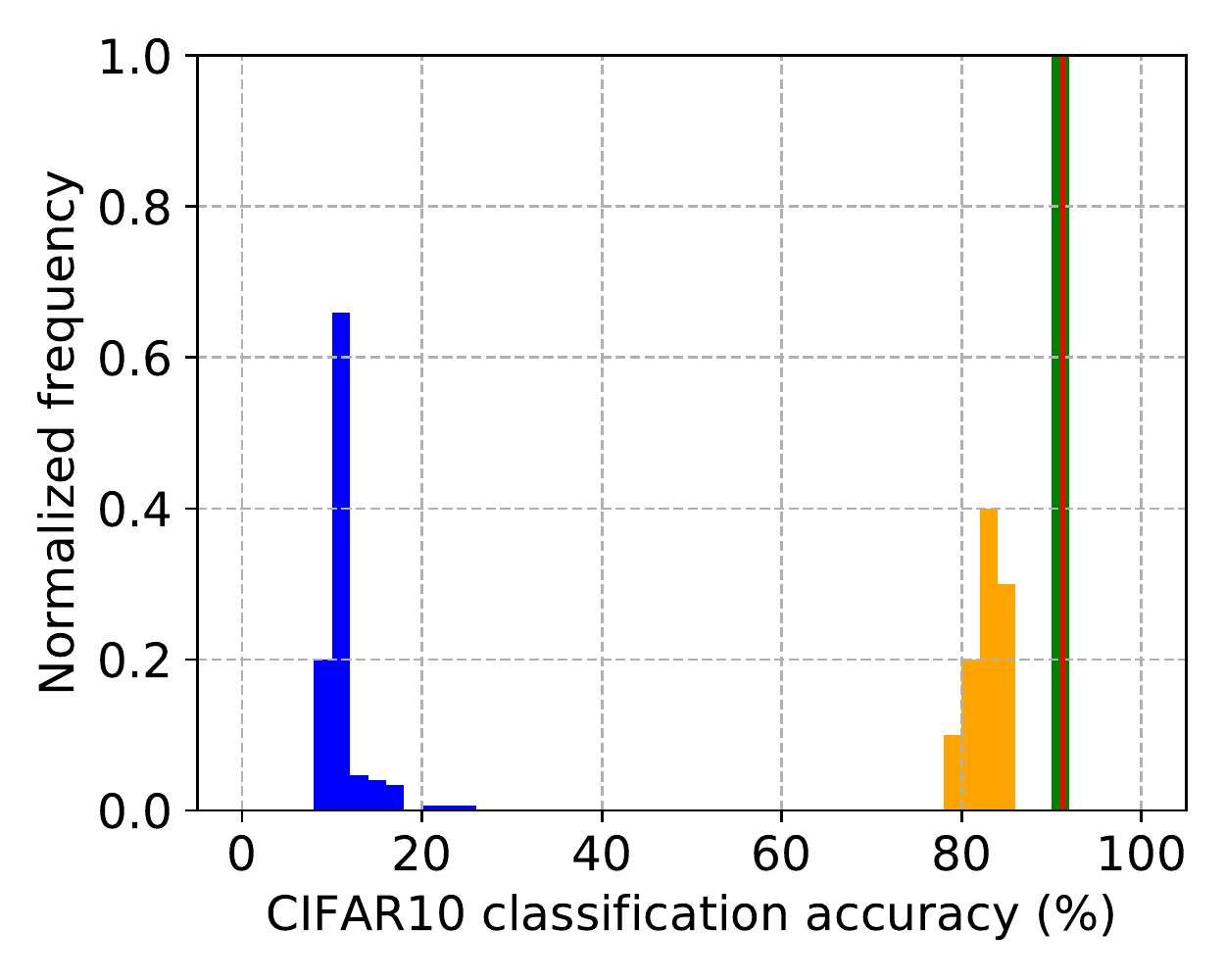}
\hfill
		\includegraphics[keepaspectratio=true, scale=0.23]{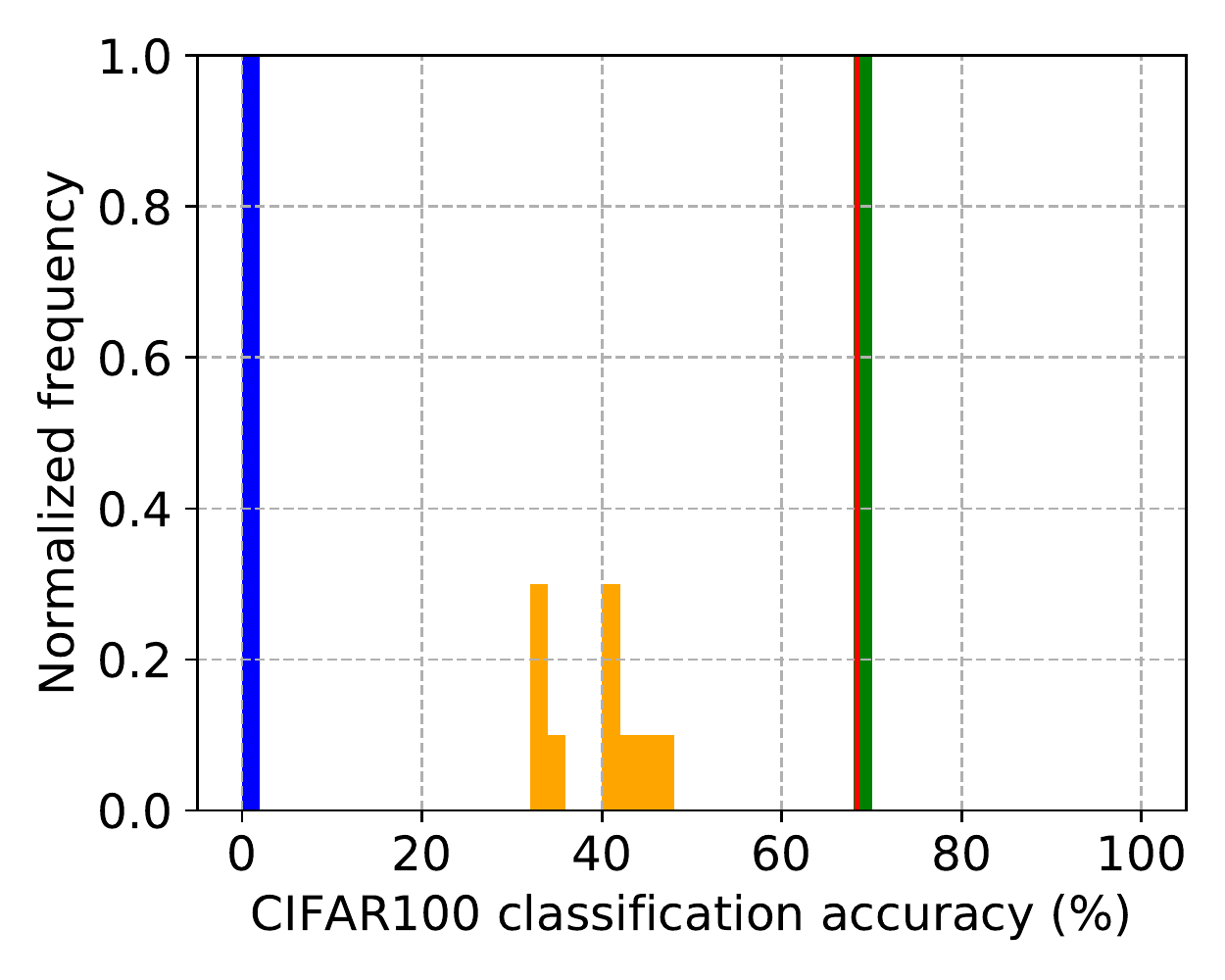}
		\caption{AlexNet$_{\mathbf{p}}$. (Left) CIFAR10, (Right) CIFAR100.}
	\end{subfigure}
	\hfill
	\begin{subfigure}[t]{0.48\linewidth}
		\centering
		\includegraphics[keepaspectratio=true, scale=0.23]{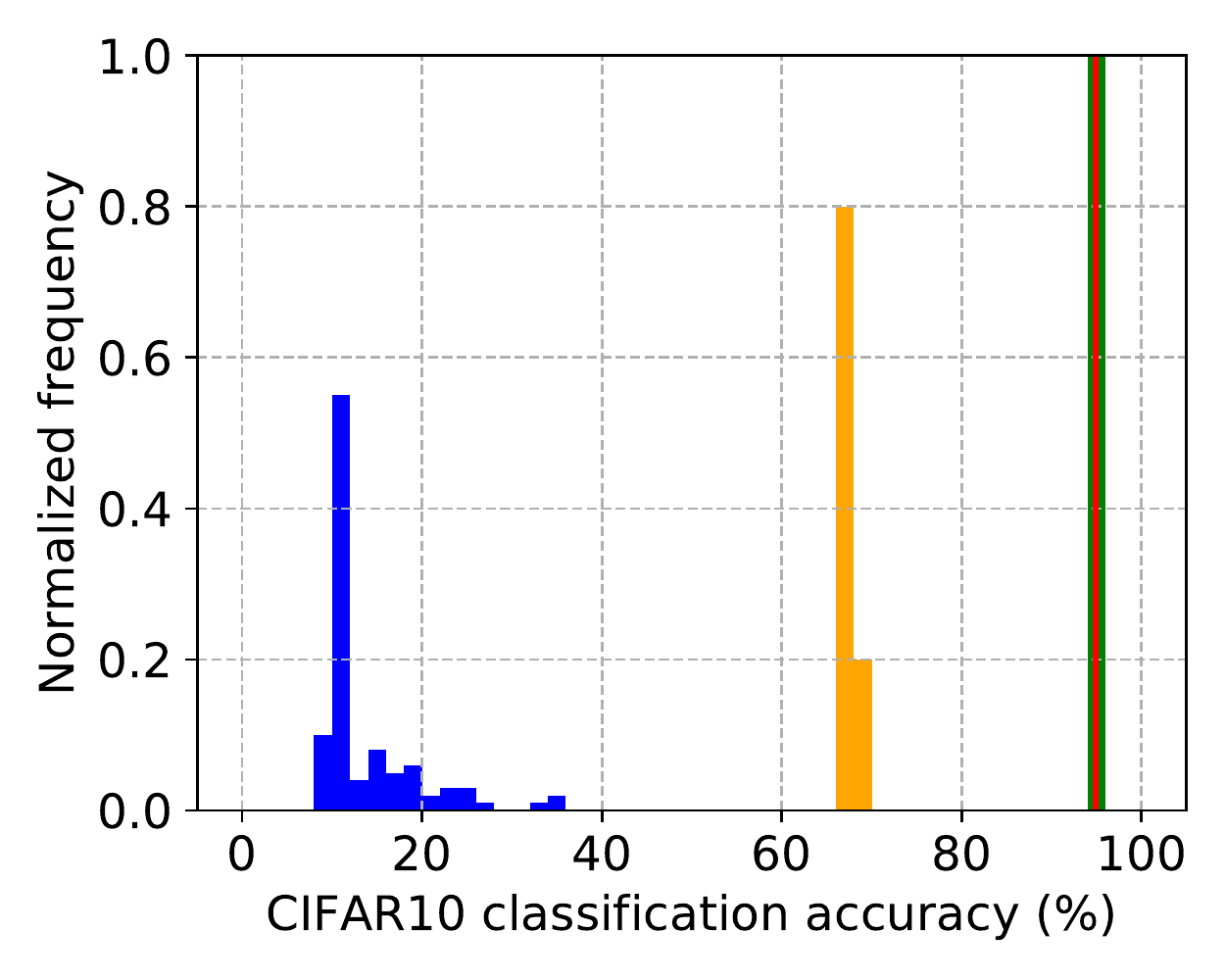}
\hfill
		\includegraphics[keepaspectratio=true, scale=0.23]{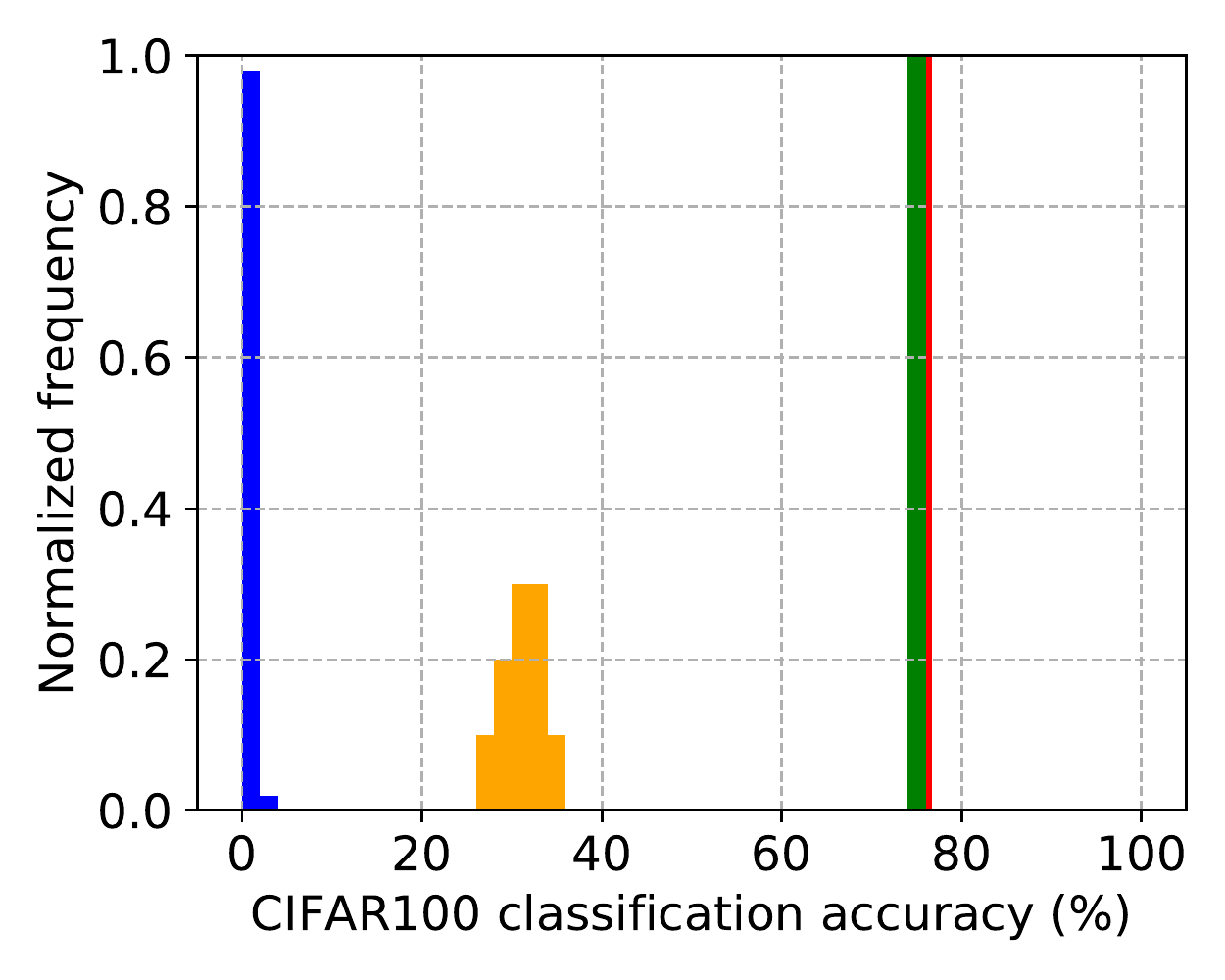}
		\caption{ResNet$_{\mathbf{p}}$-18. (Left) CIFAR10, (Right) CIFAR100.}
	\end{subfigure}
	\caption{Ambiguity Attack: Classification accuracy of our passport networks with valid passport, \textit{random attack} ($fake_1$) and \textit{reversed-engineering attack} ($fake_2$) on CIFAR10 and CIFAR100. Note that, the accuracies of our passport networks with valid passports and original DNN (without passport) are too close to separate in histograms.}
	\label{fg:modulate-performance}
\end{figure}

\begin{figure}[t]%[t]
	\centering
	\includegraphics[width=0.3\linewidth, height=0.225\linewidth]{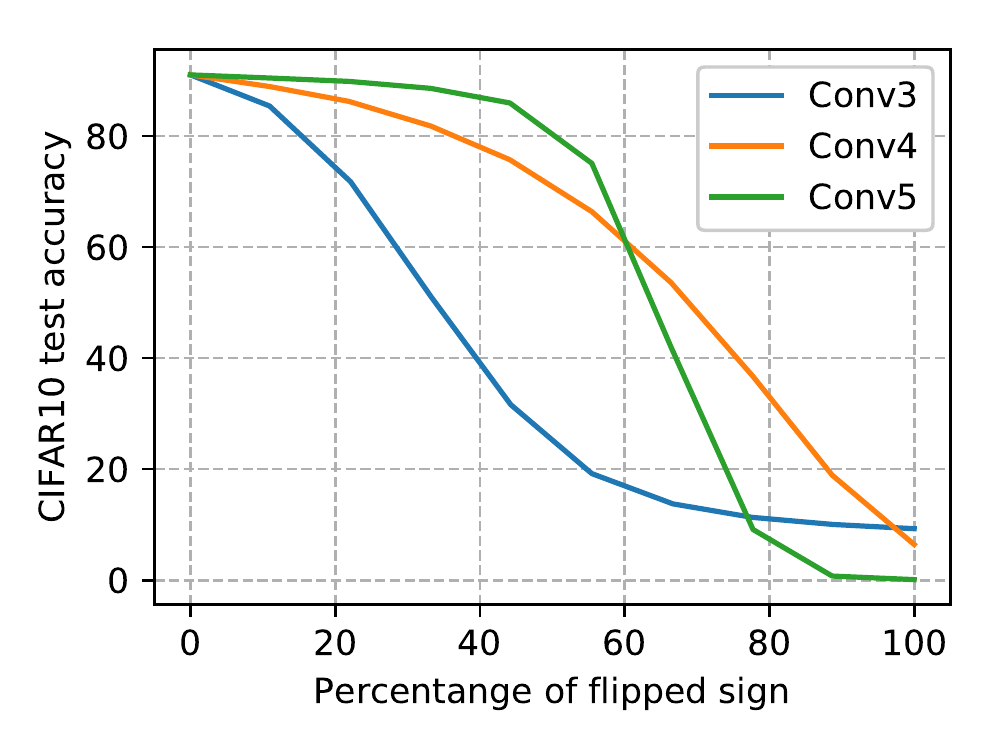} \hfill
	\includegraphics[width=0.3\linewidth, height=0.225\linewidth]{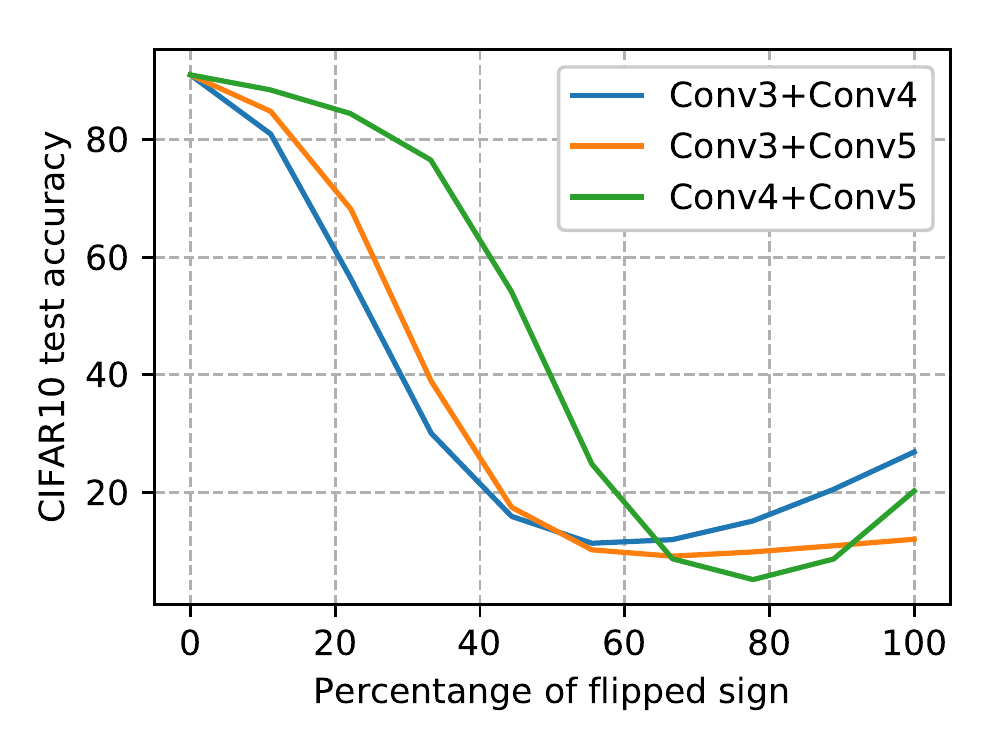} \hfill
	\includegraphics[width=0.3\linewidth, height=0.225\linewidth]{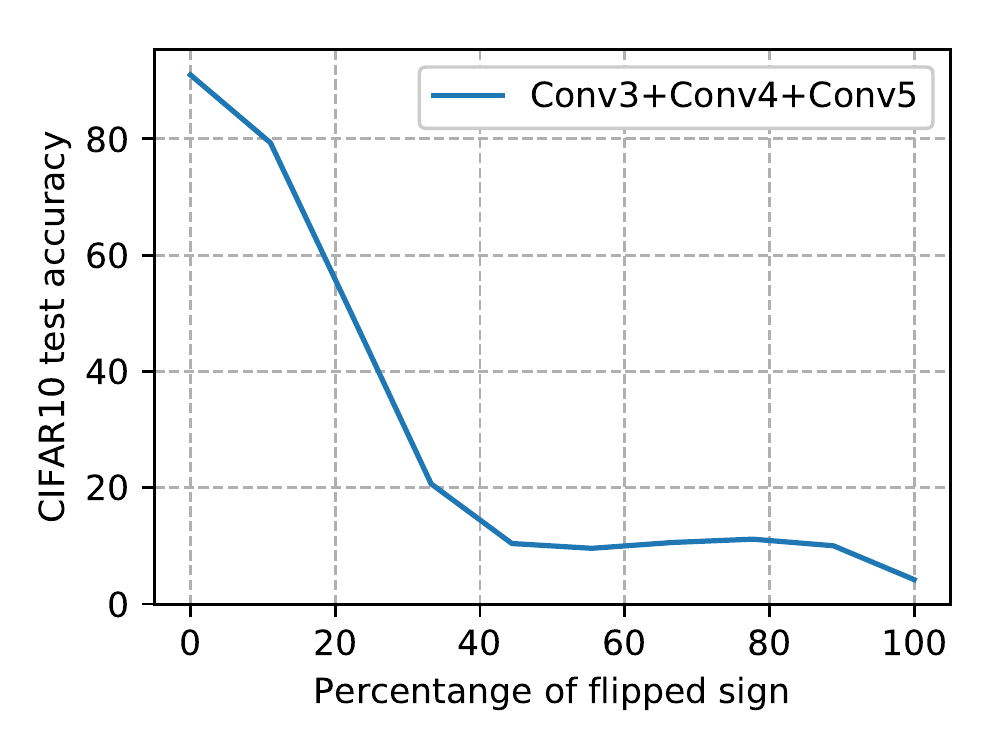} \hfill

	\centering
	\includegraphics[width=0.3\linewidth, height=0.225\linewidth]{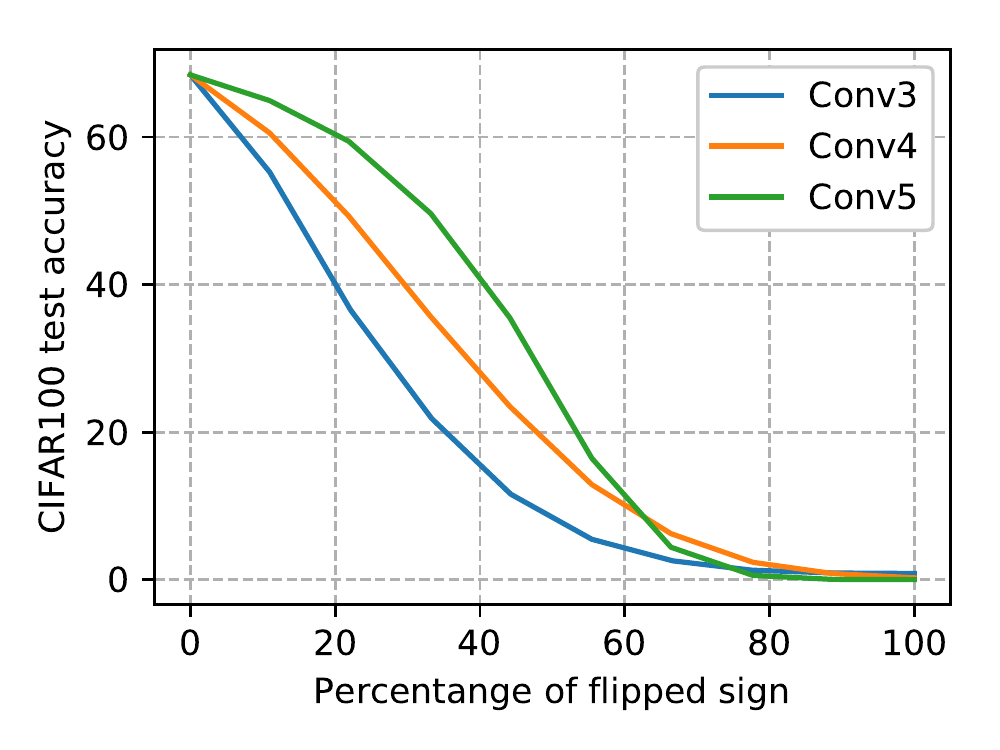} \hfill
	\includegraphics[width=0.3\linewidth, height=0.225\linewidth]{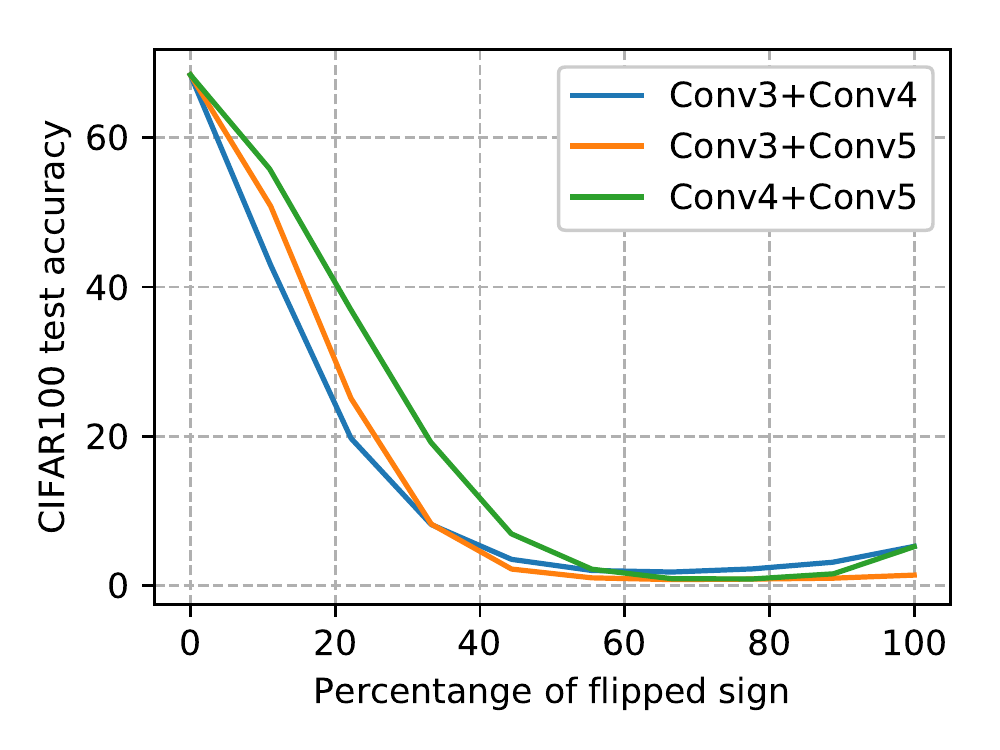} \hfill
	\includegraphics[width=0.3\linewidth, height=0.225\linewidth]{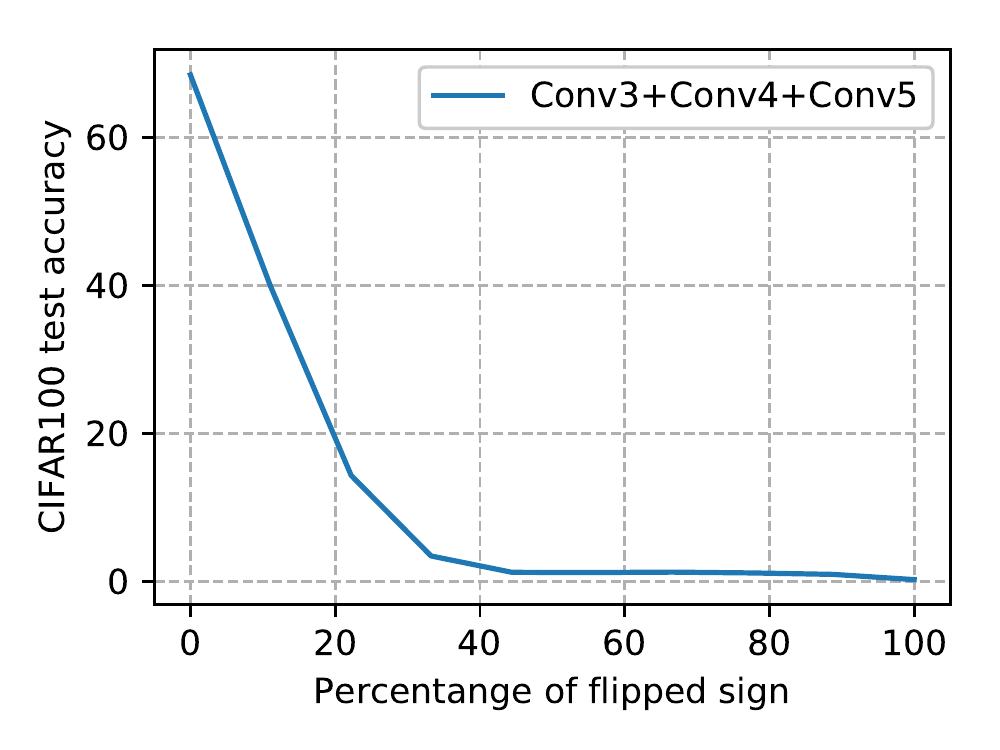} \hfill
	
	\caption{Ambiguity Attack (Random): It can be seen that the performance of AlexNet$_{\mathbf{p}}$ deteriorates with randomly flipped scale factor signs. Left to right: flip one layer, two layers and three layers, respectively. 
		Top row is CIFAR10 and bottom row is CIFAR100 dataset.  
		%scheme $\mathcal{V}_1$, $\mathcal{V}_2$ and $\mathcal{V}_3$.}
		\label{fg:drop-flip-cifar10-100-alexnet} }
	%	\vspace{-10pt}
\end{figure}

	\begin{table*}[t!]
		\centering
	 \resizebox{\textwidth}{!}{  
			\begin{tabular}{|c||c|l||l|c|c|c|}
				\hline
				\makecell{Ambiguity attack \\modes} & \makecell{Attackers have \\ access to} & \makecell{Ambiguous passport \\construction methods} & \makecell{Invertibility \\(see Def. 1.V)} & \makecell{Verification scheme \\ $\mathcal{V}_1$} & \makecell{Verification scheme \\ $\mathcal{V}_2$} & \makecell{Verification scheme \\ $\mathcal{V}_3$} \\
				\hline \hline
				$fake_1$ & $W$ & - Random passport $P_r$ & - $F(P_r)$ fail, by large margin & \makecell{Large accuracy $\downarrow$} & \makecell{Large accuracy $\downarrow$} & \makecell{Large accuracy $\downarrow$ } \\
				\hline
				$fake_2$ & $W$, \{$D_r$;$D_t$\} & - Reverse engineer passport $P_e$ & - $F(P_e)$ fail, by moderate margin & \makecell{Moderate accuracy $\downarrow$} & \makecell{Moderate accuracy $\downarrow$} & \makecell{Moderate accuracy $\downarrow$ } \\
				\hline
				$fake_3$ & \makecell{$W$, \{$D_r$;$D_t$\},\\ \{$P$, $S$\}} & \makecell[l]{- Reverse engineer passport \{$P_e$;$S_e$\} \\ by exploiting original passport $P$ \\ \& sign string $S$} & \makecell[l]{- if $S_e = S$: \\ $F(P_e)$ pass, with negligible margin \\- if $S_e \neq S$: \\ $F(P_e)$ fail, by moderate to huge margin} & see Figure \ref{fig:a3} & see Figure \ref{fig:a3}  & see Figure \ref{fig:a3} \\
				\hline
			\end{tabular}
		}
		\caption{Summary of overall passport network performances in Scheme $\mathcal{V}_1$, $\mathcal{V}_2$ and $\mathcal{V}_3$, respectively under three different ambiguity attack modes, $fake$.}
		\label{tab:table1a}
		%\vspace{-15pt}
	\end{table*}

%%%%Random Attacks%%%%%%%%%%%%%%%%%%%%%%%%%%%%%%%%%%%%%%%%%%%%%%%%%%%%%%%%%%%%%%%%%%%%%%%%%%%%%%%%%%%%%%%%%%
\subsubsection{Random attacks} 

The following experiments aim to disclose the dependence of the original task performances with respect to the crucial parameter \textit{scale factors}, and specifically, its positive/negative signs.  

In the first experiment, for the passport-embedded DNN models, we simulate random attacks by flipping the signs of certain randomly selected scale factors and then measure the performance. It turns out that the final performance are sensitive to the change of signs --- majority of the DNN model performances drop significantly as long as more than (at least) 50\% of scale factors have flipped signs as shown in Figure \ref{fg:drop-flip-cifar10-100-alexnet} and Figure \ref{fg:drop-flip-cifar10-100-resnet}, respectively.  The deteriorated performances are more pronounced when the passports are embedded in either all the three convolution layers (3-4-5) in AlexNet$_{\mathbf{p}}$ (right-most column in Figure  \ref{fg:drop-flip-cifar10-100-alexnet}) or the last blocks in ResNet$_{\mathbf{p}}$-18 (Figure \ref{fg:drop-flip-cifar10-100-resnet}), whose performances drop to about 10\% and 1\% respectively.

The simulation results summarized in  Figure \ref{fg:drop-flip-cifar10-100-alexnet} and Figure \ref{fg:drop-flip-cifar10-100-resnet} are in accordance to the results illustrates in Figure \ref{fg:modulate-performance}, which shows that the performance of passport-embedded DNNs under the attack of randomly assigned passport signatures.  The poor performances measured for both AlexNet$_{\mathbf{p}}$ and ResNet$_{\mathbf{p}}$-18 on CIFAR10/CIFAR100 tasks are in the range of [10\%, 30\%] and [1\%, 3\%] respectively.

% cifar10 and cifar100 resnet18
\begin{figure}[t]%[t]
	\begin{subfigure}[h]{\linewidth}
		\centering
\includegraphics[keepaspectratio=true, scale = 0.25]{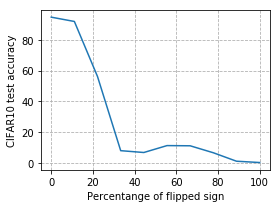}
		\includegraphics[keepaspectratio=true, scale = 0.25]{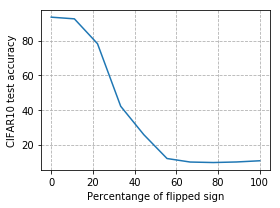}
		\includegraphics[keepaspectratio=true, scale = 0.25]{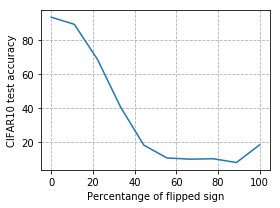} 
		\caption{CIFAR10}
	\end{subfigure}% 
	\\
		\begin{subfigure}[h]{\linewidth}
		\centering
        \includegraphics[keepaspectratio=true, scale = 0.25]{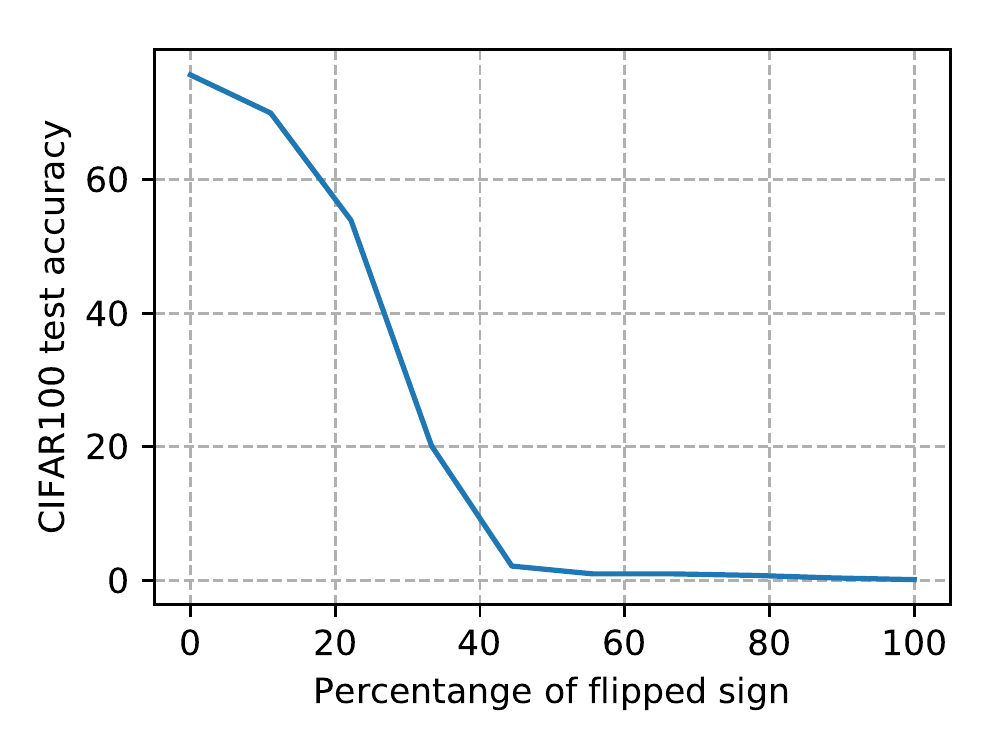}
		\includegraphics[keepaspectratio=true, scale = 0.25]{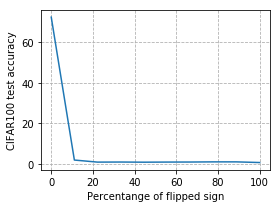}
		\includegraphics[keepaspectratio=true, scale = 0.25]{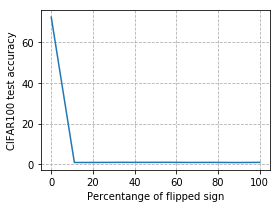}
		\caption{CIFAR100}
	\end{subfigure}% 		
	%\caption{Performance after randomly flip signs. Left to right: CIFAR10 and CIFAR100. Top to bottom: scheme $\mathcal{V}_1$, $\mathcal{V}_2$ and $\mathcal{V}_3$.}
	\caption{ResNet$_{\mathbf{p}}$-18: It can be seen that the performance deteriorates with randomly flipped scale factor signs. Left to right: Scheme $\mathcal{V}_1$, $\mathcal{V}_2$ and $\mathcal{V}_3$, respectively.}
	\label{fg:drop-flip-cifar10-100-resnet}
	%\vspace{-10pt}
\end{figure}

%%%%Reverse engineering Attacks%%%%%%%%%%%%%%%%%%%%%%%%%%%%%%%%%%%%%%%%%%%%%%%%%%%%%%%%%%%%%%%%%%%%%%%%%%%%%%%%%%%
\subsubsection{Reversed-engineering attacks} 

In this experiment, we further assume the adversaries have the access to original training data and thus are able to maximize the original task performance by reverse-engineering scale factors (i.e. flipping the sign (+/-) of the scale factor). The trained AlexNet$_{\mathbf{p}}$/ResNet$_{\mathbf{p}}$-18 are used for this experiment, and it turns out the best performance the adversary can  achieve is no more than 84\%/70\% for CIFAR10 and 40\%/38\% for CIFAR100 classifications respectively (see Figure \ref{fg:rev-pass}).

\begin{figure}[t]%[t][th]
	\centering
	\begin{subfigure}[h]{0.48\linewidth}
		\centering
		\includegraphics[keepaspectratio=true, scale = 0.3]{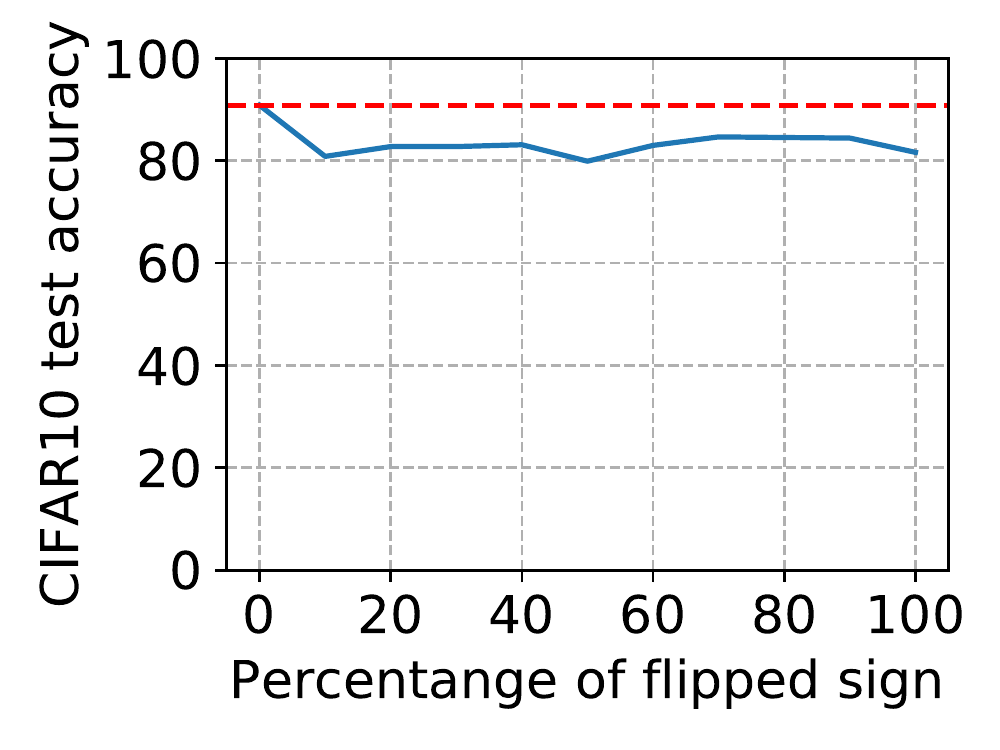}
		\includegraphics[keepaspectratio=true, scale = 0.3]{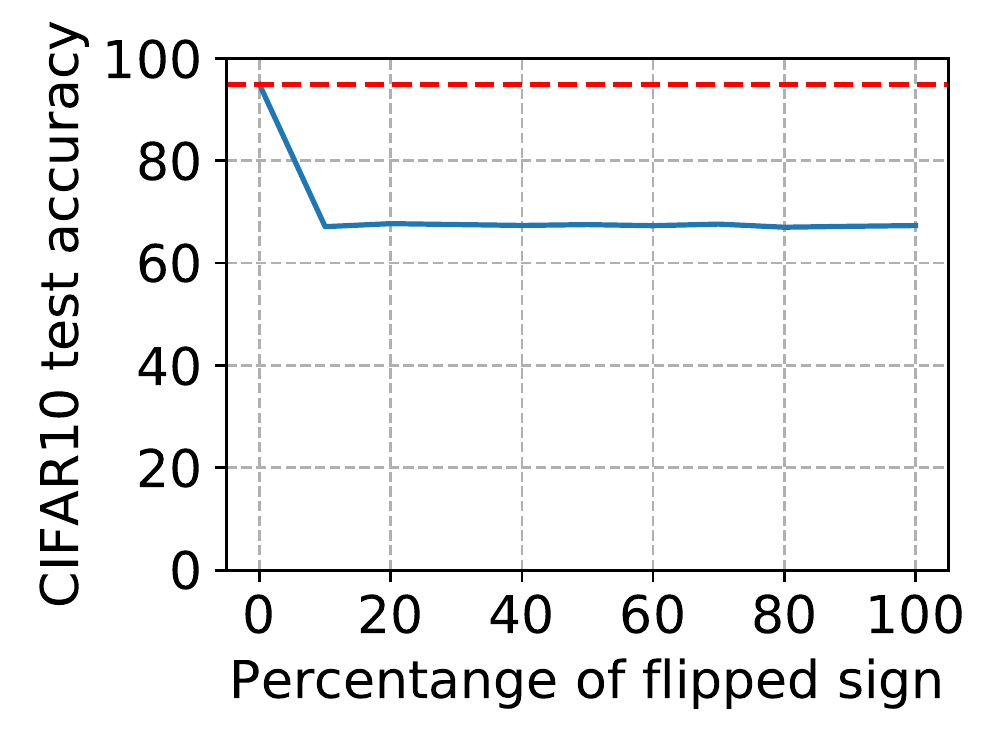}
		\caption{CIFAR10 (top: AlexNet$_{\mathbf{p}}$; bottom: ResNet$_{\mathbf{p}}$-18)}
	\end{subfigure}% 
	\hfill
	\begin{subfigure}[h]{0.48\linewidth}
		\centering
		\includegraphics[keepaspectratio=true, scale = 0.3]{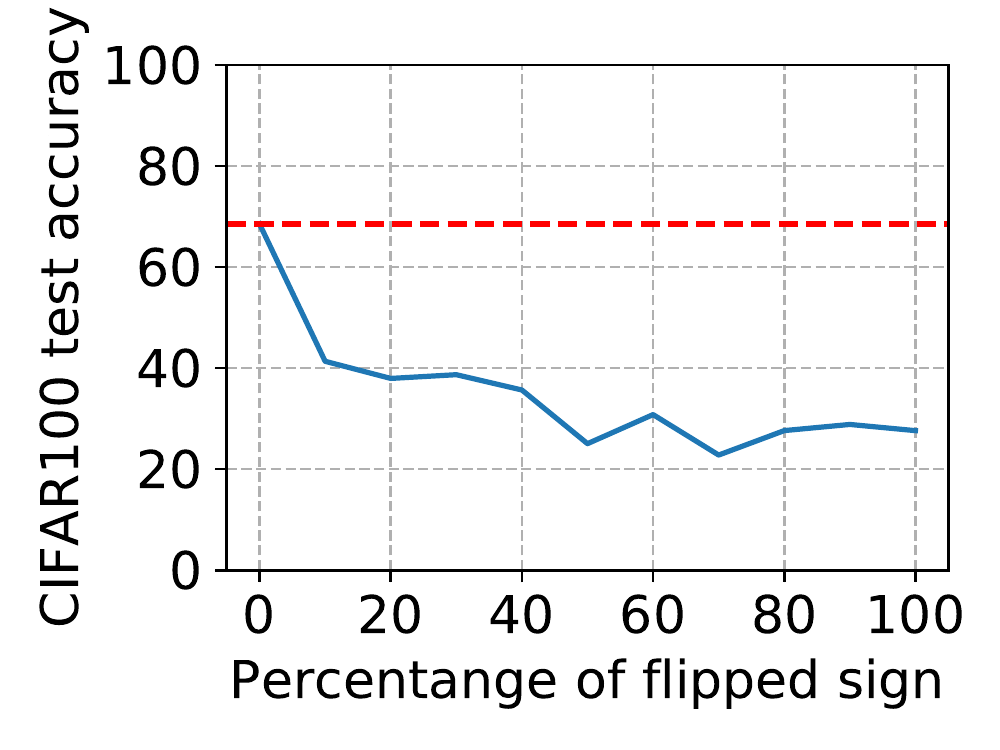}
		\includegraphics[keepaspectratio=true, scale = 0.3]{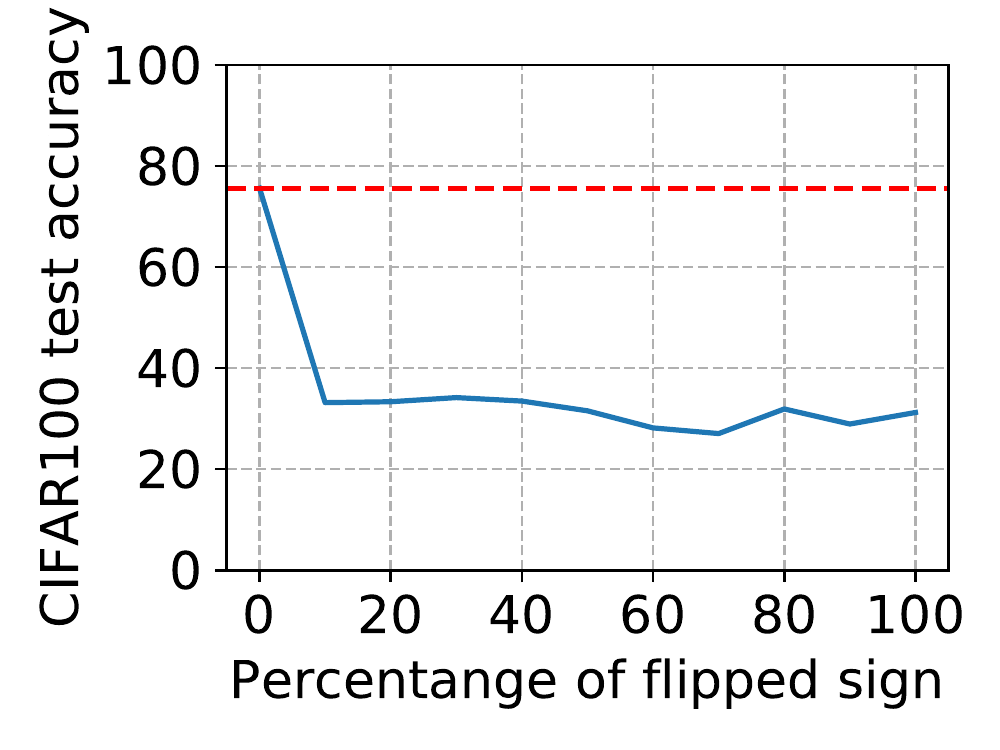}
		\caption{CIFAR100 (top: AlexNet$_{\mathbf{p}}$; bottom: ResNet$_{\mathbf{p}}$-18)}
	\end{subfigure}%

	\caption{Performance of (a) CIFAR10 and (b) CIFAR100 when adversaries try to forge a new signature by a certain \% of dissimilarity with the original signature.}
	\label{fg:rev-pass}
		%\vspace{-15pt}
\end{figure}    

		\begin{figure}[t]
		\centering
				 \begin{subfigure}{0.3\linewidth}{\includegraphics[keepaspectratio=true, scale = 0.25]{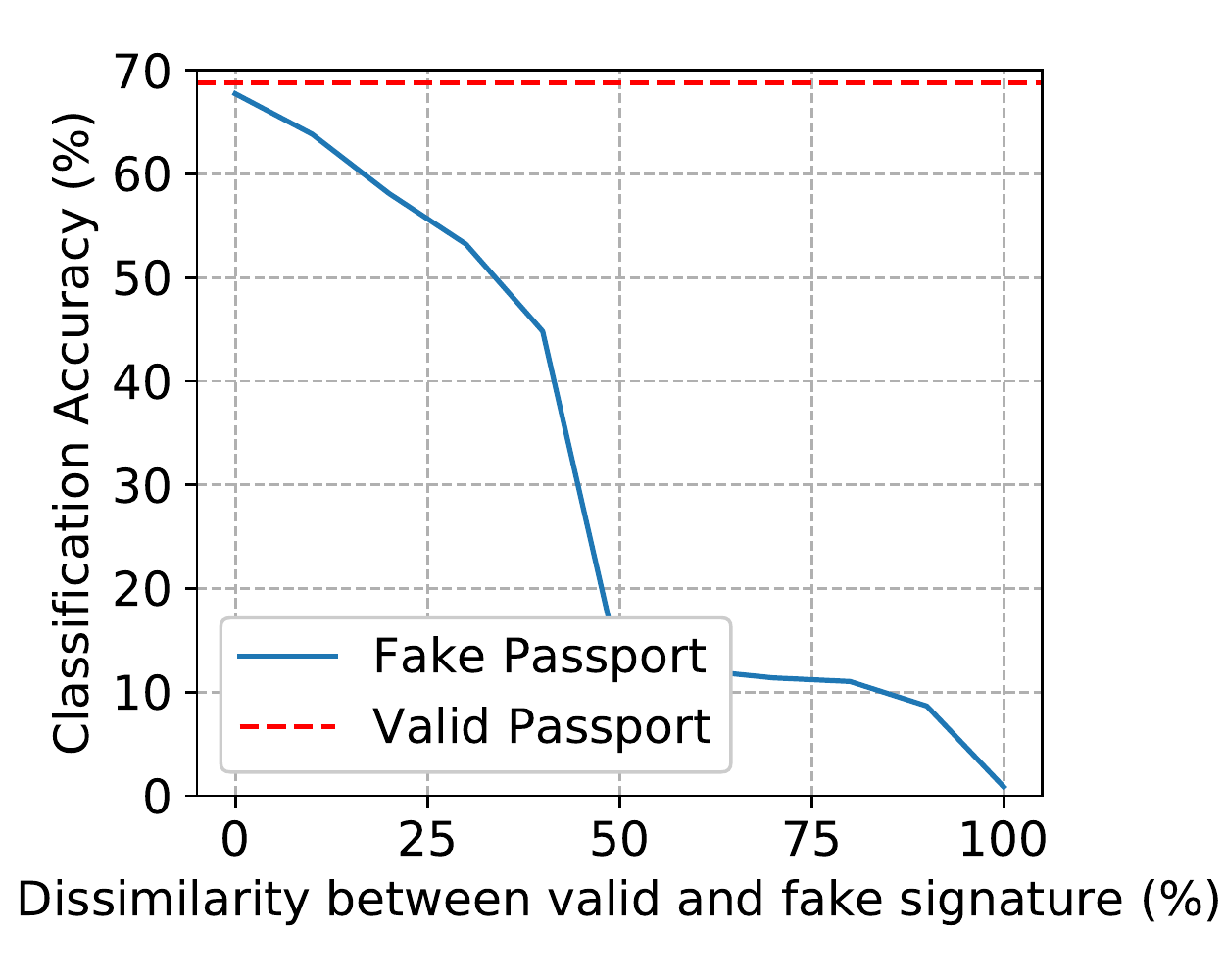}\caption{$\mathcal{V}_1$}}  \end{subfigure} 
		 \begin{subfigure}{0.3\linewidth}{\includegraphics[keepaspectratio=true, scale = 0.25]{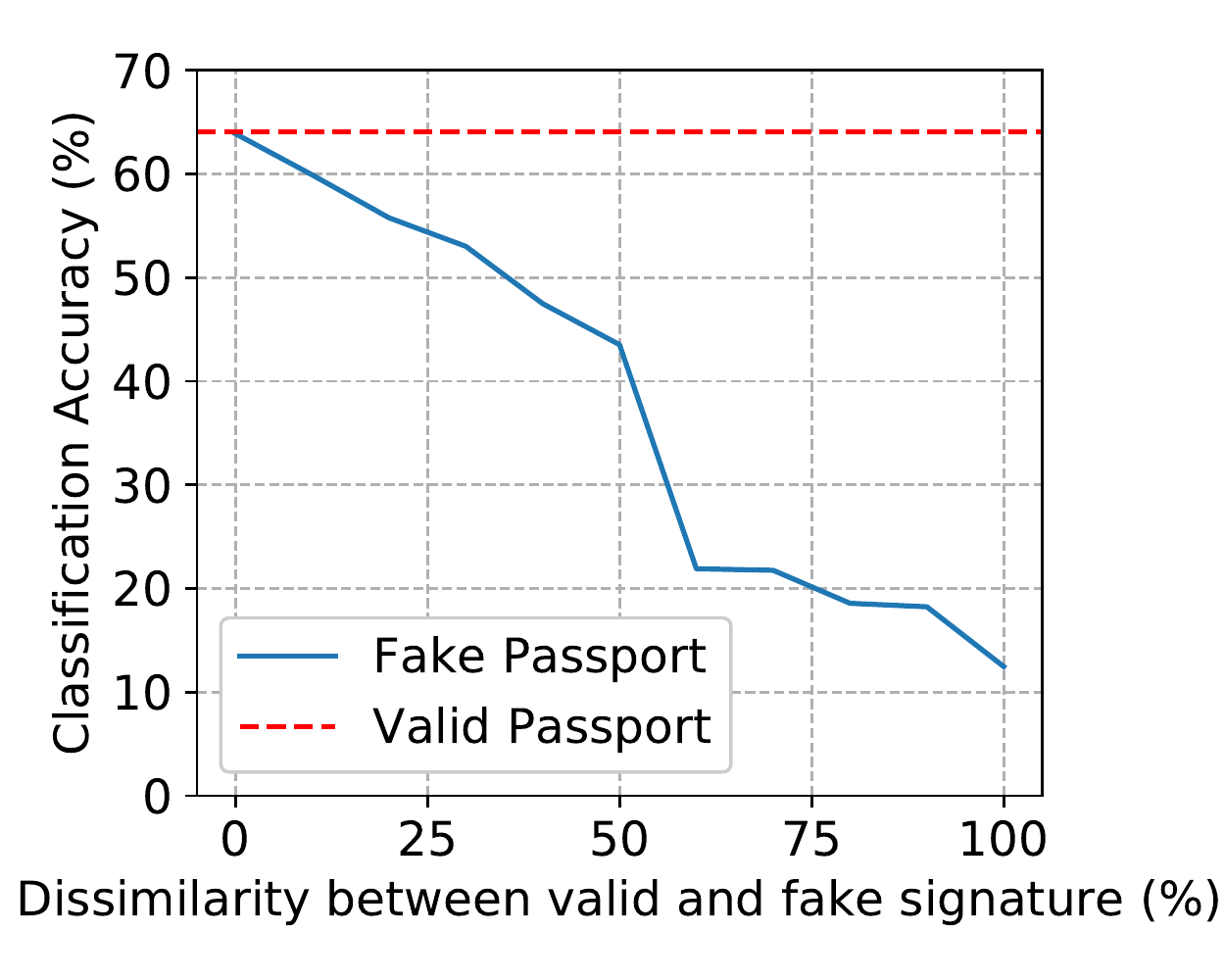} \caption{$\mathcal{V}_2$}}  \end{subfigure} 
		 \begin{subfigure}{0.3\linewidth}{\includegraphics[keepaspectratio=true, scale = 0.25]{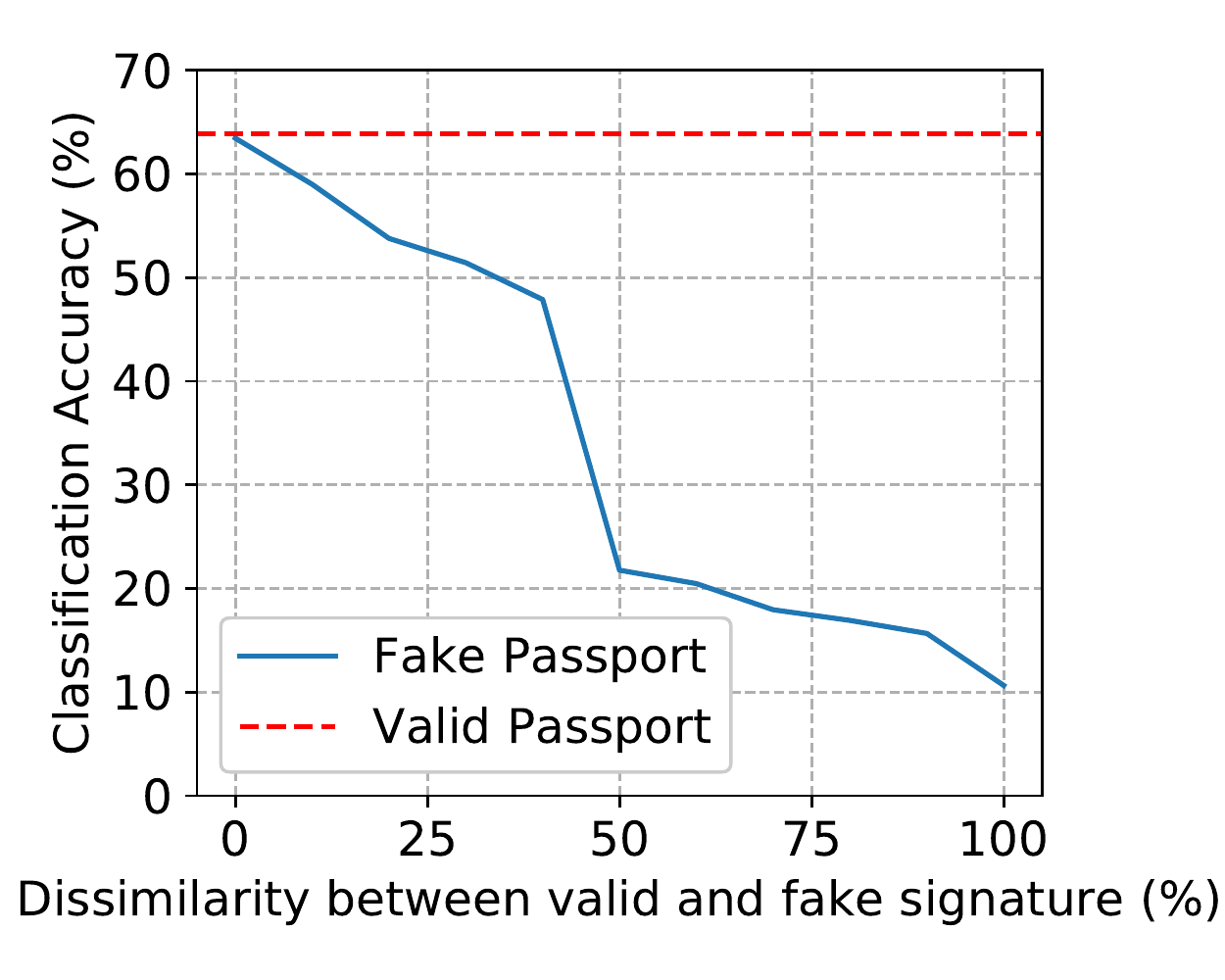} \caption{$\mathcal{V}_3$}}  \end{subfigure} 
		\caption{Ambiguity Attack: Classification accuracy on CIFAR100 under insider threat ($fake_3$) on three verification schemes. It is shown that when a correct signature is used, the classification accuracy is intact, while for a partial correct signature (sign scales are modified around 10\%), the performance will immediately drop around 5\%, and a totally wrong signature will obtain a meaningless accuracy (1-10\%). Based on the threshold $\leq \epsilon_f$ = 3\% for AlexNet$_{\mathbf{p}}$ and by the fidelity evaluation process $F$, any potential ambiguity attacks (even with partially correct signature) are effectively defeated.}
		\label{fig:a3}
	\end{figure}

\textbf{Summary} Extensive empirical studies show that it is impossible for adversaries to maintain the original DNN model performances by using fake passports, regardless of the fake passports are either randomly generated or reverse-engineered with the use of original training datasets. Table \ref{tab:table1a} summarize the accuracy of the proposed methods under ambiguity attack modes, $fake$ depending on attackers’ knowledge of the protection mechanism. It shows that all the corresponding passport-based DNN models accuracies are deteriorated to various extents. The ambiguous attacks are therefore defeated according to the fidelity evaluation process, $F(  )$. We’d like to highlight that even under the most adversary condition, i.e. freezing weights, maximizing the distance from the original passport $P$, and minimizing the accuracy loss (in layman terms, it means both the original passports and scale signs are exploited due to insider threat, and we class this as $fake_3$), attackers are still unable to use new (modified) scale signs without compromising the network accuracies. As shown in Fig. \ref{fig:a3}, with 10\% and 50\% of the original scale signs are modified, the CIFAR100 classification accuracy drops about 5\% and 50\%, respectively. In case that the original scale sign remains unchanged, the DNN model ownership can be easily verified by the pre-defined string of signs. Also, Table \ref{tab:table1a} shows that attackers are unable to exploit $D_t$ to forge ambiguous passports.

Based on these empirical studies, we decide to set the threshold $\epsilon_f$ in Definition \ref{def:ovs} as 3\% for AlexNet$_{\mathbf{p}}$ and 20\% for ResNet$_{\mathbf{p}}$-18, respectively. By this fidelity evaluation process, any potential ambiguity attacks are effectively defeated. In summary, extensive empirical studies have shown that it is impossible for adversaries to maintain the original DNN model accuracies by using counterfeit passports, regardless of they are either randomly generated or reverse-engineered with the use of original training datasets. This  passport dependent performances play an indispensable role in designing secure ownership verification schemes that are illustrated in Sect. \ref{sect:ownByPass}.

%%%%Reverse engineering Attacks%%%%%%%%%%%%%%%%%%%%%%%%%%%%%%%%%%%%%%%%%%%%%%%%%%%%%%%%%%%%%%%%%%%%%%%%%%%%%%
\subsection{Sign of scale factors as signature}
\label{subsect:signsig}	

\begin{table}[t]
\centering
\begin{minipage}[t]{0.48\linewidth}
\centering
\resizebox{\columnwidth}{!}{
\begin{tabular}{|c|c||c|c|}
\toprule
\hline
\multicolumn{2}{|c||}{Learned Parameters} & \multicolumn{2}{|c|}{Signature $s$} \\
\hline
Scale factor $\gamma$  & sign (+/-) & ASCII code & Character \\ \midrule \hline
-0.1113 & -1   & \multirow{8}{*}{116} & \multirow{8}{*}{t} \\ \cline{1-2}
0.2344  & 1    &                      &                    \\ \cline{1-2}
0.2494  & 1    &                      &                    \\ \cline{1-2}
0.4885  & 1    &                      &                    \\ \cline{1-2}
-0.1021 & -1   &                      &                    \\ \cline{1-2}
0.3889  & 1    &                      &                    \\ \cline{1-2}
-0.1225 & -1   &                      &                    \\ \cline{1-2}
-0.3401 & -1   &                      &                    \\ 
\hline \hline
-0.1705 & -1   & \multirow{8}{*}{104} & \multirow{8}{*}{h} \\ \cline{1-2}
0.3338  & 1    &                      &                    \\ \cline{1-2}
0.1884  & 1    &                      &                    \\ \cline{1-2}
-0.1215 & -1   &                      &                    \\ \cline{1-2}
0.1620  & 1    &                      &                    \\ \cline{1-2}
-0.1754 & -1   &                      &                    \\ \cline{1-2}
-0.2698 & -1   &                      &                    \\ \cline{1-2}
-0.1958 & -1   &                      &                    \\ 
\hline \hline
-0.1007 & -1   & \multirow{8}{*}{105} & \multirow{8}{*}{i} \\ \cline{1-2}
0.3923  & 1    &                      &                    \\ \cline{1-2}
0.4288  & 1    &                      &                    \\ \cline{1-2}
-0.1125 & -1   &                      &                    \\ \cline{1-2}
0.4355  & 1    &                      &                    \\ \cline{1-2}
-0.1524 & -1   &                      &                    \\ \cline{1-2}
-0.1073 & -1   &                      &                    \\ \cline{1-2}
0.1922  & 1    &                      &                    \\ 
\hline \hline
-0.1999 & -1   & \multirow{8}{*}{115} & \multirow{8}{*}{s} \\ \cline{1-2}
0.2710  & 1    &                      &                    \\ \cline{1-2}
0.1599  & 1    &                      &                    \\ \cline{1-2}
0.2496  & 1    &                      &                    \\ \cline{1-2}
-0.1345 & -1   &                      &                    \\ \cline{1-2}
-0.1907 & -1   &                      &                    \\ \cline{1-2}
0.2326  & 1    &                      &                    \\ \cline{1-2}
0.1967  & 1    &                      &                    \\ \hline
\bottomrule
\end{tabular}}
\end{minipage}
\hfill
\begin{minipage}[t]{0.48\linewidth}
\centering
\resizebox{\columnwidth}{!}{
\begin{tabular}{|c|c||c|c|}
\toprule
\hline
\multicolumn{2}{|c||}{Learned Parameters} & \multicolumn{2}{|c|}{Signature $s$} \\
\hline
Scale factor $\gamma$  & sign (+/-) & ASCII code & Character \\ \midrule \hline
-0.1657 & -1   & \multirow{8}{*}{105} & \multirow{8}{*}{i} \\ \cline{1-2}
0.1665  & 1    &                      &                    \\ \cline{1-2}
0.4633  & 1    &                      &                    \\ \cline{1-2}
-0.2668  & -1    &                      &                    \\ \cline{1-2}
0.3830 & 1   &                      &                    \\ \cline{1-2}
-0.1789  & -1    &                      &                    \\ \cline{1-2}
-0.1077 & -1   &                      &                    \\ \cline{1-2}
0.1585 & 1   &                      &                    \\ 
\hline \hline
-0.2257 & -1   & \multirow{8}{*}{115} & \multirow{8}{*}{s} \\ \cline{1-2}
0.2916  & 1    &                      &                    \\ \cline{1-2}
0.2169  & 1    &                      &                    \\ \cline{1-2}
0.1862 & 1   &                      &                    \\ \cline{1-2}
-0.1146  & -1    &                      &                    \\ \cline{1-2}
-0.1512 & -1   &                      &                    \\ \cline{1-2}
0.2288 & 1   &                      &                    \\ \cline{1-2}
0.3064 & 1   &                      &                    \\ 
\hline
\bottomrule
\end{tabular}}
\end{minipage}
\caption{Sample of the learned scale factor $\gamma$ and respective signs (+/-) from the 48 out of 256 channels from Conv5 of AlexNet$_{\mathbf{p}}$ when we embed signature s = \{this\} and \{is\}.}
\label{signtable}
%\vspace{-10pt}
\end{table}

In this section, we show how the sign (+/-) of scale factor $\gamma$ can be used to encode a signature $s$ such as ASCII code. Table \ref{signtable} shows an example of the learned scale factors and its respective sign when we embed a signature $s=\{this~is~an~example~signature\}$ into the Conv5 of AlexNet$_{\mathbf{p}}$ by using sign loss (Eq. \ref{eq:sign-loss}). Note that the maximum size of an embedded signature is depending on the number of the channels in a DNN model. For instance, in this paper, the Conv5 of AlexNet$_{\mathbf{p}}$ as shown in Table \ref{tab:network-arch} has 256 channels, so the maximum signature capacity can be embedded is 256bits. 

For ownership verification, the embedded signature $s$ can be revealed by decoding the learned sign of scale factors. For example, in Table \ref{signtable}, every 8bits of the scale factor sign is decoded into ASCII code as follow:

\begin{itemize}
	\item[$1.$] \{-1,1,1,1,-1,1,-1,-1\} $\to$ 116 $\to$ t
    \item[$2.$] \{-1,1,1,-1,1,-1,-1,-1\} $\to$ 104 $\to$ h
    \item[$3.$] \{-1,1,1,-1,1,-1,-1,-1\} $\to$ 105 $\to$ i
    \item[$4.$] \{-1,1,1,1,-1,-1,1,1\} $\to$ 115 $\to$ s
    \item[]
    \item[$5.$] \{-1,1,1,-1,1,-1,-1,1\} $\to$ 105 $\to$ i
    \item[$6.$] \{-1,1,1,1,-1,-1,1,1\} $\to$ 115 $\to$ s
\end{itemize}

Note that, in this proposed method, similar character (e.g. \{i\} and \{s\}) appears in different position of a string will have different scale factors. Table \ref{fg:resultset} shows a comparison result when a correct signature, partial correct signature or total wrong signature is used  in CIFAR10 classification task with AlexNet$_{\mathbf{p}}$. It is shown that when a correct signature is used (i.e this is an example signature), the classification accuracy reached 90.89\%, while for a partial correct signature, the performance is dropped to 82.23\%, and a totally wrong signature will obtain a meaningless accuracy (11.44\%). Based on the threshold $\epsilon_f$ = 3\% for AlexNet$_{\mathbf{p}}$ and by the fidelity evaluation process, any potential ambiguity attacks (even with partially correct signature) are effectively defeated.

 \begin{table}[t]
    \centering
%\caption{AlexNet}
\begin{tabular}{|l|l|c|}
		\hline
	     & Signature $s$ & Accuracy (\%) \\
		\hline
		AlexNet (baseline) & - & 91.12 \\
		\hline \hline
		 & $this~is~an~example~signature$ & 90.89 \\
		AlexNet$_{\mathbf{p}}$ & $thhs~iB~an~xxxpxX|~sigjature$ & 82.83 \\
		 & $qpCA2J^OEc\Delta\textregisteredÆo*1ay$ & 11.44  \\
		\hline
	\end{tabular}
	\caption{A comparison of the accuracy of AlexNet(s) in CIFAR10 classification task when a correct (top), partially correct (middle) or totally wrong (bottom) signature is used.}
%	\vspace{-10pt}
	\label{fg:resultset}
\end{table}

    \begin{table*}[t]
    \centering
\begin{minipage}[t]{0.48\linewidth}\centering
%\caption{AlexNet}
\begin{tabular}{|l|c|c|}
\toprule
		\hline
		& \multicolumn{2}{c|}{CIFAR10} \\ 
		\cline{2-3}
		& T & I  \\
		\midrule
		\hline
		AlexNet (Baseline) & 8.445 & 0.834 \\
		\hline \hline
		AlexNet$_{\mathbf{p}}$ $\mathcal{V}_1$ & 10.745 & {0.912} \\
		AlexNet$_{\mathbf{p}}$ $\mathcal{V}_2$ & 19.010 & 0.830  \\
		AlexNet$_{\mathbf{p}}$ $\mathcal{V}_3$ & 21.372 & 0.881  \\
		\bottomrule
	\end{tabular}
\end{minipage}\hfill%
\begin{minipage}[t]{0.48\linewidth}\centering
%\caption{ResNet}
%\label{tab:The parameters 2}
\begin{tabular}{|l|c|c|}
\toprule
		\hline
		& \multicolumn{2}{c|}{CIFAR10} \\ 
		\cline{2-3}
		& T & I  \\
		\midrule
		\hline
		ResNet (Baseline) & 31.09 & 1.71 \\
		\hline \hline
ResNet$_{\mathbf{p}}$-18 $\mathcal{V}_1$ & 36.67 & 1.94 \\
ResNet$_{\mathbf{p}}$-18 $\mathcal{V}_2$ & 67.21 & 1.87 \\
ResNet$_{\mathbf{p}}$-18 $\mathcal{V}_3$ & 70.69 & 1.88 \\
		\bottomrule
	\end{tabular}
\end{minipage}
	\caption{Training (T) and Inference (I) time of each scheme on AlexNet$_{\mathbf{p}}$ (left) and ResNet$_{\mathbf{p}}$-18 (right) using one \textit{NVIDIA Titan V}. The values are in seconds/epoch. }
%	\vspace{-10pt}
	\label{fg:inference-time}
\end{table*}

		\begin{table*}[t]
		\centering
		 \resizebox{\textwidth}{!}{  
			\begin{tabular}{l|l|l|l|}
				\cline{2-4}
				& \multicolumn{1}{c|}{Scheme $\mathcal{V}_1$} & \multicolumn{1}{c|}{Scheme $\mathcal{V}_2$} & \multicolumn{1}{c|}{Scheme $\mathcal{V}_3$} \\
				\cline{2-4} \hline
				\multicolumn{1}{ |c| }{Training} & \makecell[l]{- Passport layers added \\ - Passports needed \\ - 15\%-30\% more training time} & \makecell[l]{- Passport layers added \\ - Passports needed \\ - 100\%-125\% more training time} & \makecell[l]{- Passport layers added \\ - Passports \& Trigger set needed \\ - 100\%-150\% more training time} \\
				\hline
				\multicolumn{1}{ |c| }{Inferencing} & \makecell[l]{- Passport layers \& Passports needed \\ - 10\% more inferencing time} & \makecell[l]{- Passport layers \& Passport NOT needed \\ - NO extra time incurred} & \makecell[l]{- Passport layers \& Passport NOT needed \\ - NO extra time incurred} \\
				\hline
				\multicolumn{1}{ |c| }{Verification} & - NO separate verification needed & - Passport layers \& Passports needed & \makecell[l]{- Trigger set needed (black-box verification) \\ - Passport layers \& Passports needed (white-box verification)} \\
				\hline
			\end{tabular}
		}
		\caption{Summary of our proposed passport networks complexity for $\mathcal{V}_1$, $\mathcal{V}_2$ and $\mathcal{V}_3$ schemes.}
		\label{tab:table22}
		%\vspace{-15pt}
	\end{table*}

%%%%Network Complexity%%%%%%%%%%%%%%%%%%%%%%%%%%%%%%%%%%%%%%%%%%%%%%%%%%%%%%%%%%%%%%%%%%%%%%%%%%%%%%%%%%%%%%%%%%
\subsection{Network Complexity}
\label{subsect:networkcomplex}

Table \ref{fg:inference-time} shows the training and inference time of each scheme on AlexNet$_{\mathbf{p}}$ and ResNet$_{\mathbf{p}}$-18, respectively using one NVIDIA Titan V. In both of the proposed DNN architectures, the inference time of the baseline, scheme $\mathcal{V}_2$, scheme $\mathcal{V}_3$ are almost the same as to the execution time because all of them didn't use passport to calculate $\gamma$ and $\beta$. However, scheme $\mathcal{V}_1$ is slightly slower (about 10\%) compared to the baseline because of the extra computational cost of $\gamma$ and $\beta$ calculation from the passport. Training time of scheme $\mathcal{V}_1$, scheme $\mathcal{V}_2$ and scheme $\mathcal{V}_3$ are slower than the baseline about 18\%(ResNet$_{\mathbf{p}}$-18)/27\%(AlexNet$_{\mathbf{p}}$), 116\%(ResNet$_{\mathbf{p}}$-18)/125\% and 127\%(ResNet$_{\mathbf{p}}$-18)/153\%, respectively. Scheme $\mathcal{V}_2$ and scheme $\mathcal{V}_3$ are slower about 2x than scheme $\mathcal{V}_1$ due to the multi task training scheme. Nonetheless, we tested a larger network (i.e. ResNet$_{\mathbf{p}}$-50) and its training time increases 10\%, 182\% and 191\% respectively for $\mathcal{V}_1$, $\mathcal{V}_2$ and $\mathcal{V}_3$ schemes. This increase is consistent with those smaller models i.e. AlexNet$_{\mathbf{p}}$ and ResNet$_{\mathbf{p}}$-18.  
	
Table \ref{tab:table22} summarizes the complexity of passport networks in various schemes. We believe that it is the computational cost at the inference stage that is required to be minimized, since network inference is going to be performed frequently by the end users. While extra costs at the training and verification stages, on the other hand, are not prohibitive since they are performed by the network owners, with the motivation to protect the DNN model ownerships.

%%%%Conclusion%%%%%%%%%%%%%%%%%%%%%%%%%%%%%%%%%%%%%%%%%%%%%%%%%%%%%%%%%%%%%%%%%%%%%%%%%%%%%%%%%%%%%%%%%%
\section{Discussions and conclusions}

Considering billions of dollars have been invested by giant and start-up companies to explore new DNN models virtually every second, we believe it is imperative to protect these inventions from being stolen. While ownership of DNN models might be resolved by registering the models with a centralized authority, it has been recognized that these regulations are inadequate and technical solutions are urgently needed to support the law enforcement and juridical protections. It is this motivation that highlights the unique contribution of the proposed method in unambiguous verification of DNN models ownerships. 

Methodology-wise, our empirical studies re-asserted that over-parameterized DNN models can successfully learn multiple tasks with arbitrarily assigned labels and/or constraints.  While this assertion has been theoretically proved \cite{ConvergOverPara_Zhu18} and empirically investigated from the perspective of network generalization \cite{RethinkGene_Zhang17}, its implications to network security in general remain to be explored.  We believe the proposed modulation of DNN performance based on the presented passports will play an indispensable role in bringing DNN behaviours under control against adversarial attacks, as it has been demonstrated for DNN ownership verifications.

\section*{Acknowledgement}
This research is partly supported by the Fundamental Research Grant Scheme (FRGS) MoHE Grant FP021-2018A, from the Ministry of Education Malaysia. Also, we gratefully acknowledge the support of NVIDIA Corporation with the donation of the Titan V GPU used for this research.

\bibliographystyle{IEEEtran}
\bibliography{nnpass_related}

\end{document}